\documentclass[12pt]{amsart}

\usepackage{tabularx} 
\usepackage{graphicx}

\usepackage{xcolor}
\usepackage[margin = 1in]{geometry} 
\usepackage[colorlinks=true, linkcolor=blue, citecolor=blue]{hyperref}

\usepackage{caption}
\usepackage{subfig}

\newtheorem{theorem}{Theorem}[section]
\newtheorem{lemma}{Lemma}[section]
\newtheorem{corollary}{Corollary}[section]
\newtheorem{proposition}{Proposition}[section]
\newtheorem{Remark}{Remark}[section]
\numberwithin{equation}{section}
\theoremstyle{definition}

\newtheorem{defn}{Definition}[section]

\begin{document}
\title[Bivariate Gamma Generalized Laplace Law]{Modeling Stock Returns and Volatility Using Bivariate Gamma Generalized Laplace Law} 

\author{Tomasz J. Kozubowski, Andrey Sarantsev, James A. Spiker}

\begin{abstract}  
We consider a generalization of the variance-gamma (generalized asymmetric Laplace) distribution, defined as a normal mean - variance mixture with a gamma mixing distribution. While this model is typically studied in the univariate setting, we assume that the gamma mixing variable is observed alongside the primary variable, resulting in a bivariate framework. In this setting, maximum likelihood estimation becomes significantly simpler than in the standard univariate case, reducing to a form of classical linear regression. We derive explicit expressions for the resulting estimators. For certain parameter configurations, the estimators exhibit nonstandard convergence rates, exceeding the usual square-root rate. Finally, we illustrate the applicability of this model in financial contexts by analyzing stock index returns and associated volatility for several major indices.
\end{abstract}

\subjclass[2020]{62H05, 60E05, 60E07, 62F10, 62F12, 62F15, 62J05, 62P05}

\keywords{Bayesian inference, Financial modeling, Generalized Laplace distribution, Maximum likelihood estimation, Normal mean-variance mixture, Variance-gamma distribution}

\date{\today. \textit{Address:} University of Nevada, Reno. Department of Mathematics and Statistics. Corresponding author: AS. Email: \texttt{asarantsev@unr.edu}}

\maketitle
\thispagestyle{empty}
\section{Introduction}
\label{BGGL.intro.sec}

The Laplace distribution with location parameter $\delta \in \mathbb{R}$ and variance $\sigma^2 > 0$, given by the probability density function (PDF)
\begin{equation}
\label{lap.pdf}
f(x) = \frac{1}{\sqrt{2}\sigma} \exp \left( -\frac{\sqrt{2}|x - \delta|}{\sigma} \right), \quad x \in \mathbb{R},
\end{equation}
along with its numerous generalizations, has played an increasingly important role in probability and applied modeling over the past several decades (see, e.g., \cite{Book}). This distribution is a special case of a {\it mean--variance Gaussian mixture} model: a Laplace random variable $Y$ with the PDF in~\eqref{lap.pdf} admits the stochastic representation
\begin{equation}
\label{eq:VG}
Y = \delta + \mu X + \sigma \sqrt{X} Z,
\end{equation}
where $\mu = 0$, $Z$ is standard normal, and $X$ is standard exponential, independent of $Z$. Allowing $\mu \neq 0$ yields the more flexible Asymmetric Laplace (AL) family (see, e.g., \cite{Book}), which in turn generalizes naturally to the Generalized Asymmetric Laplace (GAL) distribution (see, e.g., \cite{Koz13}) when $X$ follows a gamma distribution. Since a GAL random variable can be viewed as Gaussian with a stochastic, gamma distributed variance (and mean), this model is also known as the {\it variance gamma} distribution, introduced in \cite{VG} in a financial context (see also the survey \cite{Survey}). Historically, however, its roots reach much earlier: as discussed in \cite{Koz13}, it appeared already in 1929 as the distribution of the empirical correlation coefficient based on bivariate Gaussian samples \cite{Pearson}.

The GAL family enjoys several attractive properties: Finite moments of all orders and a moment generating function finite in a neighborhood of zero (see, e.g., \cite{Survey}). The parameter $\mu$ controls skewness ($\mu = 0$ implies symmetry), while the shape parameter $\alpha$ of the mixing gamma distribution governs kurtosis. Together with location and scale, these four parameters allow the model to match arbitrary mean, variance, skewness, and kurtosis, offering much greater flexibility than the classical normal family. 

This flexibility makes the GAL distribution especially useful in finance, where asset returns often exhibit heavier than Gaussian tails \cite[Section 8.4]{Book}, \cite{Seneta}. A common interpretation models stock prices as geometric Brownian motion evolving in business time: A nonlinear, random time scale represented by a nondecreasing L\'evy subordinator that accelerates with market activity. This leads to a subordinated Brownian motion, or {\it Laplace motion}, whose increments follow an (asymmetric) Laplace law (see \cite{Clark} and \cite[Section 8.4]{Book}). While univariate mean-variance Gaussian mixtures with conditional distribution
\begin{equation}
\label{eq:mixture}
Y \mid X \sim \mathcal{N}(\delta + \mu X, \sigma^2 X),
\end{equation}
with $X$ distributed as Gamma, have been extensively studied (see, e.g., \cite{BKS1982}) and have found broad application in hierarchical modeling, Bayesian inference, stochastic processes, clustering, and machine learning, among others, the bivariate extension (where $X$ and $Y$ arise jointly as dependent random components) appears to have received relatively little attention.

In this work, we address this gap by extending the GAL model to a bivariate setting. Specifically, we introduce a joint distribution for $(X, Y)$ where $Y$ follows~\eqref{eq:VG} with $X$ gamma distributed and independent of $Z$. We refer to this new family as the  {\it Bivariate Gamma-Generalized Laplace} (BGGL) distribution. Compared with the univariate GAL model, the maximum likelihood estimators (MLEs) of the BGGL parameters are remarkably tractable: in the bivariate formulation, the MLEs of $\mu, \delta$, and $\sigma$ correspond to those from the classical least squares regression (without intercept), while the MLEs for the gamma parameters of $X$ are classical and available in closed form. Below are our main contributions: 

\begin{itemize}

\item We derive explicit expressions for the estimators and establish their asymptotic properties as $n \to \infty$. In Theorem \ref{thm:asymp-normal} we prove asymptotic normality for $\alpha > 1$, while Theorems \ref{thm:<1} and \ref{thm:=1} reveal a striking phenomenon: for $\alpha \le 1$, the MLE converges at a faster than classical $\sqrt{n}$ rate. These are the principal results of the paper.

\item It is well known that the Fisher information may be infinite when the underlying density becomes unbounded or irregular (as, for example, in the uniform distribution on $[0,\theta]$). The BGGL family exhibits a related phenomenon - it is a curved exponential family: although parameterized by five parameters, its sufficient statistics span a six-dimensional space. We demonstrate this property.

\item We also explore a financial application, where $Y$ represents stock market returns (e.g., S\&P 500) and $X$ is derived from the Volatility Index (VIX). This setting provides a natural modification of \cite{VIX}, incorporating a stochastic volatility model for the time series $(V, X)$.
\end{itemize}

The paper is organized as follows. Section~\ref{bggl.prop.sec}  introduces the BGGL model and its fundamental properties, including the joint PDF, MGF, characteristic function, connections to L\'evy processes, exponential family representation, Shannon entropy, and Fisher information. Section~\ref{bggl.est.sec} presents the MLE framework. The results on the asymptotic distributions of the MLEs are presented in Section~\ref{sec:normality}. Simulations confirming the distinct convergence behavior for $\alpha \le 1$ versus $\alpha > 1$ are included  in Section~\ref{sec:simulation}. Section~\ref{bggl.financial.sec} illustrates a financial application. Detailed proofs and auxiliary results appear in the Appendix.

\section{Bivariate Gamma Generalized Laplace Distribution: Basic Properties}
\label{bggl.prop.sec}

In this section we formally define the BGGL distribution and account for its basic properties. The random variable $X$ in the definition below follows gamma distribution with shape parameter $\alpha>0$ and scale parameter $\beta>0$, denoted by $\mathrm{GAM}(\alpha, \beta)$ and given by the PDF 
\begin{equation}
\label{eq:Gamma}
f(x) = \frac{x^{\alpha-1}}{\Gamma(\alpha)}\beta^{\alpha}e^{-\beta x},\quad x > 0. 
\end{equation}
%

\begin{defn}
\label{bggl.def}
Let $Y$ admit the stochastic representation (\ref{eq:VG}), where $X\sim \mathrm{GAM}(\alpha,\beta)$ and $Z\sim N(0,1)$ (standard normal), independent of $X$. Then, the random vector $(X,Y)$ is said to have the BGGL distribution with vector-parameter $\boldsymbol \theta = (\alpha, \beta, \delta, \mu, \sigma)$. We denote this distribution by $\mathrm{BGGL}(\alpha, \beta, \delta, \mu, \sigma)$ and write $(X,Y) \sim \mathrm{BGGL}(\alpha, \beta, \delta, \mu, \sigma)$. 
\end{defn}
The notation BGGL stands for {\bf B}ivariate distribution with {\bf G}amma and {\bf G}eneralized {\bf L}aplace marginals. Indeed, by construction, the marginal distribution of $X$ in this bivariate model is gamma distribution. On the other hand, the distribution of $Y$ is GAL ({\bf G}eneralized {\bf A}symmetric {\bf L}aplace), with the PDF given by (see \cite[(4.1.10), (4.1.30)]{Book} or \cite[(1.1), (2.18)]{Survey}):
\begin{equation}
\label{eq:gal.pdf}
f_Y(y)= \frac{\sqrt{2} e^{\frac{\sqrt{2}}{2\tilde{\sigma}}(\frac{1}{\kappa} - \kappa)(y-\delta)}}{\sqrt{\pi}\tilde{\sigma}^{\alpha+\frac{1}{2}}\Gamma(\alpha)}  \left( \frac{\sqrt{2}|y-\delta |}{\kappa + \frac{1}{\kappa}}\right)^{\alpha-\frac{1}{2}}
K_{\alpha-\frac{1}{2}}\left(\frac{\sqrt{2}}{2\tilde{\sigma}}\left( \frac{1}{\kappa} +\kappa \right) |y-\delta | \right), \quad y\in \mathbb R,   
\end{equation}
where $K_{s}(\cdot)$ is the modified Bessel function of the second kind with index $s$. The definition of $K_s$ is taken from \cite[Appendix]{Survey}, or can be found in \cite[Chapter 10]{Handbook}; see the Appendix of this article. In~\eqref{eq:gal.pdf}, we let
\begin{equation}
\label{eq:reparam}
\tilde{\sigma} = \frac{\sigma}{\sqrt{\beta}}, \,\,\, \tilde{\mu} = \frac{\mu}{\beta}, \,\,\, \kappa=\frac{\sqrt{2}\tilde{\sigma}}{\tilde{\mu} + \sqrt{2\tilde{\sigma}^2+\tilde{\mu}^2}} > 0.
\end{equation}
%

Note that while the BGGL distribution is driven by five parameters, the one-dimensional variance-gamma distribution of $Y$ in \eqref{eq:VG} has only four parameters: the parameters $\sigma$ and $\beta$ will not be identifiable. The main properties of GAL distributions are well-known - see, for example, the survey \cite{Survey} and references therein. This random variable has finite moments of all orders, and its moment generating function (MGF) is finite in a neighborhood of zero. The parameter $\mu$ is responsible for skewness: if $\mu = 0$ then the distribution is symmetric. The parameter $\alpha$ is called the {\it shape} and is responsible for the kurtosis. The four parameters allow to pick any mean, variance, skewness and kurtosis, thus giving more flexibility than the classic normal family of distributions. However, as seen above, the density~\eqref{eq:gal.pdf} is complicated as it contains the Bessel special function. The exception is the case $\alpha = 1$, where we obtain AL distribution with an explicit form of the PDF (see, e.g., \cite[(3.0.8)]{Book}. We can then explicitly compute the maximum likelihood estimation (MLE), see the monograph \cite[Section 3.5]{Book}.

\subsection{The joint density} 
By~\eqref{eq:mixture}, the conditional density of $Y\mid X=x$ is Gaussian:
$$
f(y\mid x) = \frac1{\sqrt{2\pi x}\sigma}\exp\left[-\frac{(y - \delta - \mu x)^2}{2\sigma^2x}\right].
$$
Straightforward algebra shows that the 
joint density of $(X,Y)$ takes on an explicit form as follows, for $ x \ge 0$ and $y \in \mathbb R$:
\begin{align}
\label{eq:bggl.pdf}
f(x,y) = f(y \mid x)f(x) = \frac{1}{\sqrt{2\pi}}\frac{1}{\sigma}\frac{\beta^\alpha}{\Gamma(\alpha)} x^{\alpha-3/2}\exp\left[-\left(\beta x+\frac{(y-\delta-\mu x)^2}{2\sigma^2 x} \right)\right].
\end{align}
This can also be written in the form:
\begin{equation}
\label{eq:bgal.pdf.alt}
f(x,y) = \frac{1}{\sqrt{2\pi}}\frac{1}{\sigma}\frac{\beta^\alpha}{\Gamma(\alpha)} x^{\alpha-3/2}\exp\left[-\frac{1}{2}\left\{\frac{\mu^2+2\beta \sigma^2}{\sigma^2} x + \frac{(y-\delta)^2}{\sigma^2}\frac{1}{x} \right\} +\frac{\mu(y-\delta)}{\sigma^2}\right],
\end{equation}
which aids in finding the conditional distribution of $X$ given $Y=y$. Indeed, a close examination of the PDF in (\ref{eq:bgal.pdf.alt}) shows that this conditional distribution is a {\it generalized inverse Gaussian} (GIG) distribution with parameters 
\begin{equation}
\label{gig.par}
a = 2\beta + \frac{\mu^2}{\sigma^2} \ge 0,\quad b = \frac{(y-\delta)^2}{\sigma^2} \ge 0,\quad p = \alpha - 1/2 \in \mathbb R, 
\end{equation}
denoted by $\mathrm{GIG}(a,b,p)$ and given by the PDF 
\begin{equation}
\label{gig.pdf}
f(x) = \frac{(a/b)^{p/2}}{2K_p(\sqrt{ab})} x^{p-1} e^{-\frac{1}{2}\left(ax +\frac{b}{x}\right)}, \quad x \ge 0.
\end{equation}
This fact is crucial in setting-up an estimation algorithm for the parameters of GAL distribution via EM optimization scheme. 

\subsection{The integral transforms} 
\noindent The bivariate moment generating function (MGF) of the BGGL distribution is given by 
\begin{align}
M(s,t) = \mathbb E\left(e^{sX+tY} \right) = e^{\delta t} M_X(s+ \mu t+ 0.5\sigma^2t^2),
\end{align}
where $M_X(u) = \mathbb E[e^{uX}]$ is the MGF of $X$ 
(see Proposition 1 in \cite{Tony}). It is well known, see for example \cite[Appendix B, page 526, Problem 4]{BickelDoksum}, that $M_X(u) = (\beta/(\beta - u))^{\alpha}$, $u<\beta$. This leads to the following result. 

\begin{proposition}
\label{bggl.mgf.prop}
The joint moment generating function of $(X,Y)$ is given by:
\begin{align}
\label{eq:MGF}
\begin{split}
M(s,t) &= e^{\delta t} \beta^{\alpha}\left[\beta - (s+\mu t + 0.5 \sigma^2t^2)\right]^{-\alpha}, \\
 (s,t) & \in D := \{ (s,t)\in \mathbb R^2: s+\mu t + 0.5\sigma^2t^2 < \beta\}. 
\end{split}
\end{align}
\end{proposition}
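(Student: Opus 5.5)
The plan is to condition on $X$ and use the stochastic representation (\ref{eq:VG}), so that the problem reduces to the moment generating function of a Gaussian and then to the gamma MGF $M_X(u) = (\beta/(\beta-u))^{\alpha}$, $u<\beta$, quoted above from \cite{BickelDoksum}.

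Fix $(s,t)\in\mathbb R^2$. Since $Z$ is independent of $X$, conditionally on $X=x$ the variable $Y$ is $N(\delta+\mu x,\sigma^2 x)$, as recorded in (\ref{eq:mixture}). The Gaussian MGF then gives
\begin{equation*}
\mathbb E\left[e^{tY}\mid X=x\right]=\exp\left(t\delta+t\mu x+\tfrac12\sigma^2 t^2 x\right).
\end{equation*}
Because all quantities are nonnegative, the tower property (Tonelli) applies whether or not the expectation is finite:
\begin{equation*}
M(s,t)=\mathbb E\left[e^{sX}\,\mathbb E\left[e^{tY}\mid X\right]\right]=e^{\delta t}\,\mathbb E\left[e^{(s+\mu t+\frac12\sigma^2t^2)X}\right]=e^{\delta t}M_X\left(s+\mu t+\tfrac12\sigma^2t^2\right).
\end{equation*}
This is exactly Proposition 1 of \cite{Tony}, which we could simply invoke instead. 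The only point needing attention is the domain. Put $u=s+\mu t+\frac12\sigma^2t^2$. Then $\mathbb E[e^{uX}]$ is finite if and only if $u<\beta$: for $u\ge\beta$ the integrand $x^{\alpha-1}e^{(u-\beta)x}$ is not integrable at infinity. Hence $M(s,t)<\infty$ precisely on $D$, and there it equals
\begin{equation*}
e^{\delta t}\beta^{\alpha}\left(\beta-u\right)^{-\alpha},
\end{equation*}
which is (\ref{eq:MGF}).

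I expect no real obstacle. The only care needed is to justify the conditioning step for nonnegative integrands, so that finiteness of $M$ is characterized exactly by the set $D$ rather than merely the formula holding on it. Optionally, one can cross-check the result by integrating the joint density (\ref{eq:bggl.pdf}) first in $y$, which is a Gaussian integral, and then in $x$, which is a gamma integral.
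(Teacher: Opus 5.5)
Your proof is correct and follows the paper's route: the paper also reduces to $M(s,t)=e^{\delta t}M_X(s+\mu t+0.5\sigma^2t^2)$ by citing Proposition 1 of \cite{Tony}, which is exactly your conditioning step, and then substitutes the gamma MGF $(\beta/(\beta-u))^{\alpha}$, $u<\beta$. Your Tonelli argument, showing that $M$ is finite exactly on $D$, is a small refinement that the paper leaves implicit.
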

\noindent The domain $D$ of the MGF in (\ref{eq:MGF}) contains a neighborhood of zero. From here we can see that $\mathbb E[X^k|Y|^l] < \infty$ for all $k, l \ge 0$. We could also compute these moments as partial derivatives of the MGF~\eqref{eq:MGF} at $s = t = 0$:
\begin{equation}
\label{bggl.mom.mgf}
\mathbb E[X^kY^l] = \left.\frac{\partial^{k+l}M}{\partial s^k\partial t^l}\right|_{s=t=0} .
\end{equation}
But it seems to us easier to compute them from scratch using the representation~\eqref{eq:VG}, which we do in the following subsection. The characteristic function (Fourier transform of joint density~\eqref{eq:bgal.pdf.alt} in $\mathbb R^2$) is immediately found from~\eqref{eq:MGF}, with $\mathrm{i} = \sqrt{-1}$:
$$
\varphi(s, t) := \mathbb E\left[e^{\mathrm{i}(sX + tY)}\right] = M(\mathrm{i}s, \mathrm{i}t) = e^{\mathrm{i} \delta t} \beta^{\alpha}\left[\beta - \mathrm{i}s - \mathrm{i}\mu t + 0.5\sigma^2t^2\right]^{-\alpha}.
$$
The BGGL distribution is infinitely divisible, since for any $n \in \mathbb N = \{1, 2, \ldots\}$ the function $\varphi^{1/n}(s, t)$ is also a characteristic function (of a BGGL distribution with parameters $\alpha$ and $\delta$ changed to $\alpha/n$ and $\delta/n$, respectively). In other words, $(X, Y) \sim \mathrm{BGGL}(\alpha, \beta, \delta, \mu, \sigma)$ can be represented as a sum of independent and identically distributed (IID) BGGL random variables as follows:
$$
(X, Y) = (X_1, Y_1) + \cdots + (X_n, Y_n),\quad \mbox{IID}\quad (X_i, Y_i) \sim \mathrm{BGGL}(\alpha/n, \beta, \delta/n, \mu, \sigma).
$$
\subsection{Two-dimensional L\'evy process} 
\noindent Since the BGGL distribution is infinitely divisible, 
one can define a two-dimensional L\'evy process with this bivariate distribution. Let $G = (G(t), t \ge 0)$ be a subordinator (a nondecreasing L\'evy process) built on gamma distribution 
$\textsc{GAM}(\alpha, \beta)$ with shape $\alpha$ and rate $\beta$, so that 
$$
G(t) - G(s) \sim \Gamma(\alpha(t-s), \beta),\quad t > s \ge 0.
$$
Further, let $W = (W(t),\, t \ge 0)$ be a Brownian motion with drift $\mu$ and diffusion $\sigma^2$, independent of $G$, so that 
$$
W(t)  - W(s) \sim \mathcal N(\mu (t-s), \sigma^2(t-s)),\quad t > s \ge 0.
$$
Then the following two-dimensional L\'evy process has BGGL increments:
\begin{equation}
\label{eq:2d-levy}
\{{\bf Y}(t),\, t\ge 0 \} = \{\bigl(G(t), W(G(t)) \bigr),\, t \ge 0\}.
\end{equation}
Indeed, $\bigl(G(t), W(G(t))\bigr) - \bigl(G(s), W(G(s))\bigr)  \sim \mathrm{BGGL}(\alpha(t-s), \beta, 0, \mu, \sigma^2)$. 
This generalizes the construction of the (symmetric or asymmetric) Laplace motion $W(G(t))$ (the second component of~\eqref{eq:2d-levy}) see \cite[Section 4.2]{Book},  especially \cite[Theorem 4.2.1, (4.2.2)]{Book}. For the background, see \cite[Section 4.2]{Survey}. 
\subsection{Moments} 
\noindent The joint moments $\mathbb E[X^kY^l]$, where  $(X,Y)$ follows a BGGL distribution, can be found either using the MGF via (\ref{bggl.mom.mgf}) or via the  stochastic representation (\ref{eq:VG}). Using the latter, we obtain the result below. 

\begin{lemma}
The mean vector and the covariance matrix of $\left(X,Y\right)\sim  \mathrm{BGGL}(\alpha, \beta, \delta, \mu, \sigma)$ are as follows: 
    \begin{equation}
                \begin{bmatrix}
    \frac{\alpha}{\beta} &
    \delta+\mu\frac{\alpha}{\beta}
\end{bmatrix}
\quad \mbox{and}\quad 
   \boldsymbol \Sigma = \begin{bmatrix}
\frac{\alpha}{\beta^2} & \mu\frac{\alpha}{\beta^2} \\
\mu\frac{\alpha}{\beta^2}& \frac{\mu^2\alpha}{\beta^2} + \frac{\sigma^2\alpha}{\beta}
\end{bmatrix}.
    \end{equation}
Moreover, the correlation of $X$ and $Y$ is given by: $\rho = \mu\left[\mu^2+\sigma^2\beta\right]^{-1/2}$.
\end{lemma}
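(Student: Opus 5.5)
The plan is to work directly from the stochastic representation \eqref{eq:VG}, $Y = \delta + \mu X + \sigma\sqrt{X}Z$, with $X\sim \mathrm{GAM}(\alpha,\beta)$ independent of $Z\sim N(0,1)$. The only external inputs are the gamma moments $\mathbb E X = \alpha/\beta$ and $\mathrm{Var}\,X = \alpha/\beta^2$, which follow from \eqref{eq:Gamma} (or from $M_X(u) = (\beta/(\beta-u))^\alpha$ by differentiating at $u=0$), together with $\mathbb E Z = 0$ and $\mathbb E Z^2 = 1$. All moments involved are finite by the remark following Proposition \ref{bggl.mgf.prop}.

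For the mean vector, take expectations in \eqref{eq:VG}. Independence gives $\mathbb E[\sqrt{X}Z] = \mathbb E\sqrt{X}\,\mathbb E Z = 0$, so $\mathbb E Y = \delta + \mu\alpha/\beta$.

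For the covariance, note that $\mathrm{Cov}(X,\sqrt{X}Z) = \mathbb E[X^{3/2}]\,\mathbb E Z - \mathbb E X\cdot 0 = 0$. Hence $\mathrm{Cov}(X,Y) = \mu\,\mathrm{Var}X = \mu\alpha/\beta^2$. For the variance of $Y$, use
$$
\mathrm{Var}\,Y = \mu^2\mathrm{Var}X + \sigma^2\mathrm{Var}(\sqrt{X}Z) + 2\mu\sigma\,\mathrm{Cov}(X,\sqrt{X}Z).
$$
The cross term vanishes, and $\mathrm{Var}(\sqrt{X}Z) = \mathbb E[XZ^2] = \mathbb E X = \alpha/\beta$. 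This gives $\mathrm{Var}\,Y = \mu^2\alpha/\beta^2 + \sigma^2\alpha/\beta$, which is exactly $\boldsymbol\Sigma$.

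Finally, the correlation is
$$
\rho = \frac{\mu\alpha/\beta^2}{\sqrt{\alpha/\beta^2}\sqrt{\mu^2\alpha/\beta^2+\sigma^2\alpha/\beta}}.
$$
Factoring $\alpha/\beta^2$ out of the second root leaves $\mu^2+\sigma^2\beta$, so $\rho = \mu(\mu^2+\sigma^2\beta)^{-1/2}$. There is no real obstacle; the only point needing care is justifying that the cross terms vanish, which relies on the independence of $X$ and $Z$ and on the finiteness of $\mathbb E X^{3/2}$. As a cross-check, the same values follow from differentiating \eqref{eq:MGF} at the origin via \eqref{bggl.mom.mgf}.
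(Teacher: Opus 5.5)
Your proof is correct and follows the paper's route. Both arguments work from the representation $Y=\delta+\mu X+\sigma\sqrt{X}Z$ and use the independence of $X$ and $Z$ to eliminate the cross terms, for example $\mathbb E[X^{3/2}Z]=\mathbb E[X^{3/2}]\,\mathbb E Z=0$. The only cosmetic difference is that you apply bilinearity of covariance directly, whereas the paper first computes the raw moments $\mathbb E[Y^2]$ and $\mathbb E[XY]$ and then subtracts products of means.
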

\begin{proof} The mean and variance of marginals for Gamma and Variance-Gamma in~\eqref{eq:VG} are well known. For the Gamma component $X$, from \cite[page 526]{BickelDoksum} we get: 
\begin{equation}
\label{eq:r-moments}
\mathbb E[X^{r}] = 
\begin{cases}
\Gamma(\alpha + r)/(\beta^r\Gamma(\alpha)),\quad r > -\alpha;\\
+\infty,\quad r \le -\alpha.
\end{cases}
\end{equation}
Plugging in $r = 1$ and $r = 2$, or looking at \cite[Appendix B, subsection 2.2]{BickelDoksum}:
\begin{equation}
\label{eq:moments-X}
\mathbb E[X] = \frac{\alpha}{\beta},\quad \mathrm{Var}(X) = \frac{\alpha}{\beta^2},\quad \mathbb E[X^2] = \frac{\alpha + \alpha^2}{\beta^2}.
\end{equation}
For the variance-gamma component $Y$ in~\eqref{eq:VG}, one can see the mean and variance from \cite[subsection 2.7]{VG}. For the first moment, since $\mathbb E[Z] = 0$ and $\mathbb E[Z^2] = 1$, by independence of $X$ and $Z$, and using~\eqref{eq:moments-X}, we obtain 
\begin{equation}
\label{eq:mean-y}
\mathbb E(Y) =  \mathbb E\left( \delta + \mu X + \sigma \sqrt{X} Z\right) = \delta + \mu\cdot \mathbb E X + \sigma \cdot\mathbb E\sqrt{X}\cdot\mathbb E Z = \delta + \mu \frac{\alpha}{\beta}.
\end{equation}
Using~\eqref{eq:moments-X} and~\eqref{eq:mean-y}, we compute the second moment and the variance of $Y$:
\begin{align*}
\begin{split}
\mathbb E \left[Y^2\right]&= \delta^2 + 2\delta \mu \mathbb E \left [ X \right] + \mu^2\mathbb E \left[ X^2 \right] + \sigma^2\mathbb{E}\left[X\right] = \delta^2 + 2 \delta \mu \frac{\alpha}{\beta} + \frac{\alpha^2+\alpha}{\beta^2}\mu^2 +\sigma^2\frac{\alpha}{\beta},\\
    \mathrm{Var}\left[ Y \right] &= \mathbb E \left[ Y^2 \right] - (\mathbb E \left[ Y\right])^2 = \frac{\mu^2\alpha}{\beta^2} + \frac{\sigma^2\alpha}{\beta}.
\end{split}
\end{align*}
Finally, let us compute the covariance. Using~\eqref{eq:VG} and~\eqref{eq:moments-X}, we get:  
\begin{align}
\label{eq:cov.mvcn}
\begin{split}
\mathbb E\left(XY\right) &=  \mathbb E\left[X( \delta + \mu X + \sigma \sqrt{X} Z)\right] = \mathbb E\left[ \delta X +\mu X^2 + \sigma X^{\frac{3}{2}}Z\right]\\
&= \delta \mathbb E\left[X \right] + \mu \mathbb E \left[ X^2 \right ] + \sigma\mathbb E[X^{3/2}]\cdot \mathbb E[Z] = \delta \frac{\alpha}{\beta} + \mu \cdot \frac{\alpha^2 + \alpha}{\beta^2}.
\end{split}
\end{align}
From \eqref{eq:cov.mvcn}, we can calculate the covariance of $X$ and $Y$. Finally, 
$$
\rho = \frac{\mathrm{Cov}(X, Y)}{\sqrt{\mathrm{Var}(X)}\sqrt{\mathrm{Var}(Y)}} = \frac{\mu}{\sqrt{\mu^2+\sigma^2\beta}}.
$$
This completes the proof.
\end{proof}

\begin{Remark}
Note that the correlation, which has the same sign as $\mu$,  vanishes when $\mu=0$ (in which case the distribution of $Y$ is symmetric about zero). However, even though $X$ and $Y$ are uncorrelated in this case, they are not independent. 
\end{Remark}

\subsection{Shannon entropy} 
\noindent In this section we derive {\it Shannon entropy} of 
$$
\left(X,Y\right) \sim  \mathrm{BGGL}(\alpha, \beta, \delta, \mu, \sigma)
$$
(see \cite{Sha48}). For continuous random vectors ${\bf X}$ with the PDF $f(\cdot)$, this measure of uncertainty is defined as the expectation 
\begin{equation}
\label{sha.ent}
H({\bf X}) = - \mathbb E \log f({\bf X}). 
\end{equation}
The result for the BGGL distribution follows from the following proposition, proven in the Appendix, which provides Shannon entropy for any bivariate distribution of $(X,Y)$ where $X$ and $Y$ are independent and $Y$ is conditionally Gaussian as in (\ref{eq:mixture}). 
\begin{proposition}
\label{mix.ent.prop}
If $Y$ is conditionally Gaussian as in (\ref{eq:mixture}), where $X$ is a non-negative random variable independent of $Y$, then 
\begin{equation}
\label{sha.ent.gen}
H((X,Y)) = H(X) +H(Z) +\frac{1}{2}\left( \log \sigma^2 +\mathbb E [\log X] \right),
\end{equation}
where $H(X) = - \mathbb E \log f(X)$ is the Shannon entropy of $X$ and $H(Z) = (1/2)\log(2\pi e)$ is the Shannon entropy of $Z\sim N(0,1)$. 
\end{proposition}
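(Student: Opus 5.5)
The plan is to factor the joint density as $f(x,y)=f(y\mid x)f_X(x)$ and take expectations of the logarithm. The statement says "$X$ independent of $Y$", but this should be read as $X$ independent of $Z$ in the representation \eqref{eq:VG}. This gives the chain rule $H((X,Y)) = H(X) + \mathbb E\left[-\log f(Y\mid X)\right]$, where
$$
-\log f(y\mid x) = \frac12\log(2\pi) + \log\sigma + \frac12\log x + \frac{(y-\delta-\mu x)^2}{2\sigma^2 x}.
$$

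Next I would compute the expectation of each term. The key observation is that
$$
\frac{(Y-\delta-\mu X)^2}{\sigma^2 X} = Z^2,
$$
by the stochastic representation \eqref{eq:VG}. Hence the last term has expectation $\tfrac12\mathbb E[Z^2]=\tfrac12$. Alternatively, one can condition on $X=x$ and use that the conditional law is $\mathcal N(\delta+\mu x,\sigma^2 x)$. Collecting terms gives
$$
\tfrac12\log(2\pi)+\tfrac12 = \tfrac12\log(2\pi e) = H(Z),
$$
together with $\tfrac12\log\sigma^2 + \tfrac12\mathbb E[\log X]$. This is exactly \eqref{sha.ent.gen}.

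The only delicate step is justifying that the expectations split. Each term must be integrable, or at least the sum must be well defined, so I would state this as an assumption: $H(X)$ finite and $\mathbb E|\log X|<\infty$. In the gamma case both hold, since $\mathbb E|\log X|<\infty$ follows from \eqref{eq:r-moments} with small $|r|<\alpha$. For Fubini/Tonelli, the term involving $Z^2$ is nonnegative, so conditioning on $X$ is legitimate.

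I would then remark that specializing to $\mathrm{GAM}(\alpha,\beta)$ gives
$$
\mathbb E\log X = \psi(\alpha)-\log\beta, \qquad H(X) = \alpha - \log\beta + \log\Gamma(\alpha) + (1-\alpha)\psi(\alpha),
$$
which yields the BGGL entropy explicitly.
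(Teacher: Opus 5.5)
Your proposal is correct and follows essentially the same route as the paper: factor $f(x,y)=f(y\mid x)f_X(x)$, split off $H(X)$, and evaluate the averaged conditional entropy. The only differences are minor. You compute $\mathbb E[-\log f(Y\mid X)]$ directly via $(Y-\delta-\mu X)^2/(\sigma^2X)=Z^2$, whereas the paper simply quotes $H(Y\mid X=x)=\tfrac12\log(2\pi e\sigma^2 x)$. You also make explicit two points the paper leaves implicit: the hypothesis should read that $X$ is independent of $Z$, and the integrability conditions ($H(X)$ finite, $\mathbb E|\log X|<\infty$) are needed for the split.
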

For $\left(X,Y\right)\sim  \mathrm{BGGL}(\alpha, \beta, \delta, \mu, \sigma)$ we have $X\sim GAM(\alpha, \beta)$, in which case we have $\mathbb E [\log X] = \psi(\alpha) - \log \beta$, where $\psi(\cdot) = (\ln \Gamma(\cdot))'$ is the digamma function. Moreover, it is well-known that in this case 
\begin{equation}
\label{sha.ent.gam}
H(X) = \alpha-\log \beta +\log \Gamma(\alpha) +(1-\alpha) \psi(\alpha). 
\end{equation}
In view of these facts, we obtain the result below. 
\begin{corollary}
\label{sha.ent.bggl.coro}
The Shannon entropy of $\left(X,Y\right)\sim  \mathrm{BGGL}(\alpha, \beta, \delta, \mu, \sigma)$ is given by 
\begin{equation}
\label{sha.ent.bggl}
H((X,Y)) = \alpha-\frac{3}{2}\log \beta + \log \Gamma(\alpha) +\left(\frac{3}{2}-\alpha\right) \psi(\alpha) + \frac{1}{2}\log(2\pi e \sigma^2).
\end{equation}
\end{corollary}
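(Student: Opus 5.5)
The plan is to substitute the gamma-specific quantities into the general formula \eqref{sha.ent.gen} of Proposition~\ref{mix.ent.prop} and simplify. By Definition~\ref{bggl.def}, $Y$ given $X$ is $\mathcal N(\delta+\mu X,\sigma^2 X)$, which is exactly \eqref{eq:mixture}, with $X\sim\mathrm{GAM}(\alpha,\beta)$ nonnegative. (The "independent of $Y$" phrasing in the proposition should be read as $Z$ independent of $X$; only the conditional Gaussian structure is used.) Hence
\[
H((X,Y)) = H(X) + \tfrac12\log(2\pi e) + \tfrac12\log\sigma^2 + \tfrac12\mathbb E[\log X].
\]
For the reader's confidence I would also sketch why \eqref{sha.ent.gen} holds. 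Write $\log f(X,Y)=\log f(X)+\log f(Y\mid X)$. The conditional Gaussian term has conditional expectation $-\tfrac12\log(2\pi e\sigma^2 X)$, and taking the outer expectation gives the extra $\tfrac12\mathbb E\log X$.

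The next step is to insert the two gamma facts. First, $\mathbb E[\log X]=\psi(\alpha)-\log\beta$; this follows by differentiating \eqref{eq:r-moments} in $r$ at $r=0$, or from the exponential-family identity $\mathbb E\log X = \partial_\alpha[\log\Gamma(\alpha)-\alpha\log\beta]$. Second, $H(X)$ is given by \eqref{sha.ent.gam}. If needed, it is verified directly from \eqref{eq:Gamma}:
\[
-\log f(X) = -\alpha\log\beta+\log\Gamma(\alpha)-(\alpha-1)\log X+\beta X,
\]
with $\mathbb E[\beta X]=\alpha$.

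Summing gives
\[
\alpha-\log\beta+\log\Gamma(\alpha)+(1-\alpha)\psi(\alpha)+\tfrac12\psi(\alpha)-\tfrac12\log\beta+\tfrac12\log(2\pi e\sigma^2).
\]
This collects to \eqref{sha.ent.bggl}: the coefficient of $\psi(\alpha)$ is $\tfrac32-\alpha$ and the coefficient of $\log\beta$ is $-\tfrac32$.

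There is no genuine obstacle. The only care needed is to justify the finiteness of $\mathbb E|\log X|$ for all $\alpha>0$, which holds because the gamma density integrates $|\log x|$ near $0$ and $\infty$. Beyond that, the work is the bookkeeping of the $\log\beta$ and $\psi$ coefficients.
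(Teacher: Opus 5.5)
Your proposal is correct and follows the paper's route: substitute $\mathbb E[\log X]=\psi(\alpha)-\log\beta$ and the gamma entropy \eqref{sha.ent.gam} into Proposition~\ref{mix.ent.prop}, then collect the $\log\beta$ and $\psi(\alpha)$ coefficients. Your reading of the proposition's hypothesis as ``$Z$ independent of $X$'' is the right one, and your sketch of \eqref{sha.ent.gen} via $H(Y\mid X=x)=\tfrac12\log(2\pi e\sigma^2 x)$ is the same as the paper's appendix argument.
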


\begin{figure}[t]
    \centering
    \subfloat[Density]{\includegraphics[width=8cm]{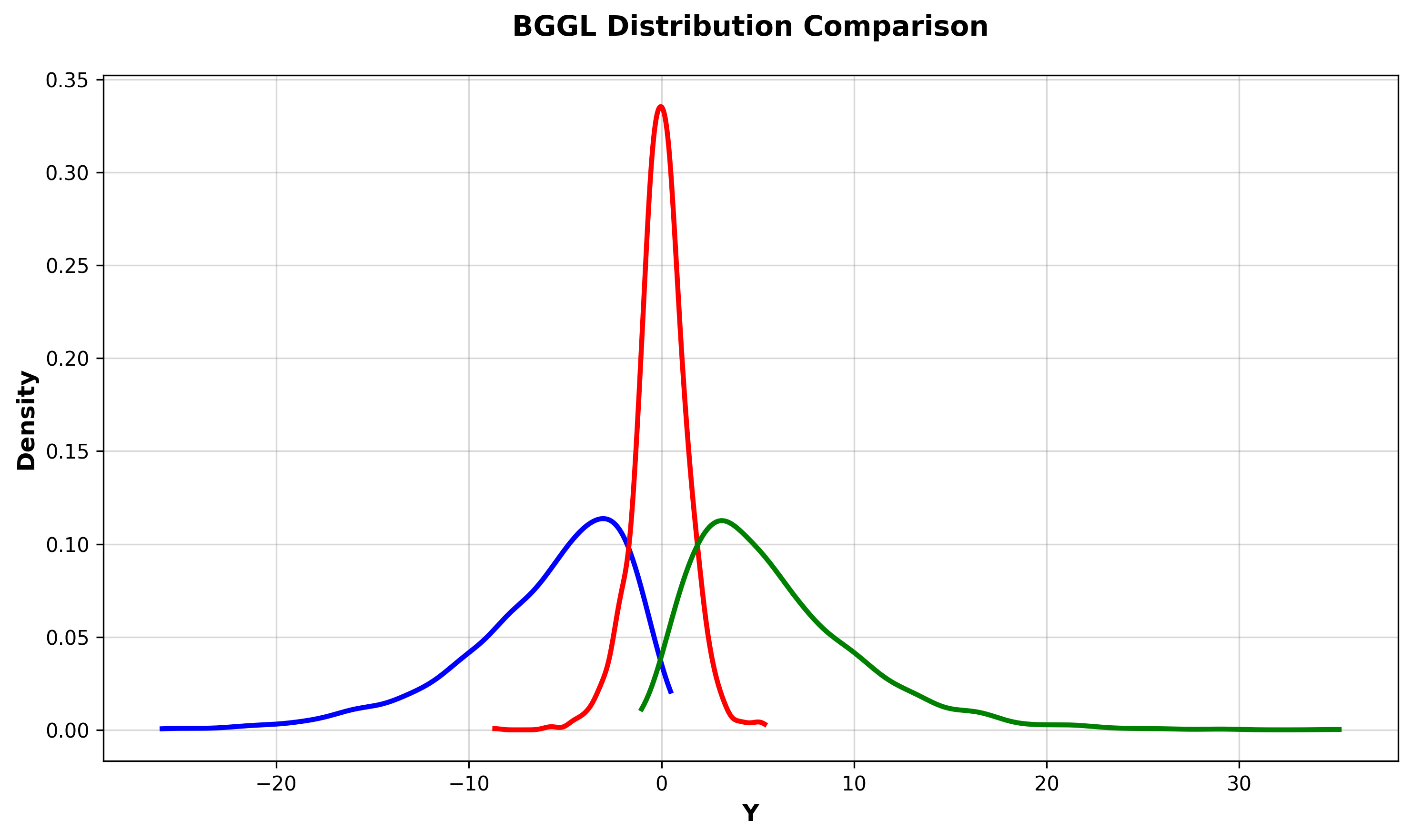}}
      \subfloat[$\mu = 0$]{\includegraphics[width=8cm]{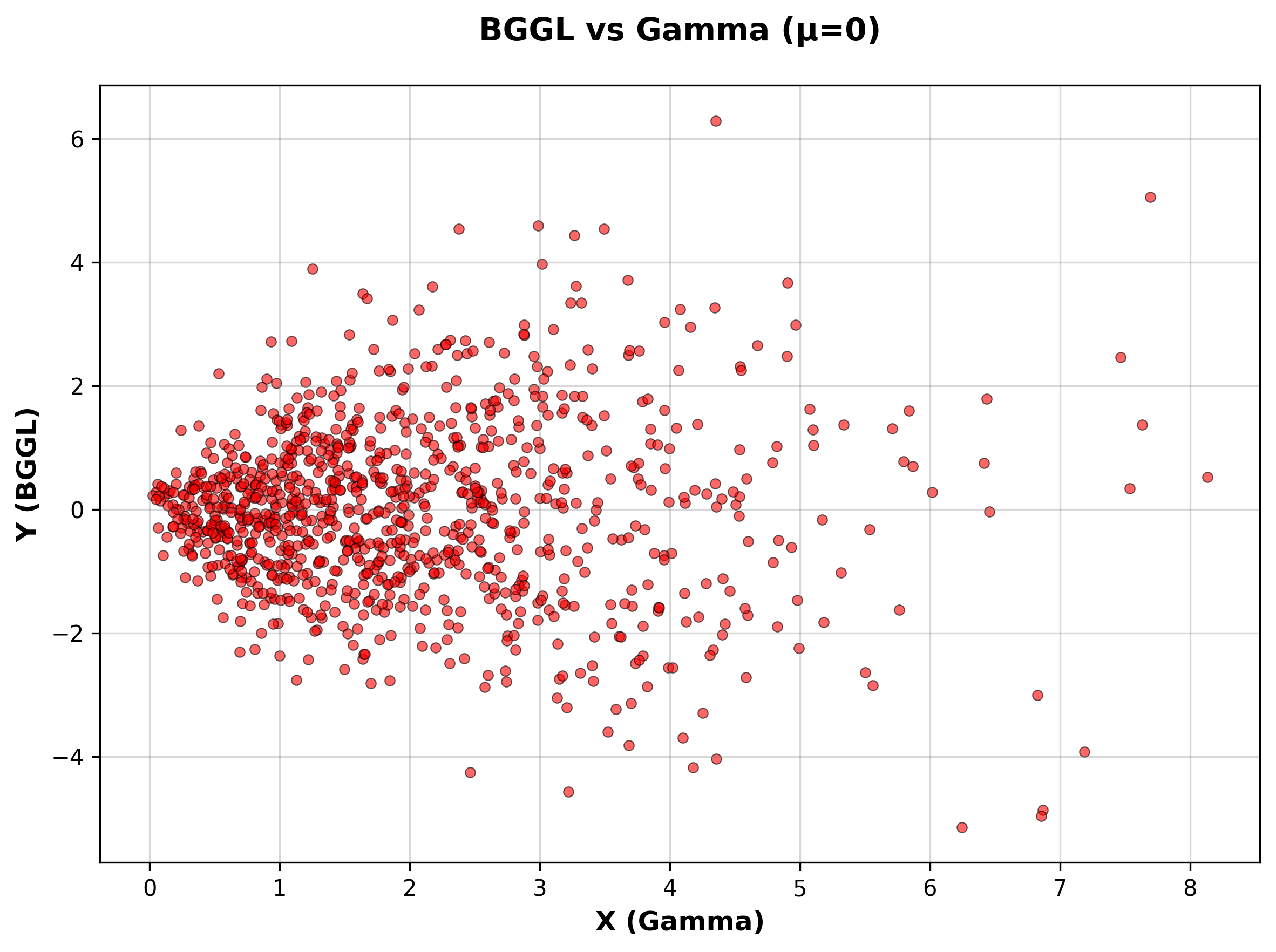}}
    \\
\subfloat[$\mu > 0$]{\includegraphics[width=8cm]{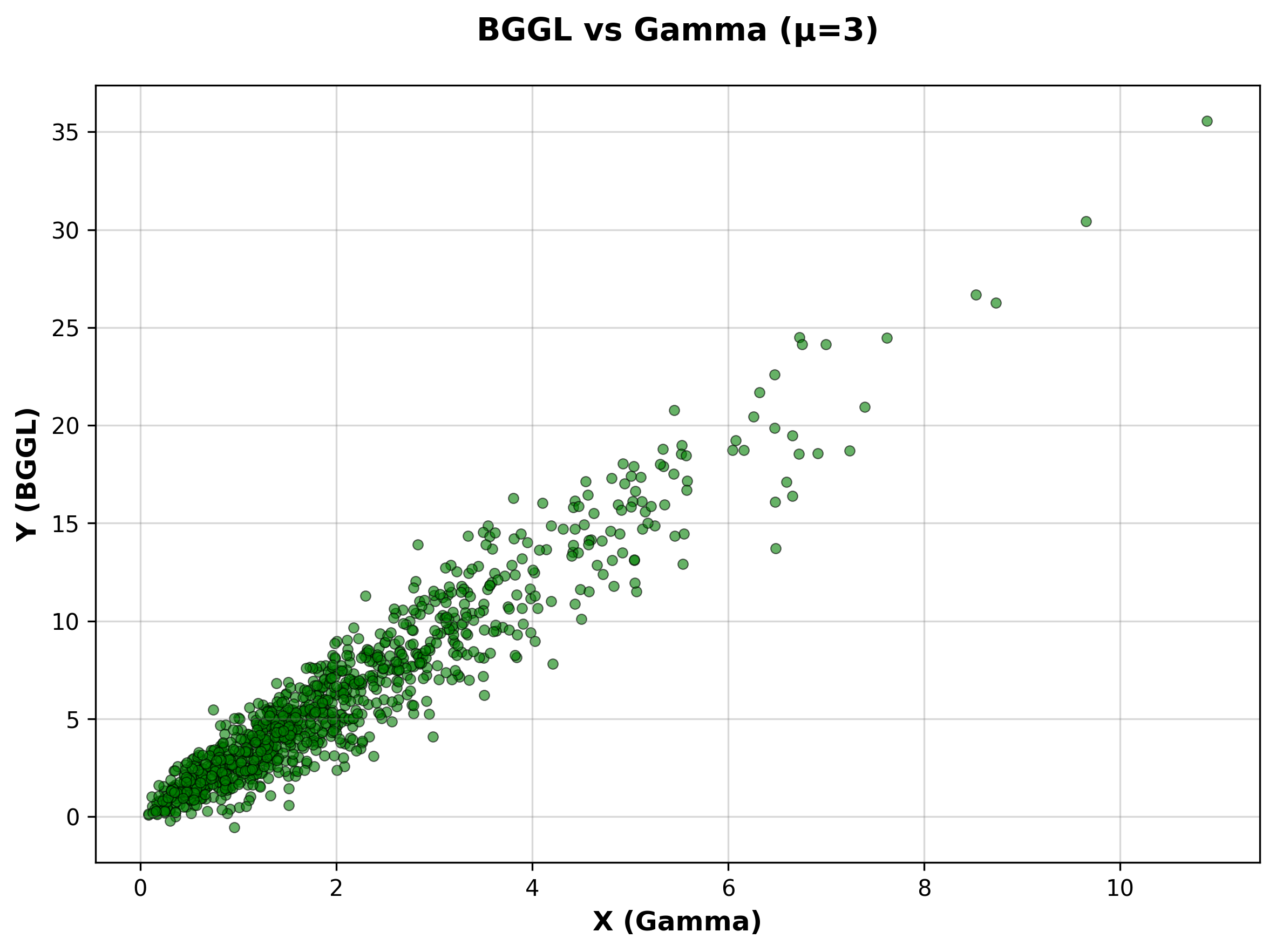}}
\subfloat[$\mu < 0$]{\includegraphics[width=8cm]{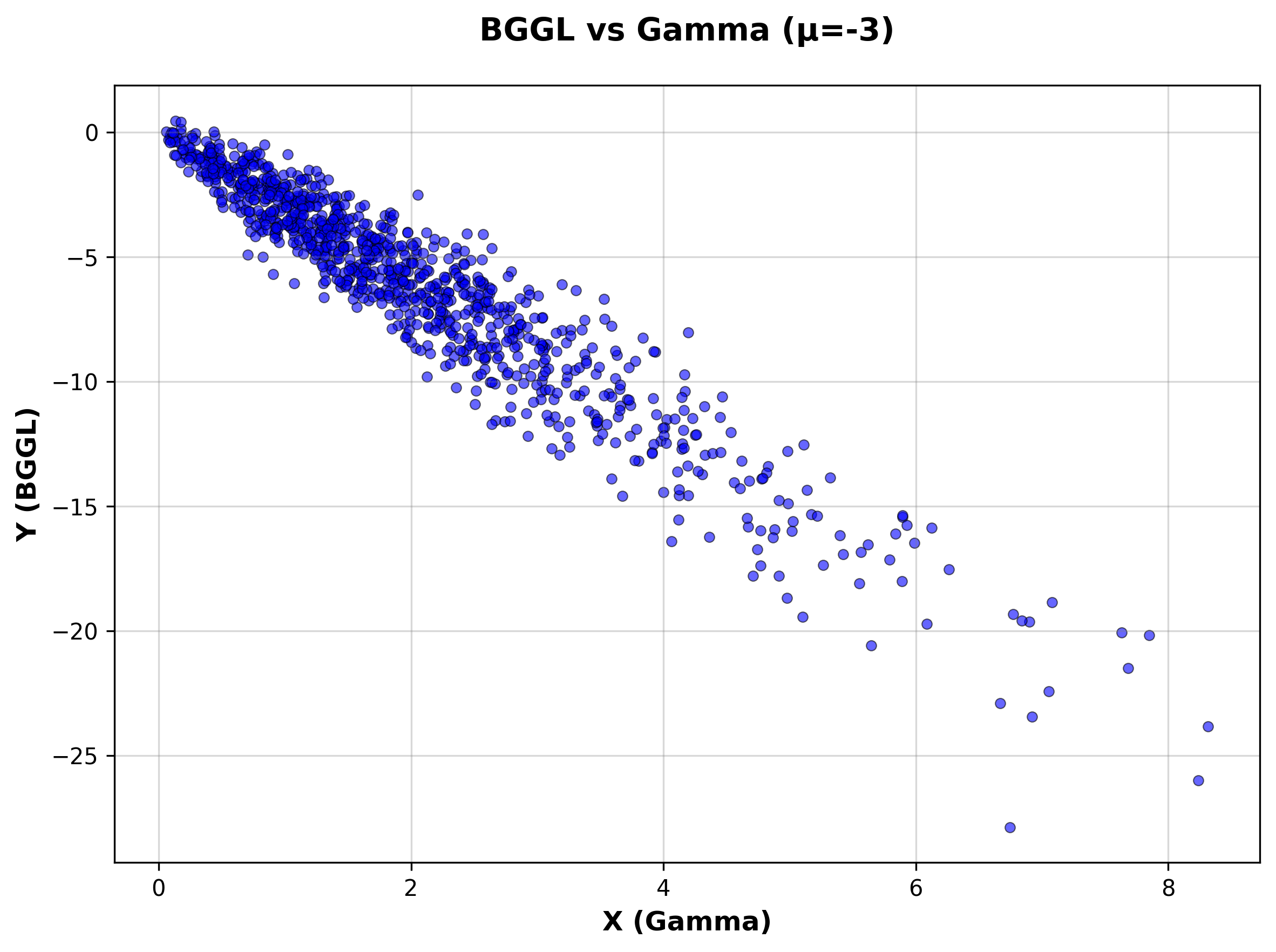}}
     \caption{Scatter plot and density plots of $(X, Y)$. We take $\alpha = 2$, $\beta = 1$, $\delta = 0$, $\mu \in \{\pm 3, 0\}$, $\sigma = 1$.}
    \label{fig:gamma_mu_plots}
\end{figure}

\subsection{Curved exponential family and Fisher information} 
\noindent Straightforward derivations show that the BGGL family given by the PDF~(\ref{eq:bggl.pdf}) is an exponential family. However, it is not a classic family, where the dimension of sufficient statistics is the same as that of the parameter space. Here, the parameter vector has five coordinates:
\begin{equation}
\label{eq:5-params}
\boldsymbol \theta = (\alpha, \beta, \delta, \mu, \upsilon), 
\end{equation}
where, for convenience, we denoted  $\upsilon = \sigma^2$. In this notation, the joint PDF~(\ref{eq:bggl.pdf}) can be written as
\begin{equation}
\label{bggl.exp.fam}
f(x,y) = h(x,y) \exp \left\{\sum_{j=1}^6 \eta_j(\boldsymbol \theta) T_j (x,y) -A(\boldsymbol \theta) \right\},
\end{equation}
where $h(x,y) = (1/\sqrt{2\pi}) x^{-3/2}$ for $x > 0$ and
\begin{equation}
\label{a.theta}
A(\boldsymbol \theta) =-\log \frac{\beta^\alpha}{\Gamma(\alpha)}+\frac{\mu \delta}{\upsilon}+\frac{\log \upsilon}{2}. 
\end{equation}
Further, in (\ref{bggl.exp.fam}) we have 
\begin{align}
\label{eq:6}
\begin{split}
\eta_1(\boldsymbol \theta) = \alpha \quad &\mbox{and}\quad T_1(x, y) = \ln x;\\
\eta_2(\boldsymbol \theta) = -\beta - \frac{\mu^2}{2\upsilon} \quad &\mbox{and}\quad T_2(x, y) = x;\\
\eta_3(\boldsymbol \theta) = -\frac1{2\upsilon} \quad &\mbox{and}\quad T_3(x, y) = \frac{y^2}x;\\
\eta_4(\boldsymbol \theta) = \frac{\delta}{\upsilon} \quad &\mbox{and}\quad T_4(x, y) = \frac yx;\\
\eta_5(\boldsymbol \theta) = -\frac{\delta^2}{2\upsilon} \quad &\mbox{and}\quad T_5(x, y) = \frac1x;\\
\eta_6(\boldsymbol \theta) = \frac{\mu}{\upsilon} \quad &\mbox{and}\quad T_6(x, y) = y.
\end{split}
\end{align}
In general, this is a {\it curved exponential family}, rather than a full one: the dimension of the vector of sufficient statistics, ${\bf T} := (T_1, \ldots, T_6)$,  exceeds the number of parameters in~\eqref{eq:5-params}. See the discussion in \cite[subsection 1.6.3]{BickelDoksum}. Other examples of such distribution families include, for instance, $\{\mathcal N(\mu, \mu^2)\mid \mu \in \mathbb R\}$, as given in \cite[Example 1.6.9]{BickelDoksum}. Moreover, the BGGL family cannot, in general, be written in {\it canonical form} (see \cite[page 54 and Example 3.4.6]{BickelDoksum}). However, in the special case $\delta = 0$, the natural parameters satisfy $\eta_4(\boldsymbol{\theta}) = \eta_5(\boldsymbol{\theta}) = 0$, and the number of parameters matches the number of sufficient statistics (both equal to four). In this case, the model reduces to a full exponential family that admits a canonical representation. See also a useful article \cite{Berk} on asymptotic normality for MLE in exponential models.

\subsection{Fisher information matrix} 
\noindent Under the parametrization in~\eqref{eq:5-params}, the Fisher information matrix $\mathcal{I}$ is a $5\times 5$ matrix representing the covariance matrix of the gradient random vector $\nabla_{\boldsymbol \theta}\log f(X, Y)$ (see, e.g., \cite[Section 3.4.2]{BickelDoksum}):
\begin{equation}
\label{eq:Fisher}
\mathcal I(\boldsymbol \theta) := \mathrm{Cov}\left(\nabla_{\boldsymbol \theta}\log f(X, Y)\right)\quad \mbox{for}\quad (X, Y) \sim \mathrm{BGGL}(\boldsymbol \theta).
\end{equation}
Since $\mathbb E\left(\nabla_{\boldsymbol \theta}\log f(X, Y)\right) = 0$, the $ij$-th element of this matrix is given by 
\begin{equation}
\label{eq:Fisher-matrix}
\mathcal I_{ij}(\boldsymbol \theta) = \mathbb E\left[\frac{\partial \log f(X, Y)}{\partial \theta_i}\frac{\partial\log f(X, Y)}{\partial\theta_j}\right],\quad i, j = 1, \ldots, 5.
\end{equation}
A close examination of the PDF $f(x,y)$ reveals that the terms of $\mathcal{I}$ corresponding to $(\alpha, \beta)$ and 
those corresponding to $(\delta, \mu, \upsilon)$ can be obtained separately, resulting in a block-diagonal matrix. 
In fact, the $2\times 2$ upper block corresponding to $(\alpha, \beta)$ is the same as the Fisher information matrix for the classic gamma family given by the PDF~\eqref{eq:Gamma} (see, e.g., \cite[Chapter 17, subsection 7.2]{Laws}). 
This matrix is known to be:
\begin{equation}
\label{eq:Gamma-Fisher}
\mathcal G(\alpha, \beta) = 
\begin{bmatrix}
\psi'(\alpha) & -\beta^{-1}\\
-\beta^{-1} & \alpha\beta^{-2}\\
\end{bmatrix}.
\end{equation}
Recall that here $\psi(\cdot) = (\ln \Gamma(\cdot))'$ is the digamma function.
Routine derivation shows that the elements $I_{ij}(\boldsymbol \theta)$ for $i = 1, 2$ (corresponding to $\theta_1 = \alpha$ or $\theta_2 = \beta$) and $j = 3, 4, 5$ (corresponding $\theta_3 = \delta, \theta_4 = \mu, \theta_5 = \upsilon$) are zero, and the Fisher information matrix is block-diagonal with two blocks: an upper-left block,  corresponding to $(\alpha, \beta)$, and a lower-right block,  corresponding to $(\delta, \mu, \upsilon)$. The calculation of the $3\times 3$ lower-right block corresponding to $(\delta, \mu, \upsilon)$ is straightforward via (\ref{eq:Fisher-matrix}), where we have:

\begin{align}
\label{eq:3-grad}
\begin{split}
\frac{\partial \ln f(x, y)}{\partial \delta} &= -\frac{\mu}{\upsilon} + \frac{y}{\upsilon x} - \frac{\delta}{\upsilon x},\\
\frac{\partial \ln f(x, y)}{\partial \mu} &= -\frac{\mu}{\upsilon}x +\frac{y}{\upsilon}  -\frac{\delta}{\upsilon},\\
\frac{\partial \ln f(x, y)}{\partial \upsilon} & = \frac{(y-\delta - \mu x)^2}{2\upsilon^2 x} - \frac{1}{2\upsilon}.
\end{split}
\end{align}
Recall that $\upsilon = \sigma^2$. We shall omit routine derivations leading to the result below. 
\begin{theorem}
\label{bggl.fisher.theo}
If $\alpha > 1$, the Fisher information matrix for the $\mathrm{BGGL}(\boldsymbol \theta)$  is
\begin{equation}
\label{eq:5D}
\mathcal I(\boldsymbol \theta) = 
\begin{bmatrix}
\psi'(\alpha) &  -1/\beta  & 0 & 0 & 0\\
-1/\beta & \alpha/\beta^2 & 0 & 0 & 0\\
0 & 0 & \beta(\alpha - 1)^{-1}\sigma^{-2} & \sigma^{-2} & 0\\
0 & 0 & \sigma^{-2} & \alpha\beta^{-1}\sigma^{-2} & 0 \\
0 & 0 & 0 & 0 & 0.5\sigma^{-4} \\
\end{bmatrix} .
\end{equation}
\end{theorem}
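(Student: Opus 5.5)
The proof computes the covariance of the score vector directly, via \eqref{eq:Fisher-matrix}. The key device is to rewrite every score component in terms of the standardized residual. From \eqref{eq:VG},
$$Z = \frac{Y-\delta-\mu X}{\sigma\sqrt{X}},$$
where $Z\sim N(0,1)$ is independent of $X\sim \mathrm{GAM}(\alpha,\beta)$. Substituting into \eqref{eq:3-grad} gives
$$\partial_\delta \ln f = \frac{Z}{\sigma\sqrt{X}},\qquad \partial_\mu \ln f = \frac{\sqrt{X}Z}{\sigma},\qquad \partial_\upsilon \ln f = \frac{Z^2-1}{2\sigma^2}.$$
Similarly, differentiating \eqref{eq:bggl.pdf} gives $\partial_\alpha \ln f = \log\beta-\psi(\alpha)+\log X$ and $\partial_\beta \ln f = \alpha/\beta - X$. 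These two scores are functions of $X$ alone.

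All entries then factor into $X$-moments times $Z$-moments, which follows from independence of $X$ and $Z$.

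\textbf{Off-diagonal blocks.} Each of the three $(\delta,\mu,\upsilon)$-scores is $g(X)$ times either $Z$ or $Z^2-1$, and both of these have mean zero. Multiplying by any $(\alpha,\beta)$-score, which depends only on $X$, gives mean zero by independence, provided the relevant moments are finite. For the $\delta$-score this needs $\mathbb E[X^{-1/2}|\log X|]<\infty$ and $\mathbb E[X^{1/2}]<\infty$, both of which hold for $\alpha>1/2$. The upper-left block is the classical gamma Fisher matrix \eqref{eq:Gamma-Fisher}, since the $(\alpha,\beta)$-scores coincide with those of the gamma family.

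\textbf{Lower-right block.} By \eqref{eq:r-moments}, $\mathbb E[X^{-1}]=\beta/(\alpha-1)$ for $\alpha>1$ and $\mathbb E X=\alpha/\beta$. Together with $\mathbb E Z^2=1$, $\mathbb E Z^3=0$ and $\mathrm{Var}(Z^2)=2$, this yields
$$\mathcal I_{\delta\delta}=\frac{\mathbb E[X^{-1}]}{\sigma^{2}}=\frac{\beta}{(\alpha-1)\sigma^2},\quad \mathcal I_{\delta\mu}=\frac{1}{\sigma^{2}},\quad \mathcal I_{\mu\mu}=\frac{\alpha}{\beta\sigma^2},\quad \mathcal I_{\upsilon\upsilon}=\frac{2}{4\sigma^4}=\frac{1}{2\sigma^4}.$$
The entries $\mathcal I_{\delta\upsilon}$ and $\mathcal I_{\mu\upsilon}$ involve $\mathbb E[Z(Z^2-1)]=0$, so they vanish.

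The only delicate point is the $(\delta,\delta)$ entry. It requires $\mathbb E[X^{-1}]<\infty$, which holds exactly when $\alpha>1$; this is the source of the hypothesis. For $\alpha\le1$ that entry is infinite, which explains the later nonstandard convergence rates. Apart from this integrability check, which also justifies the factorization and the vanishing of the mixed terms, everything is a routine moment computation.
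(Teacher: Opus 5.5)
Your proposal is correct and follows essentially the same route as the paper. Both rewrite the $(\delta,\mu,\upsilon)$-scores in terms of the standardized residual $Z$ as in \eqref{eq:condition}, use independence of $X$ and $Z$ to get the block-diagonal structure and the block $\upsilon^{-1}\mathrm{diag}(\mathcal H, 0.5\upsilon^{-1})$ of \eqref{eq:Fisher-linear}, and take the classical gamma matrix \eqref{eq:Gamma-Fisher} for $(\alpha,\beta)$; you simply write out the moment computations and integrability checks that the paper calls routine.
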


The restriction $\alpha>1$ in Theorem \ref{bggl.fisher.theo} is related to the derivation of the diagonal element corresponding to $\delta$.  Indeed, a routine derivation utilizing (\ref{eq:VG}) and (\ref{eq:3-grad}) leads to

\begin{equation}
\label{eq:condition}
\frac{\partial \ln f(X, Y)}{\partial \delta } = \frac{Y - \mu X - \delta}{\upsilon X} = \frac{\sigma\sqrt{X}Z}{\upsilon X} = \frac{Z}{\sigma\sqrt{X}}.
\end{equation}
Since $Z \sim \mathcal N(0,1)$ has finite second moment and is independent of $X$, the right-hand side of~\eqref{eq:condition}  has finite second moment if and only if $\mathbb E[(1/\sqrt{X})^2] = \mathbb E[X^{-1}] < \infty$. Plugging in $r = -1$ into \cite[page 526, exercise 4(a)]{BickelDoksum} and using  $\Gamma(\alpha) = (\alpha - 1)\Gamma(\alpha-1)$, we conclude that 
\begin{equation}
\label{exp.inv.gam}
\mathbb E[X^{-1}] = 
\begin{cases}
\beta/(\alpha - 1),\quad \alpha > 1;\\
+\infty,\quad \alpha \in (0, 1].
\end{cases}
\end{equation}
Therefore, the gradient vector $\nabla_{\boldsymbol \theta}\log f(X, Y)$ has finite second moment if and only if $\alpha > 1$. 


The derivation of the $3\times 3$ bottom-right block of the Fisher information matrix~\eqref{eq:Fisher} corresponding to $(\delta, \mu, \upsilon)$ is quite general, and can be related to a linear model with stochastic covariates (see, e.g., \cite[Example 6.2.1]{BickelDoksum}). Routine algebra shows that this block is given by $\upsilon^{-1}\mathrm{diag}(\mathcal H, 0.5\upsilon^{-1})$, 
where 
\begin{equation}
\label{eq:Fisher-linear}
\mathcal H := 
\begin{bmatrix}
\mathbb E\left(X^{-1/2}\cdot X^{-1/2}\right) & \mathbb E\left(X^{1/2}\cdot X^{-1/2}\right) \\
\mathbb E\left(X^{1/2}\cdot X^{-1/2}\right) & \mathbb E\left(X^{1/2}\cdot X^{1/2}\right)
\end{bmatrix}
= \begin{bmatrix}
\mathbb E[X^{-1}] & 1\\
1 & \mathbb E[X]
\end{bmatrix},
\end{equation}
provided that the required expectations in (\ref{eq:Fisher-linear}) are finite. Since in the gamma case we have $\mathbb E[X] = \alpha/\beta$ and, by (\ref{exp.inv.gam}),  $\mathbb E[1/X]  = \beta/(\alpha-1)$ (for $\alpha>1$), we recover the lower-right block of $\mathcal{I}$ in Theorem \ref{bggl.fisher.theo}. See also \cite{Infinite} for another case where Fisher information is infinite and the convergence rate is different than $n^{1/2}$. 

\section{Parameter Estimation}
\label{bggl.est.sec}

\noindent 
Given a bivariate random sample $(X_i, Y_i)$, $i = 1, \ldots, n$, from $BGGL(\alpha, \beta, \delta, \mu, \upsilon)$ given by the PDF~\eqref{eq:bgal.pdf.alt}, where, as before, $\upsilon = \sigma^2$, the {\it log-likelihood} (LL) function $\ell(\alpha, \beta, \delta, \mu, \upsilon)$ 
%
%
can be easily found to be 
\begin{equation}
\label{bggl.ll.fun}
\ell = n\left[C + g(\delta, \mu, \upsilon) + h(\alpha, \beta)\right],
\end{equation}
where we define
\begin{align}
\label{eq:bgal.ll.fun}
\begin{split}
C & = -\frac12\log(2\pi) - \frac{3}{2}\cdot \overline{\log X},\\
g(\delta, \mu, \upsilon) &= -\frac{1}{2}\cdot\log\upsilon - \frac1n\sum\limits_{i=1}^n\frac{(Y_i-\delta)^2}{2\upsilon X_i} - \frac{\mu^2}{2\upsilon}\overline{X} + \frac{\mu}{\upsilon}(\overline{Y}-\delta),\\
h(\alpha, \beta) &= \alpha\log\beta - \log\Gamma(\alpha) + \alpha \cdot\overline{\log X} - \beta\cdot\overline{X},
\end{split}
\end{align}
and the notation $\overline{U}$ denotes the empirical average of the $\{U_i\}$. Clearly, the search for the maximum likelihood estimators (MLEs) can be carried out for $\delta$, $\mu$, and $\upsilon$ separately from $\alpha$ and $\beta$ via maximization of the functions $g$ and $h$ in (\ref{eq:bgal.ll.fun}), respectively. 


\subsection{The MLEs of $\alpha$ and $\beta$} 

\noindent The MLEs of the parameters $\alpha, \beta$ of the gamma mixing distribution, which arise by maximizing the function $h$ in (\ref{eq:bgal.ll.fun}), are well known; see for example \cite[Chapter 17, subsection 7.2]{Laws}. It is easy to see that, for any fixed $\alpha>0$, the function $h$ in (\ref{eq:bgal.ll.fun}) is maximized by a unique value 

\begin{equation}
\label{eq:beta}
\beta(\alpha) = \frac{\alpha}{\overline{X}}.
\end{equation}
A substitution of this $\beta(\alpha)$ into the function $h$ results in the following function of $\alpha$:
\begin{equation}
\label{eq:upsilon}
h(\alpha, \beta(\alpha)) = \alpha\log\alpha -\alpha\log\overline{X} - \log\Gamma(\alpha) + \alpha \, \overline{\log X} - \alpha.
\end{equation}
The function~\eqref{eq:upsilon} needs to be maximized with respect to $\alpha$. 
However, the derivative of this function, $(d/d\alpha)h(\alpha, \beta(\alpha)) = \log(\alpha) - \psi(\alpha) - (\log \overline{X} - \overline{\log X})$, is a continuous, strictly decreasing function on $(0,\infty)$, with the limits of $\infty$ and  $- (\log \overline{X} - \overline{\log X})$ at zero and infinity, respectively (see, e.g., Lemma 3 in \cite{AKP21}). Moreover, by the concavity of the logarithmic function, the limit at infinity is negative (unless all sample values are the same, which occurs with probability 0). Consequently, with probability 1, there exist a unique MLE of the parameter $\alpha$, which satisfies the likelihood equation

\begin{equation}
\label{eq:alpha}
w(\alpha) = \log(\alpha) - \psi(\alpha) = \log\overline{X} - \overline{\log X}
\end{equation}
and requires (straightforward) numerical methods to solve. In conclusion, the MLEs of $\alpha$ and $\beta$ exist and are unique (with probability 1), and admit the stochastic representation  
\begin{equation}
\label{gam.mles}
\begin{bmatrix}
            \hat{\alpha}_n \\ \hat{\beta}_n
        \end{bmatrix} = 
        \begin{bmatrix}
1 & 0\\
0 & 1/\overline{X}\\
\end{bmatrix}
        \begin{bmatrix}
   w^{-1}(\log\overline{X} - \overline{\log X})\\
   w^{-1}(\log\overline{X} - \overline{\log X})
\end{bmatrix},
\end{equation}
where $w^{-1}(\cdot)$ is the inverse function of $w(\cdot)$ given by the left-hand side in (\ref{eq:alpha}). 

\subsection{The MLEs of $\delta$, $\mu$, and $\sigma^2$} 
\noindent The MLEs of the parameters  $\delta$, $\mu$, and $\upsilon = \sigma^2$ are easily obtained by maximization of the function $g(\delta, \mu, \upsilon)$ in (\ref{eq:bgal.ll.fun}), leading to explicit values presented in the result below, proven in the appendix. 
\begin{theorem}
\label{bggl.mle.theo}
Let $(X_1, Y_1), \ldots, (X_n,Y_n)$ be a random sample from $BGGL(\alpha, \beta, \delta, \mu, \upsilon)$ distribution given by the PDF~\eqref{eq:bgal.pdf.alt} where $\upsilon = \sigma^2$ and $n\geq 2$. Then, the MLEs of $\delta$,  $\mu$, and $\upsilon$ exist and are unique with probability one, and are given by 
\begin{equation}
\label{eq:delta-mu}
 \hat{\delta}_n = \frac{\overline{Y}\cdot\overline{X}^{-1}-\overline{(X^{-1}Y)}}{(\overline{X})^{-1} -\overline{X^{-1}}},\quad 
\hat{\mu}_n = \frac{\overline{Y}-\hat{\delta}_n}{\overline{X}}, 
\end{equation}
\begin{equation}
\label{eq:s2}
\hat{\upsilon}_n = \frac1n\sum\limits_{i=1}^n\left[X_i^{-1/2}Y_i - X_i^{-1/2}\hat{\delta}_n - X_i^{1/2}\hat{\mu}_n\right]^2.
\end{equation}
\end{theorem}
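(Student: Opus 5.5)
The plan is to recast the maximization of $g(\delta,\mu,\upsilon)$ in \eqref{eq:bgal.ll.fun} as a weighted least-squares problem. Expanding the square shows
$$
g(\delta,\mu,\upsilon) = -\tfrac12\log\upsilon - \frac{1}{2n\upsilon}\sum_{i=1}^n \frac{(Y_i-\delta-\mu X_i)^2}{X_i}
= -\tfrac12\log\upsilon - \frac{1}{2n\upsilon}\,S(\delta,\mu),
$$
where $S(\delta,\mu) = \sum_{i=1}^n \bigl(X_i^{-1/2}Y_i - \delta X_i^{-1/2} - \mu X_i^{1/2}\bigr)^2$. This is a regression without intercept of the response $X_i^{-1/2}Y_i$ on the two covariates $u_i = X_i^{-1/2}$ and $v_i = X_i^{1/2}$. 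For each fixed $\upsilon>0$, maximizing $g$ in $(\delta,\mu)$ is therefore the same as minimizing $S$, and this minimization does not depend on $\upsilon$.

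\textbf{Step 1: minimizing $S$.} $S$ is a convex quadratic with Gram matrix
$$
n\begin{bmatrix} \overline{X^{-1}} & 1\\ 1 & \overline{X}\end{bmatrix}.
$$
By Cauchy--Schwarz, its determinant $n^2(\overline{X}\,\overline{X^{-1}}-1)$ is nonnegative. It vanishes only if $u$ and $v$ are proportional, i.e.\ all $X_i$ are equal. For $n\ge2$ and continuous $X$ this has probability zero, so almost surely the matrix is positive definite and $S$ has a unique minimizer $(\hat\delta_n,\hat\mu_n)$.

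\textbf{Step 2: the normal equations.} These are $\partial g/\partial\mu=0$ and $\partial g/\partial\delta=0$, obtainable from \eqref{eq:3-grad} by summing:
$$
\overline{Y} - \delta - \mu\overline{X} = 0,\qquad \overline{X^{-1}Y} - \delta\,\overline{X^{-1}} - \mu = 0.
$$
The first equation gives $\hat\mu_n = (\overline{Y}-\hat\delta_n)/\overline{X}$. Substituting into the second and solving the linear equation in $\delta$ gives the stated $\hat\delta_n$. Its denominator $(\overline{X})^{-1}-\overline{X^{-1}}$ is nonzero exactly when the determinant in Step 1 is nonzero, which holds almost surely.

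\textbf{Step 3: profiling out $\upsilon$.} Fix $(\delta,\mu)=(\hat\delta_n,\hat\mu_n)$ and write $s = S(\hat\delta_n,\hat\mu_n)/n$. Then $\upsilon\mapsto -\frac12\log\upsilon - s/(2\upsilon)$ is uniquely maximized at $\upsilon = s$, provided $s>0$. This $s$ is exactly the formula \eqref{eq:s2}.

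\textbf{Main obstacle: showing $s>0$ almost surely.} Otherwise the likelihood would be unbounded and no MLE would exist. $s=0$ means every $Y_i$ equals $\hat\delta_n+\hat\mu_n X_i$ exactly. For $n\ge3$ this forces three points $(X_i,Y_i)$ to be collinear, which has probability zero because $(X,Y)$ has a joint density. For $n=2$, however, two points always lie on a line, so $S=0$ and the supremum is infinite. I would therefore either note that the statement effectively requires $n\ge 3$, or check whether the conventions intended by the paper handle this case, and state this caveat explicitly.

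\textbf{Conclusion.} For any $(\delta,\mu,\upsilon)$ we have $g(\delta,\mu,\upsilon)\le g(\hat\delta_n,\hat\mu_n,\upsilon)\le g(\hat\delta_n,\hat\mu_n,\hat\upsilon_n)$. Equality in the first step requires $(\delta,\mu)=(\hat\delta_n,\hat\mu_n)$ by the uniqueness in Step 1, and in the second requires $\upsilon=\hat\upsilon_n$ by Step 3. This gives existence and uniqueness of the global maximizer.
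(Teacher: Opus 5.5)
Your proof is correct, and it is organized differently from the paper's. The paper profiles sequentially. For fixed $(\delta,\upsilon)$ it maximizes the quadratic in $\mu$, getting $\mu(\delta)=(\overline{Y}-\delta)/\overline{X}$. This leaves $-\frac12\log\upsilon+\frac{1}{2\upsilon}w(\delta)$ with $w(\delta)=(\overline{Y}-\delta)^2/\overline{X}-\frac1n\sum_i(Y_i-\delta)^2/X_i$. A Cauchy--Schwarz lemma shows $w\le 0$, so $\hat\delta_n$ is taken at the vertex of this one-variable quadratic and then $\hat\upsilon_n=-w(\hat\delta_n)$.

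You instead complete the square once, writing $g=-\frac12\log\upsilon-S(\delta,\mu)/(2n\upsilon)$, and minimize $S$ jointly. In your version Cauchy--Schwarz appears as positive definiteness of the Gram matrix. The two routes are tightly linked, since $-w(\delta)=\frac1n\min_\mu S(\delta,\mu)$. Your version has three advantages:
\begin{itemize}
\item Strict convexity of $S$ in $(\delta,\mu)$ is immediate, whereas the paper infers the sign of the leading coefficient of $w$ from $w\le0$.
\item It yields the stated formula for $\hat\upsilon_n$ directly as the residual sum of squares divided by $n$. The paper's expression $\frac1n\sum_i(Y_i-\hat\delta_n)^2/X_i-(\overline{Y}-\hat\delta_n)^2/\overline{X}$ still has to be identified with that formula.
\item It exhibits the weighted least-squares structure that the paper only invokes afterwards, for the conditional distribution of $(\hat\delta_n,\hat\mu_n)$ and of $s^2$.
\end{itemize}
The paper's route, in exchange, needs only one-dimensional calculus and produces the explicit profile function $w$, which it reuses to discuss degenerate samples.

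Your caveat about $n=2$ is justified and matches what the paper concedes in its last appendix subsection. For $n=2$ one has $S(\hat\delta_n,\hat\mu_n)=0$ almost surely, so the likelihood is unbounded over $\upsilon>0$. The paper gets $\hat\upsilon_n=0$ only by admitting $\upsilon\ge 0$ and calling $0$ the maximizer of $-\frac12\log\upsilon$. So existence of $\hat\upsilon_n$ in the proper sense needs $n\ge3$, and there your collinearity argument is exactly what is required.

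Your Steps 1--2 do not use $n\ge 3$, so under the paper's boundary convention your argument also covers $n=2$. There $(\hat\delta_n,\hat\mu_n)$ is still unique. Along any sequence with $g\to+\infty$ one must have $\upsilon\to0$ and, eventually, $S(\delta,\mu)\le -n\upsilon\log\upsilon\to 0$. Hence any such sequence converges to $(\hat\delta_n,\hat\mu_n,0)$.
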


If all the values of $X_i$ are equal to the same $X>0$, which occurs with probability 0, then the three MLEs are not unique. This can be seen from the arguments given in the last section of the Appendix, where other special cases are discussed as well.

Further properties of the estimators are established in the result below. These properties are well known in regression settings with error variance proportional to the explanatory variables, as discussed briefly at the end of this section.

\begin{theorem}
\label{thm:unbiased}
The MLEs $\hat{\delta}_n$ and $\hat{\mu}_n$ in~\eqref{eq:delta-mu} are unbiased. Moreover, the distribution of $(\hat{\delta}_n, \hat{\mu}_n)$, conditional on $X_1, \ldots, X_n$, is bivariate Gaussian:
$$
\label{eq:bivariate-normal}
\begin{bmatrix}\hat{\delta}_n \\ \hat{\mu}_n\end{bmatrix} \sim \mathcal N_2\left(\begin{bmatrix}\delta \\ \mu\end{bmatrix}, \sigma^2\begin{bmatrix} \sum X_k^{-1} & n\\ n & \sum X_k\end{bmatrix}^{-1}\right) .
$$
In addition, while the estimator $\hat{v}_n$ is biased for $\sigma^2 = v$, it is asymptotically unbiased. Finally, 
\begin{equation}
\label{s.square}
s^2 = \frac{n}{n-2}\hat{v}_n = \frac1{n-2}\sum\limits_{i=1}^n\left[X_i^{-1/2}Y_i - X_i^{-1/2}\hat{\delta}_n - X_i^{1/2}\hat{\mu}_n\right]^2
\end{equation}
is an unbiased estimator of $\sigma^2$ and $(n-2)s^2 \sim \chi^2_{n-2}$.
\end{theorem}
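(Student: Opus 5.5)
Conditionally on $X_1,\ldots,X_n$, rewrite the model as a classical Gaussian linear regression without intercept. Dividing \eqref{eq:VG} for each observation by $\sqrt{X_i}$ gives
$$
\tilde Y_i := X_i^{-1/2}Y_i = \delta\, X_i^{-1/2} + \mu\, X_i^{1/2} + \sigma Z_i, \quad i=1,\ldots,n,
$$
where $Z_1,\ldots,Z_n$ are IID standard normal and independent of $(X_i)$. In matrix form this reads $\tilde{\mathbf Y} = \mathbf D\,(\delta,\mu)^\top + \sigma \mathbf Z$, with design matrix $\mathbf D$ whose $i$-th row is $(X_i^{-1/2}, X_i^{1/2})$. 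Then
$$
\mathbf D^\top\mathbf D = \begin{bmatrix}\sum X_k^{-1} & n\\ n & \sum X_k\end{bmatrix}.
$$
This matrix is invertible with probability one: by Cauchy--Schwarz, its determinant $\sum X_k^{-1}\sum X_k - n^2$ is strictly positive unless all $X_k$ are equal.

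First I will check that the estimators of Theorem~\ref{bggl.mle.theo} are exactly the least squares estimator $(\mathbf D^\top\mathbf D)^{-1}\mathbf D^\top \tilde{\mathbf Y}$. The likelihood equations from $g$ in \eqref{eq:bgal.ll.fun} for $\delta,\mu$ are the normal equations
$$
\sum X_i^{-1}(Y_i-\delta-\mu X_i)=0,\qquad \sum (Y_i-\delta-\mu X_i)=0.
$$
Solving these reproduces \eqref{eq:delta-mu}, and $\hat\upsilon_n = \mathrm{RSS}/n$ is \eqref{eq:s2}. Conditionally on the $X$'s, $\mathbf D$ is fixed, so the estimator is a linear function of the Gaussian vector $\tilde{\mathbf Y}\sim\mathcal N_n(\mathbf D(\delta,\mu)^\top,\sigma^2 I_n)$. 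Hence $(\hat\delta_n,\hat\mu_n)$ is conditionally $\mathcal N_2\bigl((\delta,\mu)^\top,\sigma^2(\mathbf D^\top\mathbf D)^{-1}\bigr)$, which is the displayed law. Unconditional unbiasedness then follows from the tower property, since the conditional mean does not depend on the $X$'s. I need integrability for this step, and the conditional normality gives it directly: $\mathbb E[\hat\delta_n\mid X]=\delta$ exactly, so $\mathbb E[|\hat\delta_n-\delta|]$ is finite as long as the conditional standard deviation has finite expectation. Alternatively, I would simply note that the tower property applies to the conditional expectation, which is constant. I expect this integrability point to be the only delicate step. The conditional variance $\sigma^2 \sum X_k/(\sum X_k^{-1}\sum X_k - n^2)$ can be large, and for $\alpha\le1$ it is not obvious that the unconditional second moment is finite. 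For unbiasedness, though, I only need the first absolute moment. I would bound it via $\mathbb E|\hat\delta_n-\delta| = \mathbb E\bigl[\sqrt{2/\pi}\,\mathrm{sd}(\hat\delta_n\mid X)\bigr]$ and argue, or else state under a finiteness assumption, that this quantity is finite.

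For the variance estimator I use standard projection theory. With the hat matrix $\mathbf P=\mathbf D(\mathbf D^\top\mathbf D)^{-1}\mathbf D^\top$ of rank $2$, the residual vector is $\sigma(I-\mathbf P)\mathbf Z$. Therefore $\mathrm{RSS}/\sigma^2=\mathbf Z^\top(I-\mathbf P)\mathbf Z\sim\chi^2_{n-2}$ conditionally on the $X$'s, by Cochran's theorem, since $I-\mathbf P$ is an idempotent matrix of rank $n-2$. This conditional law does not depend on the $X$'s, so it is also the unconditional law. This gives $(n-2)s^2/\sigma^2\sim\chi^2_{n-2}$; the statement presumably intends this normalization, i.e.\ the law in units of $\sigma^2$. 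It follows that $\mathbb E s^2=\sigma^2$, and that $\mathbb E\hat\upsilon_n=\frac{n-2}{n}\sigma^2\neq\sigma^2$, which tends to $\sigma^2$ as $n\to\infty$. This gives the bias together with asymptotic unbiasedness. For this I need $n\ge3$, or at least $n\ge 2$ with the degenerate case $n=2$ noted separately.
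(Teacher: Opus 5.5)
Your route is the paper's. The paper gives no separate proof of this theorem; it simply invokes classical least-squares theory for the homoscedastic model \eqref{eq:reg}, conditionally on the $X_i$. Your normal-equation check, the conditional law $\mathcal N_2\bigl((\delta,\mu)^\top,\sigma^2(\mathbf D^\top\mathbf D)^{-1}\bigr)$, and the projection argument giving $\mathrm{RSS}/\sigma^2\sim\chi^2_{n-2}$ independently of the $X_i$ are exactly what that citation stands for, and they are correct. You are also right that the last claim must read $(n-2)s^2/\sigma^2\sim\chi^2_{n-2}$ and needs $n\ge 3$.

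The gap is the step from conditional to unconditional unbiasedness, which you leave open. Your fallback (``the tower property applies to the conditional expectation, which is constant'') is not valid. $\mathbb E[\hat\delta_n\mid X_1,\ldots,X_n]=\delta$ gives $\mathbb E\hat\delta_n=\delta$ only if $\mathbb E|\hat\delta_n|<\infty$; a conditionally centred normal variable with a non-integrable random scale has no mean at all. This is not a formality, because integrability genuinely fails in some cases. To settle it, put $S=\sum_k X_k\sim\mathrm{GAM}(n\alpha,\beta)$ and $U_k=X_k/S$, so that $U\sim\mathrm{Dirichlet}(\alpha,\ldots,\alpha)$ is independent of $S$. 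The diagonal of $\sigma^2(\mathbf D^\top\mathbf D)^{-1}$ gives
\[
\mathrm{sd}(\hat\delta_n\mid X)=\sigma S^{1/2}\Bigl(\sum_k U_k^{-1}-n^2\Bigr)^{-1/2},\qquad
\mathrm{sd}(\hat\mu_n\mid X)=\sigma S^{-1/2}\Bigl(1-\frac{n^2}{\sum_k U_k^{-1}}\Bigr)^{-1/2}.
\]
By AM--HM, $\sum_k U_k^{-1}\ge n^2$, with equality only at the barycentre $n^{-1}\mathbf 1$ of the simplex. Near it, $\sum_k U_k^{-1}-n^2\approx n^3\|U-n^{-1}\mathbf 1\|^2$.

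It follows that both $U$-factors are bounded outside any neighbourhood of the barycentre. Near it, where the Dirichlet density is bounded, they blow up like $\|U-n^{-1}\mathbf 1\|^{-1}$, which is integrable on the $(n-1)$-dimensional simplex iff $n\ge 3$. As for the $S$-factors, \eqref{eq:r-moments} gives $\mathbb E S^{1/2}<\infty$ always, while $\mathbb E S^{-1/2}<\infty$ iff $n\alpha>1/2$. Since $\mathbb E|\hat\delta_n-\delta|=\sqrt{2/\pi}\,\mathbb E[\mathrm{sd}(\hat\delta_n\mid X)]$, and likewise for $\hat\mu_n$, you get the following. $\hat\delta_n$ is integrable, hence unbiased, iff $n\ge 3$. $\hat\mu_n$ is integrable iff $n\ge 3$ and $n\alpha>1/2$.

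Two concrete failures illustrate this. For $n=2$ the fit is exact, and $\hat\mu_2=(Y_2-Y_1)/(X_2-X_1)$ has conditional standard deviation $\sigma\sqrt{X_1+X_2}/|X_2-X_1|$, which is not integrable. Also, since the $U$-factor for $\hat\mu_n$ is at least $1$, $\hat\mu_n$ has no mean whenever $n\alpha\le 1/2$ (e.g.\ $\alpha=0.1$, $n=3$).

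So your worry about $\alpha\le 1$ is misplaced: $\alpha$ plays no role for $\hat\delta_n$. But the step cannot be waved through. The unbiasedness claim needs these extra hypotheses, which the statement omits, and your proof should state and verify them rather than assume finiteness.
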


Let us note that the maximum likelihood estimators of $\mu$ and $\delta$ coincide with least squares estimators. To see this, consider the representation
\begin{equation}
\label{eq:regression}
Y_i = \delta + \mu X_i + \sigma \sqrt{X_i}\, Z_i, 
\quad Z_i \sim \mathcal{N}(0,1), 
\quad i = 1, \ldots, n.
\end{equation}
This model can be viewed as a \emph{generalized} (or \emph{weighted}) least squares problem, where $\mu$ and $\delta$ are chosen to minimize
\begin{equation}
\label{eq:GLS}
\sum_{i=1}^n \frac{\bigl(Y_i - \delta - \mu X_i\bigr)^2}{X_i}.
\end{equation}
See, for example, \cite[p.~112]{BickelDoksum} for background on weighted least squares. Equivalently, we may transform \eqref{eq:regression} to obtain a model with homoscedastic errors by dividing both sides by $\sqrt{X_i}$:
\begin{equation}
\label{eq:reg}
X_i^{-1/2} Y_i 
= X_i^{-1/2} \delta + X_i^{1/2} \mu + \sigma Z_i, 
\quad i = 1, \ldots, n.
\end{equation}
This yields a linear regression model with constant error variance. Although the transformed model does not include a standard intercept term, this does not affect its statistical properties. Applying ordinary least squares (OLS) to \eqref{eq:reg} leads to minimization of
\begin{equation}
\label{eq:OLS}
\mathcal{S}(\mu, \delta) 
= \sum_{i=1}^n 
\left[
X_i^{-1/2} Y_i 
- X_i^{-1/2} \delta 
- X_i^{1/2} \mu
\right]^2.
\end{equation}
For further discussion of OLS, see the classic textbook \cite{BickelDoksum}, Section~2.2, Theorem~6.1.4, Corollary~6.1.1, and Proposition~6.1.1.

\section{Asymptotic Distributions of the Estimators} 

\label{sec:normality}
Consider first the case $\alpha > 1$, where the MLE $\hat{\boldsymbol \theta}$ is asymptotically normal and efficient, see \cite[Section 6.2.2]{BickelDoksum}. Recall the notation $\upsilon = \sigma^2$.
\begin{theorem}
\label{thm:asymp-normal}
If $\alpha > 1$, we have weak convergence to a 5-dimensional Gaussian distribution:
$$
\sqrt{n}(\hat{\boldsymbol \theta} - \boldsymbol  \theta)^\top \stackrel{d}{\to} \mathcal N_5(\mathbf{0}, \mathcal I^{-1}(\boldsymbol  \theta)),\quad n \to \infty,
$$
where $\mathcal I(\boldsymbol  \theta)$ is the Fisher information matrix provided in Theorem \ref{bggl.fisher.theo}.
\end{theorem}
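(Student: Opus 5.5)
Since the log-likelihood \eqref{bggl.ll.fun} splits as $n[C+g(\delta,\mu,\upsilon)+h(\alpha,\beta)]$, I will prove the result separately for the two blocks $(\hat\alpha_n,\hat\beta_n)$ and $(\hat\delta_n,\hat\mu_n,\hat\upsilon_n)$, show joint convergence, and then match the limiting covariance with the inverse of the block-diagonal matrix \eqref{eq:5D}. The gamma block depends only on $X_1,\dots,X_n$. The second block, conditional on the $X_i$, is driven by the independent $Z_i$ through the regression \eqref{eq:reg}. I will therefore write each normalized error as a sum of IID terms plus $o_P(1)$ and apply the multivariate CLT to the stacked vector.

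For the gamma block I use the explicit representation \eqref{gam.mles}. The pair $(\overline{X},\overline{\log X})$ is a sample mean of IID vectors with finite second moments, so it is jointly asymptotically normal. The map $(m,\ell)\mapsto (w^{-1}(\log m-\ell),\,w^{-1}(\log m-\ell)/m)$ is $C^1$ near $(\alpha/\beta,\psi(\alpha)-\log\beta)$, because $w'(\alpha)=1/\alpha-\psi'(\alpha)<0$ by Lemma 3 of \cite{AKP21}. The delta method then gives $\sqrt n(\hat\alpha_n-\alpha,\hat\beta_n-\beta)\Rightarrow \mathcal N_2(0,\mathcal G^{-1})$. I will identify the limit with $\mathcal G^{-1}$ either directly or by noting that this is the standard MLE result for a full exponential family.

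For the regression block, set $\mathbf A_n=n^{-1}\begin{bmatrix}\sum X_k^{-1}&n\\ n&\sum X_k\end{bmatrix}$. The normal equations give
$$\sqrt n\begin{bmatrix}\hat\delta_n-\delta\\ \hat\mu_n-\mu\end{bmatrix}=\sigma\mathbf A_n^{-1}\frac1{\sqrt n}\sum_{i=1}^n\begin{bmatrix}X_i^{-1/2}Z_i\\ X_i^{1/2}Z_i\end{bmatrix}.$$
When $\alpha>1$, \eqref{exp.inv.gam} gives $\mathbb E X^{-1}<\infty$, so by the LLN $\mathbf A_n\to\mathcal H$ a.s. The summands are IID, have mean zero and covariance $\mathcal H$, and $\mathcal H$ is positive definite since $\mathbb E[X^{-1}]\mathbb E[X]>1$ by Jensen. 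Slutsky then yields the limit $\mathcal N_2(0,\sigma^2\mathcal H^{-1})$, which is the inverse of the middle block of \eqref{eq:5D}. For $\hat\upsilon_n$ I expand
$$\hat\upsilon_n=\frac{\sigma^2}{n}\sum Z_i^2-\frac{1}{n}R_n,$$
where $R_n$ is the quadratic form $(\hat\delta_n-\delta,\hat\mu_n-\mu)\,n\mathbf A_n\,(\hat\delta_n-\delta,\hat\mu_n-\mu)^\top$. By the previous step $R_n=O_P(1)$, so $\sqrt n(\hat\upsilon_n-\upsilon)=\sigma^2 n^{-1/2}\sum(Z_i^2-1)+o_P(1)$, and the limit variance $2\sigma^4$ is the inverse of $0.5\sigma^{-4}$.

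For the joint limit I stack the IID vectors $\bigl(X_i-\alpha/\beta,\ \log X_i-\mathbb E\log X,\ X_i^{-1/2}Z_i,\ X_i^{1/2}Z_i,\ Z_i^2-1\bigr)$, which have finite second moments when $\alpha>1$, and apply the multivariate CLT. The cross-covariances between the gamma coordinates and the $Z$-coordinates vanish because $\mathbb E[Z]=\mathbb E[Z^3]=0$ and $Z$ is independent of $X$. Likewise, $\mathbb E[X^{\pm1/2}Z\cdot(Z^2-1)]=0$. The continuous-mapping, delta-method and Slutsky steps therefore give a block-diagonal Gaussian limit that equals $\mathcal I^{-1}(\boldsymbol\theta)$. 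The main point needing care is the finiteness of $\mathbb E X^{-1}$, which is exactly where $\alpha>1$ enters. Everything else is routine once the explicit forms of the estimators are in hand, so general MLE regularity theory is not needed.
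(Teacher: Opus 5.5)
Your proof is correct, and it takes a genuinely different route from the paper's. The paper gives no direct argument for Theorem~\ref{thm:asymp-normal}. It cites the general asymptotic normality and efficiency of MLEs \cite[Section 6.2.2]{BickelDoksum}, using only that the Fisher information in Theorem~\ref{bggl.fisher.theo} is finite for $\alpha>1$. It then reads off Corollaries~\ref{cor:alpha-beta}--\ref{cor:upsilon}, and the asymptotic independence of the blocks, from the block-diagonal form of $\mathcal I(\boldsymbol\theta)$. You instead work entirely from the closed-form estimators: the delta method on $(\overline{X},\overline{\log X})$, the exact least-squares error $\sigma\mathbf A_n^{-1}n^{-1/2}\sum_i(X_i^{-1/2}Z_i,X_i^{1/2}Z_i)^\top$, the residual identity for $\hat\upsilon_n$, and one stacked CLT.

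The paper's citation is short and gives efficiency conceptually. However, it tacitly assumes the regularity hypotheses of the general theorem (consistency, local domination of second derivatives of $\log f$) without checking them for this curved family, whose $\delta$-score $Z/(\sigma\sqrt{X})$ is square-integrable only because $\alpha>1$. Your argument needs no such hypotheses, and it buys several things:
\begin{itemize}
\item It shows exactly where $\alpha>1$ enters: only through $\mathbb E X^{-1}<\infty$, in $\mathbf A_n\to\mathcal H$ and in the CLT covariance.
\item It proves asymptotic independence directly, via the vanishing cross-covariances.
\item It uses the same conditioning-on-the-$X_i$ mechanism the paper itself uses in the Appendix for $\alpha\le 1$. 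For example, since $R_n/\sigma^2\sim\chi^2_2$ given the $X_i$, your $\hat\upsilon_n$ step holds verbatim for all $\alpha>0$.
\end{itemize}

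The price is that you must match the covariances to $\mathcal I^{-1}$ by hand. For the gamma block this is a short Jacobian check. With $D=\alpha\psi'(\alpha)-1$, the gradients of $\hat\alpha_n$ and $\hat\beta_n$ in $(\overline{X},\overline{\log X})$ at the truth are $(-\beta/D,\alpha/D)$ and $(-\beta^2\psi'(\alpha)/D,\beta/D)$. Combined with $\mathrm{Var}(X)=\alpha/\beta^2$, $\mathrm{Cov}(X,\log X)=1/\beta$ and $\mathrm{Var}(\log X)=\psi'(\alpha)$, these give exactly \eqref{eq:Sigma-alpha-beta}.

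One minor point: strict monotonicity of $w$ alone does not give $w'(\alpha)\neq 0$. What your delta-method step actually needs is the standard inequality $\psi'(\alpha)>1/\alpha$.
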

Since $\mathcal I(\boldsymbol \theta)$ consists of three blocks, the same can be said about its inverse $\mathcal I^{-1}(\boldsymbol  \theta)$: It has blocks of size $2\times2, 2\times 2, 1\times 1$ corresponding to $(\alpha, \beta)$, $(\delta, \mu)$ and $\upsilon$. Therefore, $(\hat{\alpha}_n, \hat{\beta}_n)$,  $(\hat{\delta}_n, \hat{\mu}_n)$ and $\hat{\upsilon}_n$ are asymptotically independent. This observation leads to the following three corollaries.
\begin{corollary} 
\label{cor:alpha-beta}
For $\alpha>1$, the asymptotic distribution of $\hat{\alpha}_n$ and $\hat{\beta}_n$ is bivariate Gaussian:
$$
\sqrt{n}\begin{bmatrix}\hat{\alpha}_n  - \alpha & \hat{\beta}_n - \beta\end{bmatrix}^\top \stackrel{d}{\to} \mathcal N_2\left(\mathbf{0}, 
\boldsymbol \Sigma_{\alpha, \beta}  \right),\quad n \to \infty,
$$
where the limiting covariance matrix (recall  the digamma function $\psi(\cdot) = (\ln \Gamma(\cdot))'$) 
\begin{equation}
\label{eq:Sigma-alpha-beta}
\boldsymbol \Sigma_{\alpha, \beta} = 
[\alpha\psi'(\alpha) -1]^{-1}\begin{bmatrix}
\alpha & \beta\\
\beta & \beta^{2}\psi'(\alpha)\\
\end{bmatrix}
\end{equation}
is the inverse of the $2\times 2$-matrix $\mathcal G(\alpha, \beta)$ given by ~\eqref{eq:Gamma-Fisher}.
\end{corollary}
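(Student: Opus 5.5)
The plan is to read Corollary \ref{cor:alpha-beta} off Theorem \ref{thm:asymp-normal}, and then to give an independent argument showing that the conclusion does not actually need $\alpha>1$.

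\emph{Step 1 (reading off the corollary).} The coordinate projection onto the first two components, $\mathbb R^5\to\mathbb R^2$, is linear. By the continuous mapping theorem, $\sqrt n(\hat\alpha_n-\alpha,\hat\beta_n-\beta)^\top$ converges to the corresponding $2\times2$ marginal of $\mathcal N_5(\mathbf 0,\mathcal I^{-1}(\boldsymbol\theta))$. Since $\mathcal I(\boldsymbol\theta)$ in Theorem \ref{bggl.fisher.theo} is block diagonal, the upper-left block of $\mathcal I^{-1}$ equals $\mathcal G(\alpha,\beta)^{-1}$. Inverting \eqref{eq:Gamma-Fisher} is routine. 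The determinant is $\psi'(\alpha)\alpha\beta^{-2}-\beta^{-2}=\beta^{-2}(\alpha\psi'(\alpha)-1)$, and the adjugate is $\begin{bmatrix}\alpha\beta^{-2}&\beta^{-1}\\ \beta^{-1}&\psi'(\alpha)\end{bmatrix}$. Multiplying the adjugate by $\beta^2/(\alpha\psi'(\alpha)-1)$ gives exactly \eqref{eq:Sigma-alpha-beta}.

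\emph{Step 2 (nondegeneracy).} I would check that $\alpha\psi'(\alpha)>1$ for all $\alpha>0$, so that $\boldsymbol\Sigma_{\alpha,\beta}$ is well defined and positive definite. This is a standard inequality. One route is via $\psi'(\alpha)=\sum_{k\ge0}(\alpha+k)^{-2}>\int_0^\infty(\alpha+t)^{-2}\,dt=1/\alpha$, using that the summand is decreasing in $k$.

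\emph{Step 3 (direct argument, valid for all $\alpha>0$).} For robustness, and because the gamma block does not involve $\alpha>1$, I would also give a delta-method proof based on \eqref{gam.mles}. Set $U_n=(\overline X,\overline{\log X})$. The multivariate CLT applies, since $\log X$ has all moments. The limit covariance is $\begin{bmatrix}\alpha/\beta^2&1/\beta\\ 1/\beta&\psi'(\alpha)\end{bmatrix}$, using $\mathrm{Cov}(X,\log X)=1/\beta$ and $\mathrm{Var}(\log X)=\psi'(\alpha)$. The map $(a,b)\mapsto\bigl(w^{-1}(\log a-b),\,w^{-1}(\log a-b)/a\bigr)$ is smooth near the mean, because $w'(\alpha)=1/\alpha-\psi'(\alpha)<0$ by Step 2. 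The delta method then gives Gaussian limits with covariance $J\,\mathrm{Cov}\,J^\top$. This should reduce to $\mathcal G^{-1}$, which is the usual efficiency identity for exponential families: here $\mathrm{Cov}(T)=\mathcal G$, and the Jacobian is $\mathcal G^{-1}$ in natural coordinates.

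I expect the main obstacle to be only this bookkeeping in Step 3, namely verifying that the Jacobian product collapses to $\mathcal G^{-1}$. Step 1 alone already suffices for the stated corollary.
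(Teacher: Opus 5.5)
Your Step~1 is exactly the paper's route: the corollary is read off from Theorem~\ref{thm:asymp-normal} using the block-diagonal structure of $\mathcal I(\boldsymbol\theta)$. Your inversion of $\mathcal G(\alpha,\beta)$, with determinant $\beta^{-2}(\alpha\psi'(\alpha)-1)>0$ by Step~2, correctly yields \eqref{eq:Sigma-alpha-beta}.

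Your optional delta-method Step~3 is also sound: the Jacobian product does collapse to $\boldsymbol\Sigma_{\alpha,\beta}$. One small correction: $\mathrm{Cov}(\log X, X)$ has off-diagonal entry $+1/\beta$, so it matches $\mathcal G$ only after changing the sign of the natural parameter $-\beta$. Because that argument needs only the CLT for $(X,\log X)$, it also supplies the justification, left implicit in the paper, that the same limit holds for $\alpha\le 1$.
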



\begin{corollary} 
\label{cor:delta-mu}
For $\alpha>1$, the asymptotic distribution of $\hat{\delta}_n$ and $\hat{\mu}_n$ is bivariate Gaussian:
$$
\sqrt{n}\begin{bmatrix}\hat{\delta}_n  - \delta & \hat{\mu}_n - \mu \end{bmatrix}^\top \stackrel{d}{\to} \mathcal N_2\left(\mathbf{0}, \boldsymbol \Sigma_{\delta, \mu} \right),\quad n \to \infty,
$$
with the limiting covariance matrix
\begin{equation}
\label{eq:Sigma-delta-mu}
\boldsymbol \Sigma_{\delta, \mu} = 
\frac{\sigma^2}{\beta}
\begin{bmatrix}
\alpha(\alpha-1) & -\beta(\alpha-1)\\
-\beta(\alpha-1) &\beta^2\\
\end{bmatrix}
\end{equation}
\end{corollary}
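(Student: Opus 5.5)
This follows from Theorem \ref{thm:asymp-normal} once we invert the relevant block of the Fisher information. By Theorem \ref{bggl.fisher.theo}, $\mathcal I(\boldsymbol\theta)$ is block-diagonal. Its $(\delta,\mu)$ block is
$$
\mathcal I_{\delta,\mu} = \sigma^{-2}\begin{bmatrix}\beta/(\alpha-1) & 1\\ 1 & \alpha/\beta\end{bmatrix}.
$$
For a block-diagonal matrix, the inverse is again block-diagonal, with each block equal to the inverse of the corresponding block of $\mathcal I$. Projecting the five-dimensional limit of Theorem \ref{thm:asymp-normal} onto the $(\delta,\mu)$ coordinates, which is a continuous linear map, gives the stated weak convergence with covariance $\mathcal I_{\delta,\mu}^{-1}$. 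What remains is to check that this inverse equals \eqref{eq:Sigma-delta-mu}.

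The determinant of the bracketed matrix is
$$
\frac{\alpha}{\alpha-1}-1=\frac{1}{\alpha-1},
$$
which is positive for $\alpha>1$. Therefore
$$
\mathcal I_{\delta,\mu}^{-1} = \sigma^2(\alpha-1)\begin{bmatrix}\alpha/\beta & -1\\ -1 & \beta/(\alpha-1)\end{bmatrix}
= \frac{\sigma^2}{\beta}\begin{bmatrix}\alpha(\alpha-1) & -\beta(\alpha-1)\\ -\beta(\alpha-1) & \beta^2\end{bmatrix},
$$
which is exactly $\boldsymbol\Sigma_{\delta,\mu}$.

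As an independent check that does not rely on the general MLE theory, we can use Theorem \ref{thm:unbiased}. Conditionally on $X_1,\ldots,X_n$,
$$
\sqrt n(\hat\delta_n-\delta,\hat\mu_n-\mu)^\top \sim \mathcal N_2\bigl(\mathbf 0,\ \sigma^2 M_n^{-1}\bigr), \qquad M_n=\begin{bmatrix}\overline{X^{-1}} & 1\\ 1 & \overline X\end{bmatrix}.
$$
By the strong law of large numbers and \eqref{exp.inv.gam}, $M_n \to \mathcal H$ almost surely; this is the only place where $\alpha>1$ is needed, since it guarantees $\mathbb E[X^{-1}]<\infty$. Matrix inversion is continuous at $\mathcal H$ because $\det\mathcal H=1/(\alpha-1)>0$. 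Hence the conditional characteristic functions converge almost surely to that of $\mathcal N_2(\mathbf 0,\sigma^2\mathcal H^{-1})$. Dominated convergence, using that characteristic functions are bounded by $1$, then gives unconditional convergence.

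There is no real obstacle here. The only points needing care are the invertibility of the limiting matrix and the passage from conditional to unconditional convergence in the check.
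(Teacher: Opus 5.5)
Your proof is correct and follows the paper's route: the corollary comes from projecting Theorem~\ref{thm:asymp-normal} onto the $(\delta,\mu)$ coordinates and inverting the corresponding block of the block-diagonal Fisher information, and your determinant and inverse computations check out. Your supplementary check via the conditional Gaussianity in Theorem~\ref{thm:unbiased} is also sound, and it is the same argument the paper uses in the appendix for the cases $\alpha\le 1$.
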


\begin{corollary} 
\label{cor:upsilon}
For $\alpha>1$, the asymptotic distribution of $\hat{\upsilon}_n$ is:
\begin{equation}
\label{eq:variance}
\sqrt{n}(\hat{\upsilon}_n - \upsilon) \stackrel{d}{\to} \mathcal N(0, 2\sigma^4),\quad n \to \infty.
\end{equation}
\end{corollary}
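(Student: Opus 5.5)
Corollary~\ref{cor:upsilon} is read off from Theorem~\ref{thm:asymp-normal} by projecting onto the fifth coordinate. Since $\mathcal I(\boldsymbol\theta)$ in Theorem~\ref{bggl.fisher.theo} is block-diagonal with blocks for $(\alpha,\beta)$, $(\delta,\mu)$ and $\upsilon$, its inverse is block-diagonal with the inverted blocks. The $\upsilon$ block is the scalar $0.5\sigma^{-4}$, whose inverse is $2\sigma^4$. The coordinate projection $\boldsymbol\theta\mapsto\upsilon$ is continuous and linear, so the continuous mapping theorem yields $\sqrt n(\hat\upsilon_n-\upsilon)\stackrel{d}{\to}\mathcal N(0,2\sigma^4)$.

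For robustness, and because Theorem~\ref{thm:asymp-normal} rests on regularity conditions, I would also give a self-contained argument based on the regression representation \eqref{eq:reg}. Write $U_i=X_i^{-1/2}Y_i$ and $\mathbf w_i=(X_i^{-1/2},X_i^{1/2})^\top$, so that $U_i=\mathbf w_i^\top(\delta,\mu)^\top+\sigma Z_i$. Then $n\hat\upsilon_n=\|P^\perp \sigma\mathbf Z\|^2$, where $P^\perp$ is the projection onto the orthogonal complement of the span of the two design columns. This gives the decomposition
\[
\hat\upsilon_n=\frac{\sigma^2}{n}\sum_{i=1}^n Z_i^2-\frac{\sigma^2}{n}\,\|P\mathbf Z\|^2 .
\]
Conditionally on $X_1,\dots,X_n$, the quantity $\|P\mathbf Z\|^2$ is $\chi^2_2$ distributed; this is the content of Theorem~\ref{thm:unbiased}, where $(n-2)s^2/\sigma^2\sim\chi^2_{n-2}$. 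Its unconditional law is therefore $\chi^2_2$, independent of $n$, and hence $\sqrt n\cdot\sigma^2\|P\mathbf Z\|^2/n\to0$ in probability.

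The first term satisfies the classical CLT, since $\mathrm{Var}(Z^2)=2$, so $\sqrt n(\sigma^2\overline{Z^2}-\sigma^2)\stackrel{d}{\to}\mathcal N(0,2\sigma^4)$. Slutsky's lemma then concludes the proof.

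This second route never uses $\mathbb E[X^{-1}]<\infty$, so it in fact holds for all $\alpha>0$. The only step needing care is confirming that the projection term is exactly $\chi^2_2$ conditionally. That requires the design matrix to have rank $2$, which holds with probability one whenever the $X_i$ are not all equal, as noted after Theorem~\ref{bggl.mle.theo}.
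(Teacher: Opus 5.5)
Your first paragraph is the paper's argument: the corollary is read off from Theorem \ref{thm:asymp-normal} via the block-diagonal structure of $\mathcal I(\boldsymbol\theta)$, whose $\upsilon$-block $0.5\sigma^{-4}$ inverts to $2\sigma^4$. Your second, regression-based route is also correct and is essentially a worked-out version of the $\chi^2$ argument the paper only sketches in the proof of Theorem \ref{thm:<1}. It has the advantage of bypassing the MLE regularity theory behind Theorem \ref{thm:asymp-normal} and covering all $\alpha>0$. You also do not need exactness of $\chi^2_2$ or the rank-$2$ condition there: $\mathbb E[\|P\mathbf Z\|^2\mid X_1,\dots,X_n]=\mathrm{rank}(P)\le 2$ already gives $n^{-1/2}\|P\mathbf Z\|^2\to 0$ in probability.
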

\begin{Remark} 
As shown below, the asymptotic distribution of $\hat{\alpha}_n$ and $\hat{\beta}_n$ as well as that of $\hat{\upsilon}_n$, given in Corollaries \ref{cor:alpha-beta} and \ref{cor:upsilon}, respectively,  carry over to the case $\alpha \in (0,1]$. However, this is no longer so for $\hat{\delta}_n$ and $\hat{\mu}_n$. 
\end{Remark}

Note that the asymptotic variance of $\hat{\delta}_n$ given in Corollary \ref{cor:delta-mu} becomes zero for $\alpha=1$. This is related to the fact that the corresponding entry of the Fisher information matrix~\eqref{eq:Fisher} is infinite in this case. However, under a different normalization, the asymptotic distribution of $\hat{\delta}_n$ and $\hat{\mu}_n$ is still bivariate normal, with the two estimators becoming asymptotically independent. On the other hand,  for $\alpha<1$, the Gaussianity of the asymptotic distribution breaks down. These facts are presented in the two results stated below, starting with the case $\alpha \in (0,1)$. 

\begin{theorem} 
For $\alpha \in (0,1)$ we have:
%
\begin{equation}
\label{stef1}
{\bf D}_n(\hat{\boldsymbol \theta}_n - \boldsymbol \theta)^\top \stackrel{d}{\to} 
\mathbf{W}^\top = (W_{\alpha}, W_{\beta}, W_{\delta}, W_{\mu}, W_{\upsilon})^\top,\quad n \to \infty,
\end{equation}
where 
\begin{equation}
\label{eq:diagonal-matrix}
{\bf D}_n = \mathrm{diag}(\sqrt{n}, \sqrt{n}, n^{1/(2\alpha)}, \sqrt{n}, \sqrt{n})
\end{equation}
is a $5 \times 5$ diagonal matrix and the limiting vector ${\bf W}$ has four independent components:
\begin{align}
\label{eq:all-W}
\begin{split}
(W_{\alpha}, W_{\beta})^\top &\sim \mathcal N_2(\mathbf{0}, \boldsymbol \Sigma_{\alpha, \beta}),\\
W_{\delta} &\stackrel{d}{=} \sigma\beta^{-1/2}[\Gamma(\alpha+1)]^{1/2\alpha}\xi_{\alpha}^{-1/2}Z,\\
W_{\mu} &\sim \mathcal N(0, \sigma^2\beta/\alpha),\\ 
W_{\upsilon} &\sim \mathcal N(0, 2\sigma^4).
\end{split}
\end{align}
Here, the matrix $\boldsymbol\Sigma_{\alpha, \beta}$ is taken from~\eqref{eq:Sigma-alpha-beta}, the two random variables $Z \sim \mathcal N(0, 1)$ and $\xi_\alpha$ are independent, and $\xi_\alpha$ is a stable subordinator defined via its Laplace transform: 

\begin{equation}
\label{stable.sub.lt}
\mathbb E \left[e^{-u \xi_{\alpha}}\right] = \exp\left[-\Gamma(1 - \alpha)u^{\alpha}\right],\quad u \ge 0.
\end{equation}
\label{thm:<1}
\end{theorem}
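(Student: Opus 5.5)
The plan is to condition on the mixing sample $X_1,\ldots,X_n$ and exploit the exact Gaussian structure from Theorem~\ref{thm:unbiased}. There are three steps: (i) reduce every coordinate of ${\bf D}_n(\hat{\boldsymbol\theta}_n-\boldsymbol\theta)^\top$ to functionals of a few partial sums of the $X_i$ plus independent Gaussian/chi-square noise; (ii) prove a joint limit theorem for those partial sums, with a stable part and an asymptotically independent Gaussian part; (iii) conclude by the continuous mapping theorem and the delta method.

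\emph{Step 1 (reduction).} The estimators $(\hat\alpha_n,\hat\beta_n)$ in \eqref{gam.mles} depend only on the two averages $(\overline{X},\overline{\log X})$. The gamma family has finite Fisher information $\mathcal G(\alpha,\beta)$ for every $\alpha>0$, and $\log X$ has all moments. So $\sqrt n(\overline X-\alpha/\beta,\ \overline{\log X}-\psi(\alpha)+\log\beta)$ satisfies the ordinary CLT, and the delta method applied to $w^{-1}$ gives $(W_\alpha,W_\beta)\sim\mathcal N_2(\mathbf 0,\boldsymbol\Sigma_{\alpha,\beta})$ exactly as in Corollary~\ref{cor:alpha-beta}.

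For the regression part, Theorem~\ref{thm:unbiased} gives the conditional law of $(\hat\delta_n,\hat\mu_n)$ given the $X_i$ as $\mathcal N_2$ with covariance $\sigma^2 M_n^{-1}$, where $M_n=\begin{bmatrix}S_{-1}&n\\ n&S_1\end{bmatrix}$, $S_{-1}=\sum X_k^{-1}$ and $S_1=\sum X_k$. By the standard OLS theory for \eqref{eq:reg}, $n\hat\upsilon_n/\sigma^2\sim\chi^2_{n-2}$ is independent of $(\hat\delta_n,\hat\mu_n)$ and of the $X_i$. Hence $W_\upsilon\sim\mathcal N(0,2\sigma^4)$ follows from the chi-square CLT, and it is independent of everything else.

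Writing $\det M_n=S_{-1}S_1-n^2$, the conditional variances are
\[
\mathrm{Var}(\hat\delta_n\mid X)=\frac{\sigma^2S_1}{\det M_n},\qquad
\mathrm{Var}(\hat\mu_n\mid X)=\frac{\sigma^2S_{-1}}{\det M_n},\qquad
\mathrm{Corr}=-\frac{n}{\sqrt{S_{-1}S_1}}.
\]

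\emph{Step 2 (stable limit).} For $\alpha<1$, as $t\to\infty$,
\[
P(X^{-1}>t)\sim\frac{\beta^\alpha t^{-\alpha}}{\Gamma(\alpha+1)}.
\]
The subordinator in \eqref{stable.sub.lt} has L\'evy measure $\alpha x^{-\alpha-1}\,dx$, whose tail is $x^{-\alpha}$. The generalized CLT for positive summands in the domain of attraction (no centering needed since $\alpha<1$) then gives
\[
\frac{S_{-1}}{c\,n^{1/\alpha}}\stackrel{d}{\to}\xi_\alpha,\qquad c=\beta\,\Gamma(\alpha+1)^{-1/\alpha}.
\]
Also $S_1/n\to\alpha/\beta$ almost surely. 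Consequently:
\begin{itemize}
\item $n/\sqrt{S_{-1}S_1}=O_P\bigl(n^{1/2-1/(2\alpha)}\bigr)\to0$;
\item $n^{1/\alpha}\,\mathrm{Var}(\hat\delta_n\mid X)=\sigma^2\bigl(S_{-1}/n^{1/\alpha}\bigr)^{-1}(1+o_P(1))$;
\item $n\,\mathrm{Var}(\hat\mu_n\mid X)\to\sigma^2\beta/\alpha$.
\end{itemize}
Computing the conditional characteristic function of $\bigl(n^{1/(2\alpha)}(\hat\delta_n-\delta),\sqrt n(\hat\mu_n-\mu)\bigr)$ and taking expectations by bounded convergence yields
\[
E\Bigl[\exp\Bigl(-\tfrac12\sigma^2 s^2(c\,\xi_\alpha)^{-1}\Bigr)\Bigr]\cdot\exp\Bigl(-\tfrac12\sigma^2\beta t^2/\alpha\Bigr).
\]
This is precisely the law of $(W_\delta,W_\mu)$ with $W_\delta=\sigma c^{-1/2}\xi_\alpha^{-1/2}Z$, and $\sigma c^{-1/2}=\sigma\beta^{-1/2}\Gamma(\alpha+1)^{1/(2\alpha)}$.

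\emph{Step 3 (joint convergence; main obstacle).} The delicate point is joint convergence of
\[
\Bigl(\sqrt n(\overline X-EX,\ \overline{\log X}-E\log X),\ S_{-1}/n^{1/\alpha}\Bigr)
\]
to a Gaussian vector independent of $c\,\xi_\alpha$. Once this holds, conditioning on the $X$'s as above transfers the independence to $(W_\alpha,W_\beta)$ versus $(W_\delta,W_\mu)$. I would prove it via the mixed transform
\[
\Bigl(E\exp\bigl(-uX^{-1}n^{-1/\alpha}+\mathrm i\,v\cdot(g(X)-Eg)/\sqrt n\bigr)\Bigr)^n,\qquad g(x)=(x,\log x).
\]
Write the single-summand factor as $1-A_n+B_n+R_n$. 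Here $A_n=E[1-e^{-uX^{-1}n^{-1/\alpha}}]$ gives the stable exponent, $B_n$ gives the Gaussian exponent, and $R_n$ is the cross term.

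The cross term is controlled by
\[
|R_n|\le E\Bigl[\min\bigl(1,uX^{-1}n^{-1/\alpha}\bigr)\cdot|v|\,\bigl|g(X)-Eg\bigr|/\sqrt n\Bigr].
\]
Since the first factor effectively lives on $\{X\lesssim n^{-1/\alpha}\}$, where $|\log X|=O(\log n)$, this bound should be $O(n^{-1}\log n/\sqrt n)$, so that $nR_n\to0$. The limit then factorizes, which gives independence. Combining Steps 1–3 with Slutsky's lemma produces \eqref{stef1}.
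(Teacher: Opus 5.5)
Your proposal is correct, and its skeleton is the paper's. You condition on $X_1,\dots,X_n$ and use the conditional law $\mathcal N_2\bigl((\delta,\mu)^\top,\sigma^2 M_n^{-1}\bigr)$ from Theorem~\ref{thm:unbiased}. You note that after rescaling, the off-diagonal entry $n^{-(1-\alpha)/(2\alpha)}$ of the precision matrix vanishes, which reduces everything to the limits of $n^{-1/\alpha}S_{-1}$ and $n^{-1}S_1$.

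You depart from the paper in two places. First, for the stable limit (Lemma~\ref{ope.con.lem}) the paper writes the Laplace transform of $1/X$ exactly as $2(t\beta)^{\alpha/2}K_\alpha(2\sqrt{t\beta})/\Gamma(\alpha)$. It then gets the limit of $n\log\psi(tn^{-1/\alpha})$ from the small-argument expansion of $x^\alpha K_\alpha(x)$ (Lemma~\ref{lemma:Bessel}). You instead read off the regularly varying tail $P(X^{-1}>t)\sim\beta^\alpha t^{-\alpha}/\Gamma(\alpha+1)$ and invoke the classical domain-of-attraction theorem. Your route is shorter and more general, since it uses only that the density of $X$ behaves like a constant times $x^{\alpha-1}$ at the origin. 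The paper's route is self-contained special-function calculus. Both give the same normalization $\beta\Gamma(\alpha+1)^{-1/\alpha}$.

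Second, and more substantively, the paper says the asymptotic independence of $(\hat\alpha_n,\hat\beta_n)$ and $(\hat\delta_n,\hat\mu_n)$ ``follows from Theorem~\ref{thm:unbiased}''. For $\alpha<1$ that theorem alone does not give it. The rescaled conditional covariance has a \emph{random} limit, driven by the same $X_i$ that determine $(\hat\alpha_n,\hat\beta_n)$. One therefore needs joint convergence of $\sqrt n(\overline{X}-\mathbb E X,\overline{\log X}-\mathbb E\log X)$ and $n^{-1/\alpha}S_{-1}$ to independent limits. Your mixed Laplace--Fourier argument in Step~3 supplies exactly this missing piece. Likewise, your observation that $n\hat\upsilon_n/\sigma^2\sim\chi^2_{n-2}$ is \emph{exactly} independent of $(X_1,\dots,X_n,\hat\delta_n,\hat\mu_n)$ sharpens the paper's ``asymptotically independent''.

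Two small repairs in Step~3. The weight $\min(1,uX^{-1}n^{-1/\alpha})$ is not confined to $\{X\le n^{-1/\alpha}\}$: on $\{X>\epsilon\}$, with $\epsilon=un^{-1/\alpha}$, it equals $\epsilon/X$. Because $\alpha<1$, the resulting term $\epsilon\,\mathbb E[X^{-1}(1+|\log X|);X>\epsilon]$ is also $O(\epsilon^\alpha\log(1/\epsilon))=O(n^{-1}\log n)$. So your bound and $nR_n\to0$ survive, but this region must be integrated rather than dismissed. You should also state the continuity theorem that turns convergence of the mixed transforms into weak convergence. Tightness of the two marginals plus uniqueness of the mixed transform on $[0,\infty)\times\mathbb R^2$ suffices.
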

For the case $\alpha = 1$, we have a different scaling ($\sqrt{n\ln n}$ instead of $n^{1/(2\alpha)}$), and a different limiting distribution connected with $\hat{\delta}_n$, as detailed in the result below. 

\begin{theorem} 
\label{thm:=1}
For $\alpha = 1$ we have: 
\begin{equation}
\label{stefalpha=1}
{\bf D}_n(\hat{\boldsymbol \theta}_n - \boldsymbol \theta)^\top  \stackrel{d}{\to} 
{\bf W}^\top = (W_{\alpha}, W_{\beta}, W_{\delta}, W_{\mu}, W_{\upsilon})^\top,  \quad n \to \infty,
\end{equation}
where, instead of~\eqref{eq:diagonal-matrix}, this time the diagonal matrix ${\bf D}_n$ is given by 
$$
{\bf D}_n = \sqrt{n}\cdot \mathrm{diag}(1, 1, \sqrt{\ln n}, 1, 1)
$$
and, instead of~\eqref{eq:all-W}, the limiting vector $\mathbf{W}$ has four independent components:
\begin{align}
\label{eq:=1}
\begin{split}
(W_{\alpha}, W_{\beta})^\top &\sim \mathcal N_2(\mathbf{0}, \boldsymbol\Sigma_{1, \beta}),\\
W_{\delta} &\sim \mathcal N(0, \sigma^2/\beta),\\
W_{\mu} &\sim \mathcal N(0, \sigma^2\beta),\\
W_{\upsilon} &\sim \mathcal N(0, 2\sigma^4),
\end{split}
\end{align}
where $\boldsymbol\Sigma_{1, \beta}$ is the same matrix as in~\eqref{eq:Sigma-alpha-beta} with $\alpha = 1$. 
\end{theorem}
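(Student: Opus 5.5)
The plan is to exploit the conditional Gaussian structure from Theorem~\ref{thm:unbiased} and treat the three blocks $(\hat\alpha_n,\hat\beta_n)$, $(\hat\delta_n,\hat\mu_n)$ and $\hat\upsilon_n$ separately, then glue them together by conditioning on $X_1,\dots,X_n$.

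\emph{Step 1: the gamma block and the variance.} The pair $(\hat\alpha_n,\hat\beta_n)$ is the classical gamma MLE, which depends only on the $X_i$. Its $\sqrt n$-asymptotic normality with covariance $\boldsymbol\Sigma_{\alpha,\beta}=\mathcal G(\alpha,\beta)^{-1}$ holds for every $\alpha>0$, since all moments of $\log X$ and $X$ are finite. The delta method applied to \eqref{gam.mles} therefore gives the limit with $\alpha=1$.

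For $\hat\upsilon_n$, the transformed model \eqref{eq:reg} is a homoscedastic linear regression with design depending only on the $X_i$. Conditionally on the $X_i$, the standard regression facts behind Theorem~\ref{thm:unbiased} hold: $(n-2)s^2/\sigma^2\sim\chi^2_{n-2}$, and $s^2$ is independent of $(\hat\delta_n,\hat\mu_n)$. Because this conditional law does not depend on the $X_i$, $s^2$ is independent of $(X_1,\dots,X_n,\hat\delta_n,\hat\mu_n)$ jointly. The chi-square CLT and $\hat\upsilon_n=\frac{n-2}{n}s^2$ then give $\sqrt n(\hat\upsilon_n-\upsilon)\to\mathcal N(0,2\sigma^4)$.

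\emph{Step 2: the $(\delta,\mu)$ block.} Write $S_{-1}=\sum X_k^{-1}$ and $S_1=\sum X_k$, and set $\Delta_n=S_{-1}S_1-n^2$. By Theorem~\ref{thm:unbiased}, conditionally on the $X_i$ the vector $(\hat\delta_n-\delta,\hat\mu_n-\mu)$ is centered Gaussian with
$$
\mathrm{Var}\,\hat\delta_n=\sigma^2 S_1/\Delta_n,\quad \mathrm{Var}\,\hat\mu_n=\sigma^2S_{-1}/\Delta_n,\quad \mathrm{Cov}=-\sigma^2 n/\Delta_n .
$$
The key input is the weak law
$$
\frac{S_{-1}}{n\ln n}\stackrel{P}{\to}\beta ,
$$
together with $S_1/n\to 1/\beta$ almost surely. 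Granting these, $\Delta_n\sim n^2\ln n$. Scaling by $\mathbf D_n$:
\begin{itemize}
\item $n\ln n\cdot\mathrm{Var}\,\hat\delta_n\to\sigma^2/\beta$;
\item $n\cdot\mathrm{Var}\,\hat\mu_n\to\sigma^2\beta$;
\item the scaled covariance $n\sqrt{\ln n}\cdot n/\Delta_n\sim 1/\sqrt{\ln n}\to0$.
\end{itemize}
All three limits hold in probability, so the conditional characteristic function $\exp(-\tfrac12 u^\top\mathbf D_n\Sigma_n\mathbf D_n u)$ converges in probability to $\exp(-\tfrac12(u_1^2\sigma^2/\beta+u_2^2\sigma^2\beta))$.

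\emph{Step 3: joint convergence.} For the full vector, condition on the $X_i$. The joint characteristic function equals
$$
\mathbb E\bigl[e^{\mathrm i a^\top \sqrt n(\hat\alpha_n-\alpha,\hat\beta_n-\beta)}\,\phi_n(X)\bigr]\,\psi_n(c),
$$
where $\phi_n$ is the conditional Gaussian factor from Step 2 and $\psi_n$ is the characteristic function of the scaled $\hat\upsilon_n$. Since $|\phi_n|\le1$ and $\phi_n$ converges in probability to a deterministic limit, bounded convergence together with Step 1 factorizes the limit into the product of the four claimed laws. This yields the independence of the components.

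\emph{Main obstacle: the weak law for $S_{-1}$.} For $X\sim\mathrm{GAM}(1,\beta)$ we have $P(X^{-1}>t)=1-e^{-\beta/t}\sim\beta/t$, so $X^{-1}$ has infinite mean and a truncation argument is needed. I would truncate at $b_n=n\ln n$ and check three things:
\begin{itemize}
\item $nP(X^{-1}>b_n)\sim\beta/\ln n\to0$, so truncation changes the sum with vanishing probability;
\item $\mathbb E[X^{-1};X^{-1}\le b_n]=\beta\ln b_n+O(1)\sim\beta\ln n$, computed from the tail integral;
\item $\mathbb E[X^{-2};X^{-1}\le b_n]\le C b_n$, so the variance of the truncated sum divided by $(n\ln n)^2$ is $O(1/\ln n)\to0$.
\end{itemize}
Chebyshev's inequality then completes the weak law.
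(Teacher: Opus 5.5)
Your proposal is correct. Its skeleton is the same as the paper's: condition on $X_1,\dots,X_n$ via Theorem~\ref{thm:unbiased}, rescale the conditional covariance, and reduce everything to $S_{-1}/(n\ln n)\to\beta$ and $S_1/n\to 1/\beta$. The difference is in how you prove the key limit.

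\textbf{The paper's route.} It proves $\frac{1}{n\ln n}\sum X_i^{-1}\to\beta$ by computing the Laplace transform of $1/X$ in closed form, $\mathbb E\, e^{-t/X}=2\sqrt{t\beta}\,K_1\bigl(2\sqrt{t\beta}\bigr)$. It then uses the small-argument expansion $\ln\bigl(xK_1(x)\bigr)\sim\frac12 x^2\ln x$ from Lemma~\ref{lemma:Bessel} to get $\ln\phi_n(t)\to-\beta t$. This route is specific to the gamma family. Its advantage is that it reuses exactly the Bessel machinery needed for $\alpha<1$, where the limit is a nondegenerate stable law and a weak law would not suffice.

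\textbf{Your route.} You use a truncated weak law of large numbers that needs only the tail $P(X^{-1}>t)\sim\beta/t$. Your three estimates at $b_n=n\ln n$ are right: the truncation probability $\beta/\ln n$, the truncated mean $\beta\ln b_n+O(1)$, and the second-moment bound $O(b_n)$. This is elementary and avoids special functions. It also shows more generally that $S_{-1}/(n\ln n)\to f_X(0+)$ for any mixing law whose density has a finite positive limit at the origin.

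\textbf{Joint convergence.} Your Step~3 makes explicit what the paper only asserts. First, $s^2$ is independent of $(X_1,\dots,X_n,\hat\delta_n,\hat\mu_n)$. Second, the conditional Gaussian characteristic function converges in probability to a deterministic limit, so it decouples from $\sqrt n(\hat\alpha_n-\alpha,\hat\beta_n-\beta)$. Together these give the joint convergence and the asymptotic independence of the four components.

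Finally, you compute the entries of the rescaled conditional covariance directly. This sidesteps a slip in the paper's displayed scaled covariance for $\alpha=1$, which omits the matrix inverse.
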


\begin{Remark} 
In all of the above results, the MLE $\hat{\upsilon}_n$ may be replaced by the classical estimator $s^2$ defined in~\eqref{s.square}. 
\end{Remark}

\begin{Remark} We emphasize that, in all three cases, the estimators $(\hat{\alpha}_n, \hat{\beta}_n)$, $(\hat{\delta}_n, \hat{\mu}_n)$, and $\hat{\upsilon}_n$ are asymptotically independent. Moreover, in the nonstandard case $\alpha \leq 1$, the estimators $\hat{\delta}_n$ and $\hat{\mu}_n$ are also asymptotically independent. In contrast, in the regular case $\alpha > 1$, these two estimators are not asymptotically independent, as the corresponding off-diagonal element of the limiting covariance matrix is nonzero. Finally, for $\alpha \geq 1$, the limiting distribution is multivariate Gaussian, whereas for $\alpha < 1$ it is not, as it involves a stable subordinator.
\end{Remark}
%
\begin{Remark}
The limiting variance $\sigma^2_{\infty,\mu}$ of the (asymptotically normal) estimator $\hat{\mu}_n$ depends nonlinearly on $\alpha$. For $\alpha > 1$, it follows from Corollary~\ref{cor:delta-mu} as the second diagonal element of the limiting covariance matrix~\eqref{eq:Sigma-delta-mu}. For $\alpha = 1$, it is obtained from Theorem~\ref{thm:=1} via the third relation in~\eqref{eq:=1}. Finally, for $\alpha < 1$, it follows from Theorem~\ref{thm:<1}, again from the third relation in~\eqref{eq:all-W}. Collecting these cases, we obtain
\[
\sigma^2_{\infty,\mu} = \frac{\sigma^2 \beta}{\min(\alpha,1)}.
\]
An intuitive explanation is as follows. The joint finite-sample distribution of $(\hat{\delta}_n,\hat{\mu}_n)$ is given in Theorem~\ref{thm:unbiased}, from which the asymptotic behavior of $\hat{\mu}_n$ can be deduced as $n \to \infty$. For $\alpha > 1$, both diagonal elements of the inverse covariance matrix are of order $n$, and thus the limiting distribution of $(\hat{\delta}_n,\hat{\mu}_n)$ is not of product form. In contrast, for $\alpha \le 1$, the term $\sum_{i=1}^n X_i^{-1}$ grows faster than $n$, which leads to asymptotic independence of $\hat{\delta}_n$ and $\hat{\mu}_n$. Consequently, the limiting variance $\sigma^2_{\infty,\mu}$ is determined independently of the asymptotic behavior of $\hat{\delta}_n$.
\end{Remark}

There are many well-documented cases in the literature where maximum likelihood estimators do not exhibit the usual asymptotic normality. In particular, such behavior may arise when the support of the distribution depends on the unknown parameters. A classical example is the uniform distribution on $[0,\theta]$, where $\theta>0$ is unknown. In this case, the MLE of $\theta$ is the sample maximum $X_{(n)}$, and it is well known that $n\bigl(\theta - X_{(n)}\bigr)$  converges in distribution to an exponential random variable. Thus, the convergence rate is $n$, rather than the usual $\sqrt{n}$. Another example arises in location families of the form $g(x;\mu) = f(|x-\mu|)/2$, where $x, \mu \in \mathbb{R}$ and $f(\cdot)$ is a PDF on $\mathbb R_+$ with $f(0+)=\infty$. In this case, the MLE of $\mu$ may fail to exist; see \cite{Hossain,Wallis}. Indeed, the density $g(\cdot; \mu)$ has a singularity at $x=\mu$, which leads to irregular behavior. Related phenomena are discussed in \cite{Infinite}. In the BGGL model, we observe a similar effect for $\alpha<1$: the convergence rate deviates from $\sqrt{n}$, and the limiting distribution is no longer Gaussian.

\section{Simulation} 
\label{sec:simulation}
To illustrate the behavior of the maximum likelihood estimators (MLEs) for the parameters $\delta$, $\sigma$, and $\mu$, we conducted a simulation study. Samples of size $n=50$ and $n=500$ were generated by first simulating Gamma and standard normal random variables, and then constructing generalized asymmetric Laplace (GAL) observations using fixed parameter values of $\delta$, $\sigma$, and $\mu$. For each generated dataset, the MLEs of the parameters were computed. This procedure was repeated $5000$ times for each sample size.

To assess the performance of the estimators, we report, for each parameter, the empirical mean, variance, and root mean squared error (RMSE) based on the simulated samples.

We consider four representative values of the shape parameter $\alpha$: $\alpha=1$, corresponding to the exponential distribution; $\alpha=0.25$, where the Gamma density is unbounded at zero; $\alpha=2$, representing an intermediate case; and $\alpha=5$, for which the distribution is closer to normality. In all cases, the scale parameter is fixed at $\beta=1$.

The simulation results are summarized in Table~\ref{table:sim}. Overall, the estimators of $\mu$, $\delta$, and $\sigma^2=\upsilon$ perform very well for small values of $\alpha$, which is consistent with the classical setting of ordinary least squares regression. However, for $\alpha=2$ and especially for $\alpha=5$, the RMSE associated with $\delta$ becomes noticeably larger.

These findings are consistent with the theoretical results presented in Section~5. In particular, Theorem~\ref{thm:asymp-normal} establishes a $\sqrt{n}$ convergence rate for $\alpha>1$, while Theorems~\ref{thm:<1} and~\ref{thm:=1} show that for $\alpha \in (0,1)$ and $\alpha=1$, respectively, the convergence rate is faster than $\sqrt{n}$.

\begin{table}[htbp]
\centering
\begin{tabular}{|c|c|c|c|c|}
\hline
\multicolumn{5}{|c|}{Sample size $n = 50$, shape $\alpha = 1$} \\
\hline
 & Actual Value & Mean Estimated Value & Variance  & RMSE \\
\hline
$\delta$ & 1& .9993& .0214& .1105\\
\hline
$\sigma$ & 2& 1.9500& .0398& .1645\\
\hline
$\mu$ &3& 2.9998& .1062& .2581\\
\hline
\end{tabular}
\begin{tabular}{|c|c|c|c|c|}
\hline
\multicolumn{5}{|c|}{Sample size $n = 500$, shape $\alpha = 1$} \\
\hline
 & Actual Value & Mean Estimated Value & Variance & RMSE \\
\hline
$\delta$ & 1& 1.0009& .0012& .0275\\
\hline
$\sigma$ & 2& 1.9956& .0039& .0505\\
\hline
$\mu$ & 3& 3.0010& .0094& .0770\\
\hline
\end{tabular}
\begin{tabular}{|c|c|c|c|c|}
\hline
\multicolumn{5}{|c|}{Sample size $n = 50$, shape $\alpha = 0.25$} \\
\hline
 & Actual Value & Mean Estimated Value & Variance & RMSE \\
\hline
$\delta$ & 0 & 0.0000 & 0.0000 & 0.0004 \\
\hline
$\sigma$ & 1 & 0.9759 & 0.0103 & 0.0840 \\
\hline
$\mu$ & 0 & -0.0006 & 0.0872 & 0.2319 \\
\hline
\end{tabular}
\begin{tabular}{|c|c|c|c|c|}
\hline
\multicolumn{5}{|c|}{Sample size $n = 500$, shape $\alpha = 0.25$} \\
\hline
 & Actual Value & Mean Estimated Value & Variance & RMSE \\
\hline
$\delta$ & 0 & 0.0000 & 0.0000 & 0.0000 \\
\hline
$\sigma$ & 1 & 0.9973 & 0.0010 & 0.0251 \\
\hline
$\mu$ & 0 & 0.0001 & 0.0081 & 0.0722 \\
\hline
\end{tabular}
\begin{tabular}{|c|c|c|c|c|}
\hline
\multicolumn{5}{|c|}{Sample size $n = 50$, shape $\alpha = 2$} \\
\hline
 & Actual Value & Mean Estimated Value & Variance & RMSE \\
\hline
$\delta$ & 0 & -0.0006 & 0.0488 & 0.1728 \\
\hline
$\sigma$ & 1 & 0.9739 & 0.0098 & 0.0823 \\
\hline
$\mu$ & 0 & 0.0031 & 0.0219 & 0.1168 \\
\hline
\end{tabular}
\begin{tabular}{|c|c|c|c|c|}
\hline
\multicolumn{5}{|c|}{Sample size $n = 500$, shape $\alpha = 2$} \\
\hline
 & Actual Value & Mean Estimated Value & Variance & RMSE \\
\hline
$\delta$ & 0 & 0.0004 & 0.0041 & 0.0509 \\
\hline
$\sigma$ & 1 & 0.9972 & 0.0010 & 0.0257 \\
\hline
$\mu$ & 0 & -0.0001 & 0.0021 & 0.0362 \\
\hline
\end{tabular}
\begin{tabular}{|c|c|c|c|c|}
\hline
\multicolumn{5}{|c|}{Sample size $n = 50$, shape $\alpha = 5$} \\
\hline
 & Actual Value & Mean Estimated Value & Variance & RMSE \\
\hline
$\delta$ & 0 & -0.0039 & 0.4507 & 0.5321 \\
\hline
$\sigma$ & 1 & 0.9743 & 0.0098 & 0.0822 \\
\hline
$\mu$ & 0 & 0.0004 & 0.0222 & 0.1179 \\
\hline
\end{tabular}
\begin{tabular}{|c|c|c|c|c|}
\hline
\multicolumn{5}{|c|}{Sample size $n = 500$, shape $\alpha = 5$} \\
\hline
 & Actual Value & Mean Estimated Value & Variance & RMSE \\
\hline
$\delta$ & 0 & 0.0032 & 0.0399 & 0.1599 \\
\hline
$\sigma$ & 1 & 0.9977 & 0.0010 & 0.0254 \\
\hline
$\mu$ & 0 & -0.0011 & 0.0020 & 0.0358 \\
\hline
\end{tabular}
\vspace{0.5cm}
\caption{Simulation study: Distribution of MLE for $\delta, \mu, \sigma$ for sample size $n = 50, 500$ and shape parameters $\alpha = 1, 0.25, 2, 5$ with $\beta=1$.}
\label{table:sim}
\end{table}

\section{Financial Modeling}
\label{bggl.financial.sec}

\subsection{Background} 
Consider a stock market index, say Standard \& Poor 500 (S\&P 500) weekly data. Let $S(t)$ be end-of-week closing data. The {\it price returns} are defined as log changes in price: $Y(t) = \ln S(t) - \ln S(t-1) = \ln\frac{S(t)}{S(t-1)}$. For some indices, we can observe its volatility index measured daily. Let $V(t)$ be the value of this index: weekly average of daily closing prices. It plays the role of the standard deviation of $Y(t)$. Measuring volatility attempts to capture the observation that this standard deviation is not constant over time. During crises, volatility is high, but during normal times, it is low. Eras of high volatility alternate with eras of low volatility. In other words, volatility has {\it persistence:} If it is high this week it is likely to stay high next week. If it is low this week it is likely to stay low next week. To model this, one can use autoregression of order 1. Since $V$ is always positive, it is natural to apply the logarithmic transform,
\begin{equation}
\label{eq:AR}
\ln V(t) = a + b\ln V(t-1) + \zeta(t),
\end{equation}
with IID $\zeta(t)$ with $\mathbb E[\zeta(t)] = 0$. Having interpreted $V(t)$ as the standard deviation of returns $Y(t)$, we can write $Y(t) = V(t)\xi(t)$ for IID $\xi(t)$. More generally, we can model
\begin{equation}
\label{eq:nreturns}
Y(t) = c + hV(t) + V(t)\eta(t),
\end{equation}
where $\eta(t)$ are IID with mean zero. Fitting~\eqref{eq:AR} and~\eqref{eq:nreturns} is the content of the article \cite{VIX}, for monthly data 1986--2024. In turns out that $\eta$ is well-explained by a normal distribution, but $\zeta$ is not. A more appropriate distribution for $\zeta$ is, in fact, variance-gamma (GAL). But even this fitting is challenging, see \cite[Section 3]{VIX}. Also, $\eta$ and $\zeta$ are correlated. 
One could possibly model $(\eta, \zeta)$ using bivariate GAL, since normal distribution is a limiting case of a one-dimensional GAL. This is left for future research. The model~\eqref{eq:AR} and~\eqref{eq:nreturns} looks similar to a classic Heston stochastic volatility (SV) model with two innovation sequences, as opposed to GARCH models with only one innovation sequence. For background on SV, see the monograph \cite{SV-book}. But the SV models consider volatility to be hidden and only returns to be observed. This is why in classic SV models volatility needs to be inferred using hidden Markov models or other complicated statistical techniques. In the model given by~\eqref{eq:AR} - \eqref{eq:nreturns} we observe both the returns and the volatility. Thus, we can estimate $a, b$ and find the distribution of $\zeta(t)$ separately from $c, h$ and $\eta(t)$. 

\begin{figure}
\centering
\subfloat[$X$ vs Gamma]{\includegraphics[width=7cm]{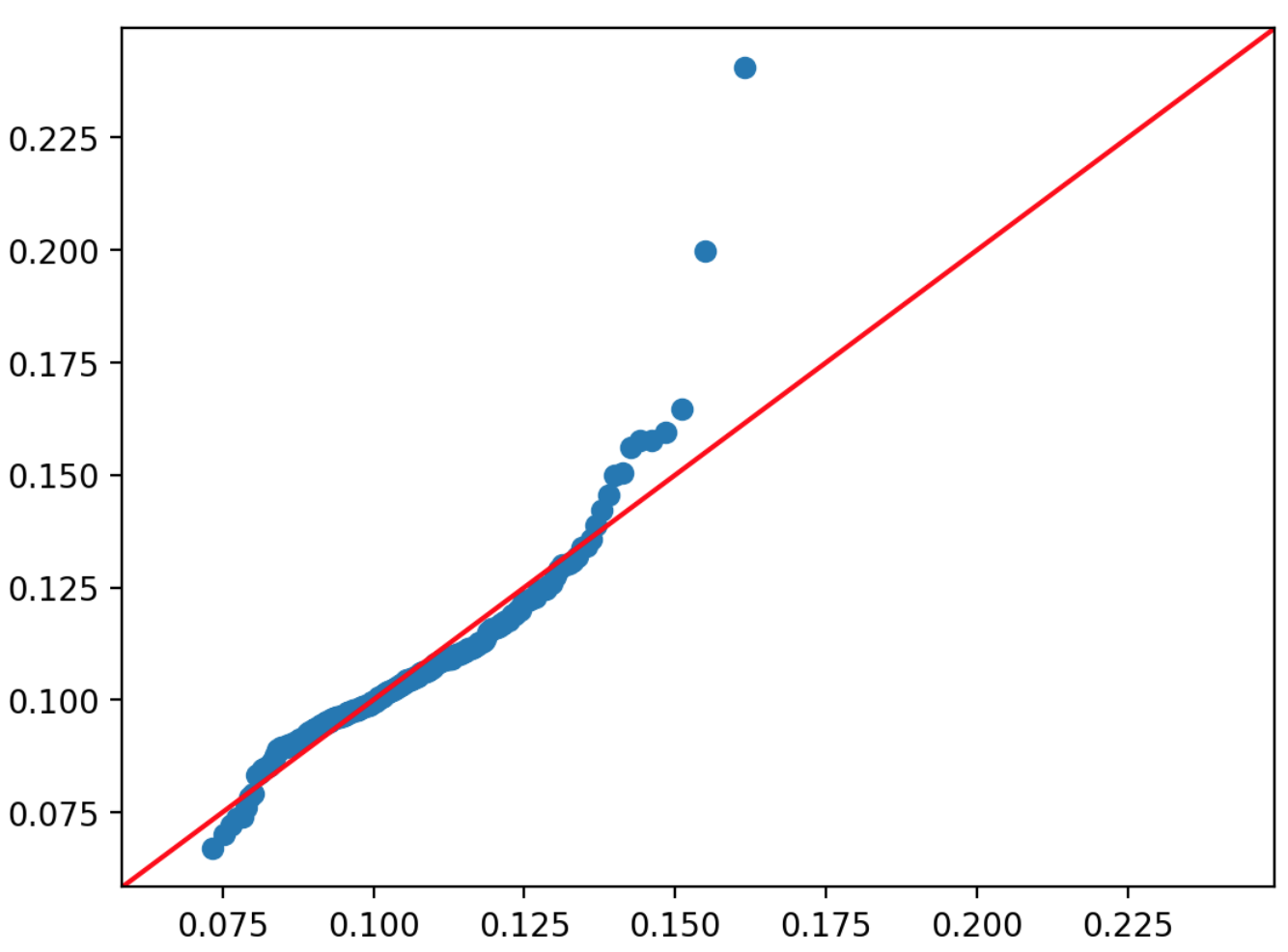}}
\subfloat[$Z$ vs Normal]{\includegraphics[width=7cm]{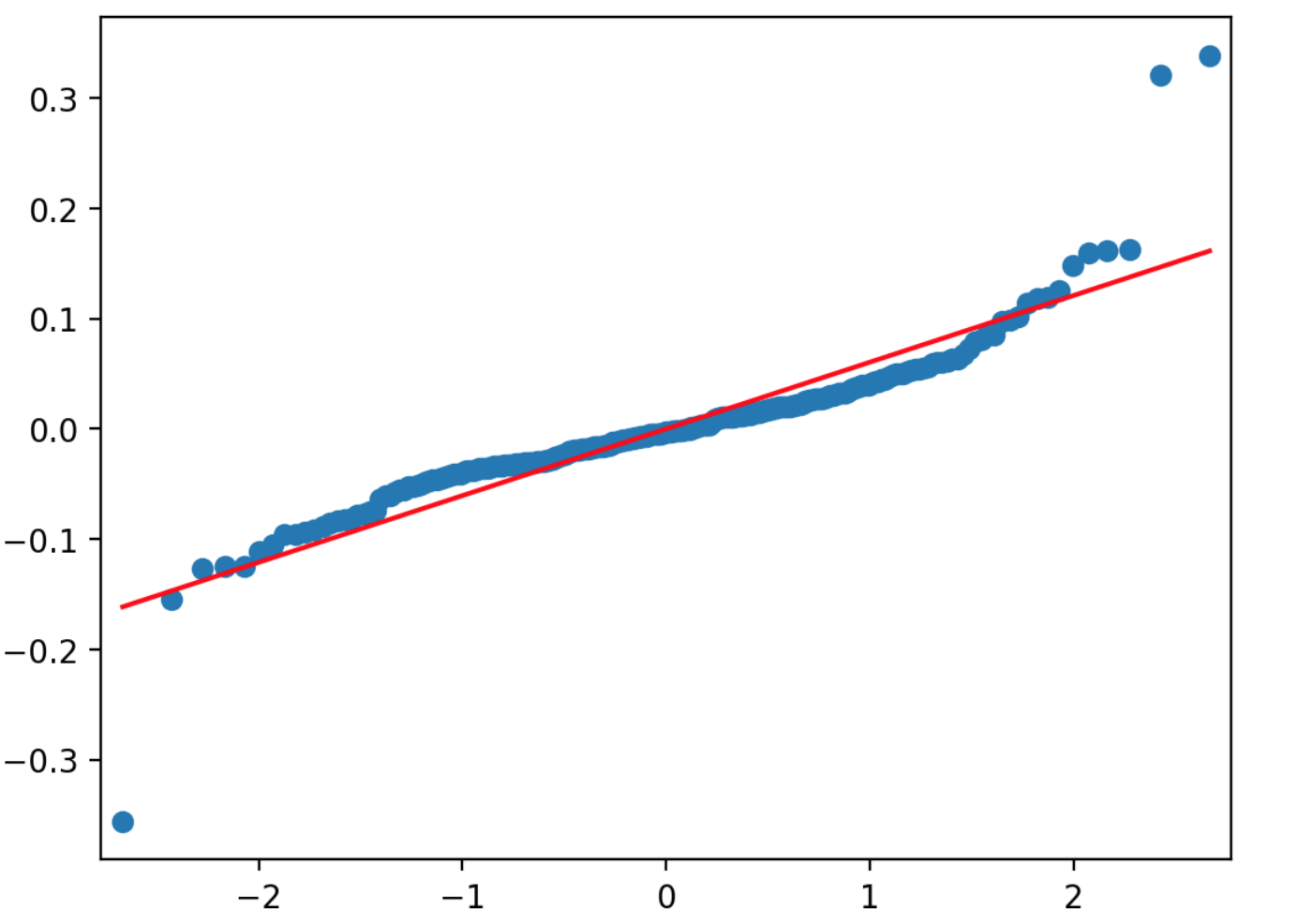}}
\vspace{0.5cm}
\caption{The quantile-quantile plots of $X$ and $Z$ from financial data of S\&P 500 returns and volatility versus their theoretical distributions.}
\label{fig:QQ}
\end{figure}

\subsection{Our contributions} 
Here, we take another approach. We insert stochastic volatility $V$ into returns data $Y$ itself. However, we cannot work directly with $(Y(t), V(t))$ since $V(t)$ is dependent upon $V(t-1)$. Let us take $\zeta(t)$ instead of $V(t)$. To make it into a Gamma random variable, we use the exponential function, since $\zeta(t)$ can be negative. Define 
\begin{equation}
\label{eq:X-vol}
X(t) = \exp(\zeta(t)) = e^{-a}\cdot V(t)\cdot V^{-b}(t-1).
\end{equation}
Let us model $(X(t), Y(t))$ as BGGL. As in \cite{VIX}, we can directly observe weekly returns $Y$ and weekly volatility $V$, and therefore compute $X$ from~\eqref{eq:X-vol}. 

We consider three indices for stock markets in the United States of America: 
\begin{itemize}
\item Standard \& Poor 500 (S\&P 500, a broad index of stocks for 500 large companies, capitalization-weighted);
\item Dow Jones Industrial Average (DJIA, an index 30 large stocks, traditional-oriented, price-weighted); 
\item NASDAQ 100 (another capitalization-weighted index, technology-oriented). 
\end{itemize}
Each of these three indices has its own volatility index: VIX for S\&P 500; VXD for DJIA; VXN for NASDAQ 100. We examined weekly data from September 1, 2019 to September 1, 2024, taken from Yahoo Finance. The results of maximum likelihood estimation, discussed  in Section 3, are shown in Table~\ref{table:finance-results}. 
\begin{table}
\begin{tabular}{|c|c|c|c|c|c|c|}
\hline
Stock Index & Volatility Index &  $\alpha$ & $\beta$ & $\delta$ & $\mu$ & $\sigma$ \\
\hline
S\&P 500 & VIX & 1.1399 & 8.3909 & 0.0219 & -0.0009 & 0.0048 \\
\hline
DJIA & VXD & 1.2639 & 12.864 & 0.01112 & -0.00029 & 0.00401 \\
\hline
NASDAQ 100 & VXN & 1.82288 & 7.81255 & 0.02777 & -0.00078 & 0.00502 \\
\hline
\end{tabular}
\vspace{0.5cm}
\caption{MLE for Stock Index Returns and Volatility}
\label{table:finance-results}
\end{table}
How good is the fit? There are three ways to judge. First, the fit of $X$ versus Gamma with fitted parameters $\alpha, \beta$. Second, the fit of $Y$ versus GAL with fitted parameters, as in~\eqref{eq:VG}. Third, the fit of $Z$: extracted regression residuals  (computed from $X$ and $Y$ and fitted parameters) versus Gaussian. 

We assess the quality of each fit graphically using quantile - quantile (QQ) plots. Overall, the fits appear satisfactory, with only a few deviations attributable to outlying observations. Figure~\ref{fig:QQ} illustrates the results for the S\&P 500 data; similar QQ plots are obtained for the DJIA and NASDAQ indices.

\subsection{Interpretation} 
We stress that conceptually this BGGV model is similar to~\eqref{eq:AR} and~\eqref{eq:nreturns}. It has returns on the left-hand side of an equation, and a function $X$ of volatility times a standard normal random variable on the right-hand side of this equation. This is similar to~\eqref{eq:nreturns}. The function $X$ of volatility is not exactly volatility at a given time. Rather, it depends on the volatility at times $t$ and $t-1$. More precisely, $X(t)$ captures how large is this week's volatility compared to what you would expect given last week's volatility, according to this autoregression~\eqref{eq:AR}. One can view this $X$ as unexpected or surprise volatility. We are able to extract IID terms from volatility to model them together with returns by a simple bivariate distribution instead of a time series model like \cite{VIX} or GARCH or SV. 

To conclude, here we incorporated stochastic volatility inside returns and modeled the latter using IID. In fact, we switched back from stochastic volatility models to modeling returns as IID and index levels as geometric random walk (but with non-normal increments).

\section{Appendix} 
\label{bggl.appen.sec}
\noindent In this section we present longer technical proofs and auxiliary results. 

\subsection{A technical lemma} Fix numbers $x_1, \ldots, x_n > 0$ and $y_1, \ldots, y_n$ with $\overline{x} = (x_1 + \cdots + x_n)/n$ and $\overline{y} = (y_1 + \cdots + y_n)/n$. 
\begin{lemma}
\label{fun.w.lem}
We have:
\begin{equation}
\label{bgal.fun.w}
w(\delta) =  \frac{(\overline{y} - \delta)^2}{\overline{x}} - 
\frac{1}{n}  \sum_{i=1}^n \frac{(y_i-\delta)^2}{x_i} \le 0,\quad \delta \ge 0.
\end{equation}
\end{lemma}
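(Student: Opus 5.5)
This is a Cauchy--Schwarz inequality; the restriction $\delta \ge 0$ plays no role, and the bound holds for every real $\delta$. Fix $\delta$ and set $u_i = y_i - \delta$, so that $\overline{y} - \delta = \frac1n\sum_{i=1}^n u_i$. The claim $w(\delta)\le 0$ is then equivalent (multiplying by $\overline{x}>0$) to
\begin{equation*}
\Bigl(\frac1n\sum_{i=1}^n u_i\Bigr)^2 \le \Bigl(\frac1n\sum_{i=1}^n x_i\Bigr)\Bigl(\frac1n\sum_{i=1}^n \frac{u_i^2}{x_i}\Bigr).
\end{equation*}

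To prove it, write $u_i = \sqrt{x_i}\cdot (u_i/\sqrt{x_i})$, which is legitimate since all $x_i>0$. The Cauchy--Schwarz inequality for the vectors $(\sqrt{x_i})_{i}$ and $(u_i/\sqrt{x_i})_{i}$ gives $\bigl(\sum u_i\bigr)^2 \le \bigl(\sum x_i\bigr)\bigl(\sum u_i^2/x_i\bigr)$. Dividing by $n^2\overline{x}$ yields exactly $(\overline{y}-\delta)^2/\overline{x} \le \frac1n\sum (y_i-\delta)^2/x_i$, which is the statement of Lemma~\ref{fun.w.lem}.

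I will also record the equality case, since it is what the subsequent uniqueness argument in Theorem~\ref{bggl.mle.theo} needs. Equality holds if and only if $u_i/\sqrt{x_i}$ is proportional to $\sqrt{x_i}$, i.e.\ $y_i - \delta = c\,x_i$ for some constant $c$ and all $i$. Thus $w(\delta)<0$ unless the points $(x_i,y_i)$ lie on a line with intercept $\delta$; with continuous data this happens with probability zero. An equivalent route, which I would mention as a check, is the weighted Jensen or variance identity $\frac1n\sum x_i (u_i/x_i - \bar c)^2 \ge 0$ with $\bar c = \sum u_i/\sum x_i$; expanding it gives the same inequality.

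There is no real obstacle here. The only care needed is positivity of the $x_i$, which guarantees that the square roots and the division by $\overline{x}$ are valid.
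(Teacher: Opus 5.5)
Your proof is correct and is the paper's argument: Cauchy--Schwarz with $a_i=\sqrt{x_i}$ and $b_i=(y_i-\delta)/\sqrt{x_i}$, and you are right that the restriction $\delta\ge 0$ plays no role. One caution on your equality-case remark: ``probability zero'' is true for fixed $\delta$, but at the data-dependent point $\hat{\delta}_n$ equality always holds when $n=2$, since two points always lie on a line, as the paper notes in its final subsection.
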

\begin{proof}
Observe that $\overline{y}_n - \delta = \frac{1}{n}  \sum_{i=1}^n (y_i-\delta)$, 
so  the inequality $w(\delta)\leq 0$ is equivalent to 
\begin{equation}
\label{kat2}
\Bigl[ \sum_{i=1}^n (y_i-\delta)\Bigr]^2 \leq  \sum_{i=1}^n x_i  \sum_{i=1}^n \frac{(y_i-\delta)^2}{x_i}
\end{equation}
However, this follows from Cauchy-Schwarz inequality, $(\sum_{i=1}^n a_ib_i)^2 \leq  \sum_{i=1}^n a_i^2  \sum_{i=1}^n b_i^2$,  if we identify $a_i=\sqrt{x_i}$ and $b_i=(y_i-\delta)/\sqrt{x_i}$. 
\end{proof}

\subsection{Proof of Proposition \ref{mix.ent.prop}}
\noindent Without loss of generality, we shall assume that $X$ has continuous distribution with the PDF $f_X(\cdot)$. By the mixture representation (\ref{eq:VG}), the joint PDF of $(X,Y)$ is given by 
\begin{equation}
\label{eq:conditional-density}
f(x,y) = f(y \mid x)f_X(x),
\end{equation}
where the conditional density of $Y$ given $X$ is normal:
\begin{equation}
\label{kruk1}
f(y \mid x) =  \frac{1}{\sqrt{2\pi x}}\frac{1}{\sigma} \exp\left[- \frac{(y-\delta-\mu x)^2}{2\sigma^2 x} \right], \,\,\, x > 0, \, y \in \mathbb R.
\end{equation}
This is the PDF of the conditional normal distribution in (\ref{eq:mixture}). By definition of the Shannon entropy, 
\begin{equation}
\label{eq:Shannon-formula}
H((X,Y))  =  - \int_0^\infty \int_{-\infty}^\infty \log[f(x,y)]f(x,y)\, \mathrm{d}y\, \mathrm{d}x.
\end{equation}
Applying~\eqref{eq:conditional-density} to~\eqref{eq:Shannon-formula}, we get:
\begin{align*}
H((X, Y)) &= - \int_0^\infty \log[f_X(x)]f_X(x) \int_{-\infty}^\infty f(y|x)\,\mathrm{d}y\, \mathrm{d}x  \\ & - \int_0^\infty f_X(x) \int_{-\infty}^\infty \log[f(y|x)] f(y|x)\,\mathrm{d}x.
\end{align*}
Apply Shannon entropy formula $H(Y\mid X = x)$ to conditional distribution of $Y$ given $X$:
\begin{align*}
H((X, Y)) &= - \int_0^\infty \log[f_X(x)]f_X(x) \int_{-\infty}^\infty f(y|x)\,\mathrm{d}y\, \mathrm{d}x \\ &  + \int_0^\infty f_X(x) H(Y\mid X=x)\, \mathrm{d}x.
\end{align*}
Applying~\eqref{eq:conditional-density} and using the definition of Shannon entropy $H(X)$ for only $X$, we obtain 
\begin{align*}
H((X, Y)) &= H(X) +  \int_0^\infty \frac{1}{2} f_X(x) \,\log [2\pi e \sigma^2 x] \,\mathrm{d}x\\    & = H(X) +  \frac{1}{2} \log [2\pi e \sigma^2] + \int_0^\infty \frac{1}{2} f_X(x)\,\log x\, \mathrm{d}x.
\end{align*}
Here, $H(Y|X=x) = \frac{1}{2} \log [2\pi e \sigma^2 x]$ is the Shannon entropy of the (normal!) conditional distribution of $Y|X=x$, whose PDF is given in (\ref{kruk1}). This proves the result. 

\subsection{Proof of Theorem  \ref{bggl.mle.theo}}
\noindent First, observe that for any particular values of $\delta$ and $\upsilon$, the maximum value of $g(\delta,\mu, \upsilon)$ is obtained by that value of $\mu\in \mathbb R$ that maximizes the function 
\begin{equation}
\label{bgal.fun.v}
s(\mu) =  \frac{\mu}{\upsilon}\frac{1}{n} \sum_{i=1}^n (Y_i-\delta) - \frac{\mu^2}{2n\upsilon}\frac{1}{n}  \sum_{i=1}^n X_i.
\end{equation}
This is a simple quadratic function of $\mu$ with a negative leading term, and it is maximized by a unique value
\begin{equation}
\label{bgal.mle.mu}
\mu(\delta) = \frac{\overline{Y} - \delta}{\overline{X}}.
\end{equation}
Here, the $\overline{X}$ and $\overline{Y}$ are the sample means of the $\{X_i\}$ and $\{Y_i\}$, respectively. This leaves us with the problem of maximizing the function $r(\delta, \upsilon) = g(\delta, \mu(\delta), \upsilon)$, which, after some routine algebra, can be shown to be 
\begin{equation}
\label{bgal.fun.r}
r(\delta, \upsilon) =  - \frac{1}{2}\log \upsilon + \frac{1}{2\upsilon} \left\{\frac{(\overline{Y} - \delta)^2}{\overline{X}} - 
\frac{1}{n}  \sum_{i=1}^n \frac{(Y_i-\delta)^2}{X_i} \right\}, \quad  \delta\in \mathbb R, \upsilon \ge 0. 
\end{equation}
At this stage, we again consider $\upsilon>0$ to be fixed, and focus on the function of $\delta$ given by the curly bracket in \eqref{bgal.fun.r}, 
\begin{equation}
\label{eq:bgal.fun.w}
w(\delta) =  \frac{(\overline{Y} - \delta)^2}{\overline{X}} - 
\frac{1}{n}  \sum_{i=1}^n \frac{(Y_i-\delta)^2}{X_i}, \quad \delta\in \mathbb R. 
\end{equation}
Clearly, this is a quadratic function $\delta$, $w(\delta) = a\delta^2 + b\delta +c$, where 
\begin{equation}
\label{w.quad.fun}
a = \frac{1}{\overline{X}} - \frac{1}{n}\sum_{i=1}^n \frac{1}{X_i}, \,\,\, b = 2 \left( \frac{1}{n}\sum_{i=1}^n \frac{Y_i}{X_i} - \frac{\overline{Y}}{\overline{X}}\right), \,\,\, c = \frac{\overline{Y}^2}{\overline{X}} - \frac{1}{n}\sum_{i=1}^n \frac{Y_i^2}{X_i},
\end{equation}
unless the leading coefficient $a$ in (\ref{w.quad.fun}) is equal to zero. This situation occurs if all of the $\{X_i\}$ are the same. In the latter case, the coefficient $b$ in (\ref{w.quad.fun}) is also zero, and we have $w(\delta)=c$ for all $\delta\in \mathbb R$ with $c$ as in (\ref{w.quad.fun}). In this case the parameter $\delta$ is not uniquely estimable. However, this occurs with probability zero. 

In the rest of the proof, we shall assume that not all values of the $\{X_i\}$ are the same, so that $w(\delta)$ is truly a quadratic function. By Lemma~\ref{fun.w.lem}, we have $w(\delta)\leq 0$ for all $\delta\in \mathbb R$. Consequently, the function will attain its maximum value at the vertex, which can be calculated explicitly by setting the derivative of this function equal to zero. Routine algebra shows that this value of $\delta$ is equal to 
\begin{equation}
\label{bgal.mle.delta}
\hat{\delta}_n = \frac{\frac{1}{n}\sum_{i=1}^n\frac{Y_i}{X_i} - \frac{\overline{Y}}{\overline{X}}}{\frac{1}{n}\sum_{i=1}^n\frac{1}{X_i} - \frac{1}{\overline{X}}},
\end{equation}
which is the MLE of $\delta$. Finally, we substitute that MLE into the function $w$ in \eqref{eq:bgal.fun.w}, and conclude that the largest value of the function $r$ in \eqref{bgal.fun.r} when $\upsilon>0$ is held fixed is equal to 
\begin{equation}
\label{bgal.fun.u}
u(\upsilon) = r(\hat{\delta}_n, \upsilon) = - \frac{1}{2}\log \upsilon + \frac{1}{2\upsilon}w(\hat{\delta}_n), \quad  \upsilon \ge 0. 
\end{equation}
To conclude the estimation problem, it remains to maximize the function $u(\upsilon)$ defined above with respect to $\upsilon \ge 0$. We have now two possibilities: (i) either $w(\hat{\delta}_n)=0$ or (ii) $w(\hat{\delta}_n)<0$. If we have $w(\hat{\delta}_n)=0$ in (\ref{bgal.fun.u}), then the function $u(\upsilon) = - \log \upsilon/2$ is maximized by $\hat{\upsilon}_n=0$. If we have $w(\hat{\delta}_n)<0$ in (\ref{bgal.fun.u}), then it is easy to see that the $u(\upsilon)$ admits a finite maximum value, which is attained at a unique 

\begin{equation}
\label{bgal.mle.sigma}
\hat{\upsilon}_n = -w(\hat{\delta}_n) = 
\frac{1}{n}  \sum_{i=1}^n \frac{(Y_i-\hat{\delta}_n)^2}{X_i} - \frac{(\overline{Y} -\hat{\delta}_n)^2}{\overline{X}},
\end{equation}
where $\hat{\delta}_n$ is given by (\ref{bgal.mle.delta}). Case (i) is actually included in Case (ii): Under Case (i), the MLE of $\upsilon$ is still given by (\ref{bgal.mle.sigma}). In summary, the three MLEs always exit and are unique, and are given by (\ref{eq:delta-mu}) -- (\ref{eq:s2}). This concludes the proof. 

\subsection{Properties of Bessel functions} The content of this subsection is taken from \cite[Chapter 10]{Handbook}, and we refer to formulas there (see also \cite[Appendix]{Survey}). We use the modified Bessel function $K_{\nu} : (0, \infty) \to (0, \infty)$ of the second kind. For each $\nu \in \mathbb R$, consider the second-order linear homogeneous differential equation, given in (10.25.1):
$$
z^2f''(z) - zf'(z) - (z^2 + \nu^2)f(z) = 0.
$$
It has a unique solution which satisfies the asymptotics $f(z) \sim (\pi/2z)^{1/2}e^{-z}$ as $z \to +\infty$. This solution is denoted as $f(z) \equiv K_{\nu}(z)$. The formula (10.32.10) gives us
\begin{equation}
\label{eq:library}
\frac{2^{\nu+1}K_{\nu}(z)}{z^{\nu}} = \int_0^{\infty}\exp\left(-u-\frac{z^2}{4u}\right)\,\mathrm{d}u.
\end{equation}
The formula (10.27.3) gives us
\begin{equation}
\label{pro10}
K_\lambda(u) = K_{-\lambda}(u), 
\end{equation}
The formulas (10.30.2) and (10.30.3): if $\lambda$ is fixed and $x\rightarrow 0^+$, then 
\begin{equation}
\label{pro6}
K_{\lambda}(x) \sim \frac{1}{2}\Gamma(\lambda)(x/2)^{-\lambda} \,\,\, (\lambda>0), \,\,\, K_{0}(x) \sim -\ln x. 
\end{equation}
Finally, the formula (10.29.4) gives us 
\begin{equation}
\label{pro9}
[x^\lambda K_{\lambda}(x)]' = -x^\lambda K_{\lambda-1}(x). 
\end{equation}
\begin{lemma}
Define the function 
\begin{equation}
\label{eq:F-alpha}
F_{\alpha}(x) = x^{\alpha}K_{\alpha}(x),\quad 0 < \alpha \le 1,\quad x > 0.
\end{equation}

As $x \to 0^+$, we have: $ F_{\alpha}(x) \to F_\alpha(0^+) = \Gamma(\alpha)2^{\alpha - 1}$.  Moreover, as $x \to 0^+$, we have:
\begin{align} 
\label{eq:limits}
\begin{split}
\ln F_{\alpha}(x) - \ln F_{\alpha}(0^+) &\sim \gamma(x) := 
\begin{cases}
-\frac{\Gamma(1 - \alpha)}{\Gamma(1 + \alpha)}2^{- 2\alpha}x^{2\alpha},\, \alpha \in (0, 1);\\
\frac12x^2\ln x,\, \alpha = 1.
\end{cases}
\end{split}
\end{align}
\label{lemma:Bessel}
\end{lemma}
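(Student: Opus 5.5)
The plan is to write $F_\alpha$ as a gamma-type integral, read off the value at $0^+$, and then extract the leading behaviour of $F_\alpha(0^+) - F_\alpha(x)$ by a scaling substitution.

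\textbf{Integral representation.} We use the integral representation~\eqref{eq:library} in its full form from \cite[(10.32.10)]{Handbook}, namely
$$K_\nu(z) = \tfrac12 (z/2)^\nu \int_0^\infty t^{-\nu-1}\exp\bigl(-t - z^2/(4t)\bigr)\,\mathrm{d}t.$$
As printed, \eqref{eq:library} omits the factor $t^{-\nu-1}$, so we would quote the complete formula. Take $\nu=-\alpha$ and use the symmetry~\eqref{pro10}, $K_\alpha=K_{-\alpha}$. This gives
$$F_\alpha(x) = 2^{\alpha-1}\int_0^\infty t^{\alpha-1}e^{-t}e^{-x^2/(4t)}\,\mathrm{d}t.$$
By monotone convergence as $x\to0^+$, we get $F_\alpha(0^+)=2^{\alpha-1}\Gamma(\alpha)$; the same value also follows from~\eqref{pro6}.

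\textbf{Reduction to the gap.} Put $s=x^2/4$ and define
$$F_\alpha(0^+)-F_\alpha(x)=2^{\alpha-1}\int_0^\infty t^{\alpha-1}e^{-t}\bigl(1-e^{-s/t}\bigr)\,\mathrm{d}t=:2^{\alpha-1}R(s).$$
Since $R(s)\to0$, we have $\ln F_\alpha(x)-\ln F_\alpha(0^+)\sim -R(s)/\Gamma(\alpha)$. So it suffices to find the leading order of $R(s)$.

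\textbf{Case $\alpha\in(0,1)$.} Substitute $t=sv$:
$$R(s)=s^\alpha\int_0^\infty v^{\alpha-1}e^{-sv}\bigl(1-e^{-1/v}\bigr)\,\mathrm{d}v.$$
The integrand increases as $s\downarrow0$ to $v^{\alpha-1}(1-e^{-1/v})$. This limit is integrable: near $0$ it is bounded by $v^{\alpha-1}$, and at infinity it behaves like $v^{\alpha-2}$, which is integrable because $\alpha<1$. By monotone convergence, and substituting $w=1/v$ followed by an integration by parts,
$$R(s)\sim s^\alpha\int_0^\infty w^{-\alpha-1}\bigl(1-e^{-w}\bigr)\,\mathrm{d}w=s^\alpha\,\frac{\Gamma(1-\alpha)}{\alpha}.$$
Dividing by $\Gamma(\alpha)$ and using $\alpha\Gamma(\alpha)=\Gamma(1+\alpha)$ and $s^\alpha=2^{-2\alpha}x^{2\alpha}$ gives exactly $-\gamma(x)$, as claimed in~\eqref{eq:limits}.

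\textbf{Case $\alpha=1$.} This is the delicate step, because the limiting integral $\int^\infty v^{-1}\,\mathrm{d}v$ diverges logarithmically and the monotone-convergence argument no longer applies. Here $R(s)=\int_0^\infty e^{-t}(1-e^{-s/t})\,\mathrm{d}t$, and we split the range at $t=s$ and $t=1$.
\begin{itemize}
\item On $(0,s)$ the integrand is at most $1$, so this piece is $O(s)$.
\item On $(1,\infty)$ we use $1-e^{-s/t}\le s/t$, so this piece is $O(s)$.
\item On $(s,1)$ we have $s/t\le 1$, and therefore $1-e^{-s/t}=s/t+O(s^2/t^2)$ and $e^{-t}=1+O(t)$. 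This piece equals $s\ln(1/s)+O(s)$.
\end{itemize}
Hence $R(s)\sim s\ln(1/s)=\tfrac{x^2}{4}\cdot 2\ln(1/x)$. Since $\Gamma(1)=1$, we obtain $\ln F_1(x)-\ln F_1(0^+)\sim \tfrac12x^2\ln x$.

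The main obstacle is the bookkeeping in the $\alpha=1$ case, namely making sure every error term is $O(s)$ and therefore negligible against $s\ln(1/s)$.
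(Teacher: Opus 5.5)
Your proof is correct, and it takes a different route from the paper. You are also right that \eqref{eq:library} as printed is missing the factor $u^{-\nu-1}$.

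\textbf{The paper's route.} It never uses the integral representation in this lemma. It applies L'H\^opital's rule to $(\ln F_\alpha(x)-\ln F_\alpha(0^+))/\gamma(x)$ and computes $F_\alpha'(x)=-x^\alpha K_{\alpha-1}(x)=-x^\alpha K_{1-\alpha}(x)$ from \eqref{pro9} and \eqref{pro10}. It then reads off the leading behaviour of the derivative from the small-argument asymptotics \eqref{pro6} of $K_{1-\alpha}$ (for $\alpha<1$) and $K_0$ (for $\alpha=1$). This is shorter, but it relies on three separate Bessel-function facts. In passing, the paper writes $F_1'(x)=xK_0(x)$ where it should be $-xK_0(x)$; the slip does not change the conclusion.

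\textbf{Your route.} You need only the corrected formula (10.32.10). After that everything is elementary real analysis:
\begin{itemize}
\item monotone convergence gives the constant $F_\alpha(0^+)$ and the $s^\alpha\Gamma(1-\alpha)/\alpha$ asymptotics for $\alpha<1$;
\item an explicit three-piece split with $O(s)$ error bounds handles the logarithmically divergent case $\alpha=1$.
\end{itemize}

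\textbf{What your route buys.} It makes the probabilistic content explicit: $F_\alpha(x)/F_\alpha(0^+)=\mathbb E\bigl[e^{-x^2/(4T)}\bigr]$ with $T\sim\mathrm{GAM}(\alpha,1)$. Your estimate of $R(s)$ is therefore exactly the statement $1-\mathbb E[e^{-s/T}]\sim\frac{\Gamma(1-\alpha)}{\Gamma(1+\alpha)}s^\alpha$. That is what Lemma~\ref{ope.con.lem} needs, so the Bessel function could be bypassed there entirely.

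\textbf{One small imprecision.} You wrote $s\ln(1/s)=\frac{x^2}{4}\cdot 2\ln(1/x)$, but in fact $s\ln(1/s)=\frac{x^2}{4}\bigl(2\ln(1/x)+\ln 4\bigr)$. That equality should be an asymptotic equivalence $\sim$; the conclusion is unaffected.
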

\begin{proof} The first limit in Lemma~\ref{lemma:Bessel} follows immediately from~\eqref{pro6}. The limit in~\eqref{eq:limits} requires more work: Apply L'Hospital's rule and get that
\begin{align}
\label{eq:main}
\begin{split}
&\lim\limits_{x \to 0+}\frac{\ln F_{\alpha}(x) - \ln F_{\alpha}(0^+)}{\gamma(x)} = \lim\limits_{x \to 0+}\frac{(\ln F_{\alpha}(x))'}{\gamma'(x)} \\ & = \lim\limits_{x \to 0^+}\frac{F'_{\alpha}(x)}{F_{\alpha}(x)\gamma'(x)} = \frac1{F_{\alpha}(0+)}\lim\limits_{x \to 0^+}\frac{F'_{\alpha}(x)}{\gamma'(x)}.
\end{split}
\end{align}
Now,  let us consider first the case $\alpha = 1$. Then, by~\eqref{pro9}, $F'_1(x) = xK_0(x)$. Using~\eqref{pro6}, we get: $F'_1(x) \sim x\ln x$, and $\gamma'(x) = x\ln x + 0.5x \sim x\ln x$ as $x \to 0^+$. It follows that the right-hand side of~\eqref{eq:main} is equal to 1.  The case $\alpha \in (0, 1)$ is slightly harder. Applying~\eqref{pro10} and~\eqref{pro9}, we get: $F'_{\alpha}(x) = -x^{\alpha}K_{\alpha - 1}(x) = -x^{\alpha}K_{1 - \alpha}(x)$. It follows from~\eqref{pro6} that 
\begin{equation}
\label{eq:F'}
F'_{\alpha}(x) \sim -\frac12\Gamma(1 - \alpha)x^{2\alpha - 1}2^{1 - \alpha},\quad x \to 0^+.
\end{equation}
Next, from the definition of the function $\gamma$, we get:
\begin{equation}
\label{eq:gamma'}
\gamma'(x) = -\frac{\Gamma(1 - \alpha)}{\Gamma(1 + \alpha)}2^{- 2\alpha}\cdot 2\alpha x^{2\alpha - 1}.
\end{equation}
Combining~\eqref{eq:F'} and~\eqref{eq:gamma'}, we get that the right-hand side of~\eqref{eq:main} equals 1 as well.   
\end{proof}

\subsection{Proof of Theorem~\ref{thm:<1}} We only need to prove the convergence for $(\hat{\delta}_n, \hat{\mu}_n)$, since these are asymptotically independent of $(\hat{\alpha}_n, \hat{\beta}_n)$ (follows from Theorem~\ref{thm:unbiased}), and all these are asymptotically independent of $\hat{\upsilon}_n$. The asymptotic result for $\hat{\upsilon}_n$ follows from the fact that $\chi^2_{n-2}$ is the sum of squares of $n-2$ independent identically distributed standard normal random variables, and from the standard probability arguments. The results of Lemma~\ref{lemma:Bessel} imply the convergence, with IID standard normal $Z_1, Z_2 \sim \mathcal N(0, 1)$:
\begin{equation}
\label{eq:fast}
n^{1/2\alpha}(\hat{\delta}_n - \delta) \stackrel{d}{\rightarrow} \sigma  \beta^{-1/2}(\Gamma(\alpha+1))^{1/2\alpha}\cdot \xi_{\alpha}^{-1/2} Z_1,
\end{equation}
\begin{equation}
\label{eq:slow}
\sqrt{n}(\hat{\mu}_n - \mu) \stackrel{d}{\rightarrow} \mathcal \sigma \sqrt{\beta/\alpha} \cdot Z_2.
\end{equation}
To establish the joint convergence in (\ref{stef1}), we proceed as follows. As noted in Theorem~\ref{thm:unbiased}, the conditional distribution of $(\hat{\delta}_n, \hat{\mu}_n)$ (conditional on $X_1, \ldots, X_n$) is bivariate Gaussian:
$$
\begin{bmatrix}\hat{\delta} \\ \hat{\mu}\end{bmatrix} \sim \mathcal N_2\left(\begin{bmatrix}\delta \\ \mu\end{bmatrix}, \sigma^2 \begin{bmatrix} \sum_{i=1}^n X_i^{-1} & n\\ n & \sum_{i=1}^n X_i\end{bmatrix}^{-1}\right).
$$
Consequently, the conditional distribution of 
\begin{equation}
\label{stef2}
\begin{bmatrix} n^{\frac{1}{2\alpha}} & 0\\ 0 & n^{\frac{1}{2}}\end{bmatrix} \left( \begin{bmatrix}\hat{\delta}_n \\ \hat{\mu}_n\end{bmatrix} -  \begin{bmatrix}\delta \\ \mu \end{bmatrix}  \right)
\end{equation}
is also bivariate normal, with vector mean zero and covariance matrix given by 
\begin{equation}
\label{stef3}
\sigma^2 \begin{bmatrix} n^{\frac{1}{2\alpha}} & 0\\ 0 & n^{\frac{1}{2}}\end{bmatrix} \begin{bmatrix} \sum_{i=1}^n X_i^{-1} & n\\ n & \sum_{i=1}^n X_i\end{bmatrix}^{-1} \begin{bmatrix} n^{\frac{1}{2\alpha}} & 0\\ 0 & n^{\frac{1}{2}}\end{bmatrix}.
\end{equation}
Straightforward matrix multiplication shows that the covariance matrix in (\ref{stef3}) reduces to 
\begin{equation}
\label{stef4}
\sigma^2  \begin{bmatrix} n^{-1/\alpha }\sum_{i=1}^n X_i^{-1} & n ^{-\frac{1-\alpha}{2\alpha}}\\ n ^{-\frac{1-\alpha}{2\alpha}} & n^{-1} \sum_{i=1}^n X_i\end{bmatrix}^{-1}.
\end{equation}
The above discussion shows that we have the following stochastic representation:
\begin{equation}
\label{stef5}
\begin{bmatrix} n^{\frac{1}{2\alpha}} & 0\\ 0 & n^{\frac{1}{2}}\end{bmatrix} \left( \begin{bmatrix}\hat{\delta}_n \\ \hat{\mu}_n\end{bmatrix} -  \begin{bmatrix}\delta \\ \mu \end{bmatrix}  \right)
\stackrel{d}{=} \sigma \begin{bmatrix} n^{-1/\alpha }\sum_{i=1}^n X_i^{-1} & n ^{-\frac{1-\alpha}{2\alpha}}\\ n ^{-\frac{1-\alpha}{2\alpha}} & n^{-1} \sum_{i=1}^n X_i\end{bmatrix}^{-1/2} \begin{bmatrix}Z_1 \\ Z_2 \end{bmatrix}, 
\end{equation}
where the random matrix 
\begin{equation}
\label{stef6}
\begin{bmatrix} n^{-1/\alpha }\sum_{i=1}^N X_i^{-1} & n ^{-\frac{1-\alpha}{2\alpha}}\\ n ^{-\frac{1-\alpha}{2\alpha}} & n^{-1} \sum_{i=1}^n X_i\end{bmatrix}
\end{equation}
is independent of ${\bf Z} = (Z_1, Z_2) \sim N_2(\boldsymbol 0, {\bf I}_2)$. Thus, the convergence in (\ref{stef1}) will hold if we prove that 
\begin{equation}
\label{stef7}
\begin{bmatrix} n^{-1/\alpha }\sum_{i=1}^n X_i^{-1} & n ^{-\frac{1-\alpha}{2\alpha}}\\ n ^{-\frac{1-\alpha}{2\alpha}} & n^{-1} \sum_{i=1}^n X_i\end{bmatrix} 
\stackrel{d}{\rightarrow} 
\begin{bmatrix} \beta [\Gamma(\alpha+1)]^{-1/\alpha} \xi_\alpha & 0\\ 0 & \alpha/\beta \end{bmatrix}
\end{equation}
and apply the power $-1/2$ to the limiting matrix in~\eqref{stef7}. Since the off-diagonal elements of the matrix in (\ref{stef6}) are deterministic sequences converging to zero (as $0<\alpha<1$), the convergence in (\ref{stef7}) follows from the following lemma.
\begin{lemma}
\label{ope.con.lem}
If $\{X_i\}$ are IID random variables where $X_i\sim \mathrm{GAM}(\alpha, \beta)$, $i\in \mathbb N$, then 
\begin{equation}
\label{stef8}
\begin{bmatrix} n^{-\frac{1}{\alpha}} & 0\\ 0 & n^{-1}\end{bmatrix} \begin{bmatrix}\sum_{i=1}^n X_i ^{-1} \\  \sum_{i=1}^n X_i \end{bmatrix} 
\stackrel{d}{\rightarrow} 
\begin{bmatrix} \beta [\Gamma(\alpha+1)]^{-1/\alpha} \xi_\alpha \\   \alpha/\beta  \end{bmatrix}.
\end{equation}
\end{lemma}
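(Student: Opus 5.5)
The second coordinate is the easy one, so I would dispose of it first and then concentrate on the first. By the strong law of large numbers, $n^{-1}\sum_{i=1}^n X_i \to \mathbb E[X_1] = \alpha/\beta$ almost surely, and the limit is a constant. Joint convergence in distribution of the vector in \eqref{stef8} therefore follows, by Slutsky's lemma, once we show convergence of the first coordinate alone:
$$
n^{-1/\alpha}\sum_{i=1}^n X_i^{-1} \stackrel{d}{\to} \beta[\Gamma(\alpha+1)]^{-1/\alpha}\xi_\alpha .
$$

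For the first coordinate I would use Laplace transforms and the continuity theorem for Laplace transforms of nonnegative random variables. Fix $u \ge 0$ and put $s = u n^{-1/\alpha}$. By independence,
$$
\mathbb E\exp\Bigl(-u n^{-1/\alpha}\sum_{i=1}^n X_i^{-1}\Bigr) = \bigl(\mathbb E e^{-s/X_1}\bigr)^n .
$$
The single-variable transform is explicit:
$$
\mathbb E e^{-s/X} = \frac{\beta^\alpha}{\Gamma(\alpha)}\int_0^\infty x^{\alpha-1}e^{-\beta x - s/x}\,\mathrm dx .
$$
Substituting $x = v/\beta$ and comparing with the integral representation \eqref{eq:library} (after inverting $v$) should give
$$
\mathbb E e^{-s/X} = \frac{2(s\beta)^{\alpha/2}K_\alpha\bigl(2\sqrt{s\beta}\bigr)}{\Gamma(\alpha)} .
$$
With $y = 2\sqrt{s\beta}$, this equals $2^{1-\alpha}F_\alpha(y)/\Gamma(\alpha) = F_\alpha(y)/F_\alpha(0^+)$, in the notation of Lemma~\ref{lemma:Bessel}. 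This normalization is consistent, since the transform tends to $1$ as $s \to 0$.

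Next, Lemma~\ref{lemma:Bessel} with $\alpha \in (0,1)$ gives, as $s \to 0$,
$$
\ln \mathbb E e^{-s/X} = \ln F_\alpha(y) - \ln F_\alpha(0^+) \sim -\frac{\Gamma(1-\alpha)}{\Gamma(1+\alpha)}\,2^{-2\alpha}y^{2\alpha} = -\frac{\Gamma(1-\alpha)}{\Gamma(1+\alpha)}(s\beta)^\alpha .
$$
Since $s^\alpha = u^\alpha/n$, multiplying by $n$ yields
$$
n\ln \mathbb E e^{-s/X} \to -\frac{\Gamma(1-\alpha)\beta^\alpha u^\alpha}{\Gamma(1+\alpha)} .
$$
Hence the Laplace transform of $n^{-1/\alpha}\sum_{i=1}^n X_i^{-1}$ converges to $\exp\bigl[-\Gamma(1-\alpha)\beta^\alpha u^\alpha/\Gamma(1+\alpha)\bigr]$.

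Finally I would identify this limit with the target. By \eqref{stable.sub.lt}, the Laplace transform of $c\,\xi_\alpha$ is $\exp[-\Gamma(1-\alpha)(cu)^\alpha]$. Taking $c = \beta[\Gamma(\alpha+1)]^{-1/\alpha}$ gives exactly the limit above. This function is continuous at $u = 0$, so the continuity theorem yields the claimed convergence.

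The main obstacle is the Bessel identification: getting the substitution and the constants right so that the transform is exactly $F_\alpha(y)/F_\alpha(0^+)$. Once that is in place, Lemma~\ref{lemma:Bessel} does the analytic work. A fallback that avoids Bessel functions is the tail computation $P(X^{-1} > t) = P(X < 1/t) \sim \beta^\alpha t^{-\alpha}/\Gamma(\alpha+1)$, combined with the classical domain-of-attraction theorem for positive $\alpha$-stable laws with $\alpha < 1$, where no centering is needed. That route yields the same constant.
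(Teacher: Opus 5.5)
Your proposal is correct and follows the paper's proof essentially step for step. The second coordinate is handled by the law of large numbers plus Slutsky. For the first, you use the Bessel representation $\mathbb E e^{-s/X} = F_\alpha(2\sqrt{s\beta})/F_\alpha(0^+)$ together with Lemma~\ref{lemma:Bessel} to get convergence of the Laplace transforms to $\exp[-\Gamma(1-\alpha)\beta^\alpha u^\alpha/\Gamma(1+\alpha)]$, and all your constants check out. One caveat: matching with \eqref{eq:library} requires the weight $u^{-\nu-1}$ from (10.32.10), which the paper's display omits but your formula for $\mathbb E e^{-s/X}$ correctly reflects. Your tail-based fallback via the domain of attraction of positive $\alpha$-stable laws is also valid and gives the same constant, though the paper does not use it.
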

\begin{proof} The convergence of the second component follows from Law of Large Numbers. From the Slutsky theorem, to establish~\eqref{stef8}, it is only sufficient to show
\begin{equation}
\label{eq:1d}
n^{-1/\alpha}\sum_{i=1}^n X_i ^{-1} \stackrel{d}{\to} \beta\left[\Gamma(\alpha + 1)\right]^{-1/\alpha}\xi_{\alpha}.
\end{equation}
In turn, to show~\eqref{eq:1d}, we shall show that $\phi_n(t) \rightarrow \phi(t)$ for $t \ge 0$, where $\phi_n(t)$ is the Laplace transform of the random sequence on the left-hand side of~\eqref{eq:1d} and 
\begin{equation}
\label{stef10}
\phi(t) = \mathbb E \exp\left[- t  \beta [\Gamma(\alpha+1)]^{-1/\alpha} \xi_\alpha\right] = \exp\left[ - \frac{\Gamma(1-\alpha)}{\Gamma(1+\alpha)} \beta^\alpha t ^\alpha \right] , \quad t \ge 0,
\end{equation}
is the Laplace transform of the random vector on the right-hand side of~\eqref{eq:1d}. First, note 
\begin{equation}
\label{stef11}
\phi_n(t) = \mathbb E \exp\left[-t n^{-\frac{1}{\alpha}}\sum_{i=1}^n X_i ^{-1} \right] = \left[\psi\left(n^{-\frac{1}{\alpha}}t\right)\right]^n,
\end{equation}
where $\psi(t) := \mathbb E \exp\left[- t /X\right]$ for $t \ge 0$ is the Laplace transform of the random variable $1/X$ with $X\sim \mathrm{GAM}(\alpha, \beta)$. Straightforward algebra shows that 
\begin{equation}
\label{stef13}
\psi(t) = \frac{\beta^\alpha}{\Gamma(\alpha)} \int_0^\infty x^{\alpha-1} e^{-\beta x - t/x}dx. 
\end{equation}
To evaluate this integral, we use the modified Bessel function of the second kind. We change the scale in~\eqref{stef13} inside the integral to make it look like the one in~\eqref{eq:library}: $u = \beta x$ and $\mathrm{d}u = \beta\,\mathrm{d}x$ and 
$$
\int_0^{\infty}x^{\alpha-1}e^{-\beta x - t/x}\,\mathrm{d}x = \int_0^{\infty}\left(\frac{u}{\beta}\right)^{\alpha - 1}e^{-u - t\beta/u}\,\frac{\mathrm{d}u}{\beta} = \beta^{-\alpha}\int_0^{\infty}u^{\alpha-1}e^{-u-t\beta/u}\,\mathrm{d}u.
$$
This matches the integral in~\eqref{eq:library} with $z^2/4 = t\beta \Leftrightarrow z = 2\sqrt{t\beta}$ and $\alpha = -\nu$. Then 
\begin{equation}
\label{eq:psi-t}
\psi(t) = \frac{\beta^{\alpha}}{\Gamma(\alpha)}\cdot \beta^{-\alpha}\int_0^{\infty}u^{\alpha-1}e^{-u-t\beta/u}\,\mathrm{d}u = \frac1{\Gamma(\alpha)}\int_0^{\infty}u^{\alpha-1}e^{-u-t\beta/u}\,\mathrm{d}u.
\end{equation}
Plug~\eqref{eq:library} into~\eqref{eq:psi-t} and get:
\begin{equation}
\label{eq:psi-final}
\psi(t) = \frac1{\Gamma(\alpha)}\frac{2^{-\alpha+1}K_{-\alpha}(2\sqrt{t\beta})}{(2\sqrt{t\beta})^{-\alpha}} = \frac{2(t\beta)^{\alpha/2}}{\Gamma(\alpha)}K_{-\alpha}(2\sqrt{t\beta}).
\end{equation}
By~\eqref{pro10}, we rewrite~\eqref{eq:psi-final} as 
\begin{equation}
\label{eq:psi-all}
\psi(t) = \frac{2(t\beta)^{\alpha/2}}{\Gamma(\alpha)}K_{\alpha}(2\sqrt{t\beta}).
\end{equation}
In turn, after plugging~\eqref{eq:F-alpha}, and using the first convergence in Lemma~\ref{lemma:Bessel}, the formula~\eqref{eq:psi-all} becomes
\begin{equation}
\label{eq:PSI}
\psi(t) = \frac{F_{\alpha}(2\sqrt{t\beta})}{\Gamma(\alpha)2^{\alpha-1}} \Rightarrow \ln\psi(t) = \ln F_{\alpha}(2\sqrt{t\beta}) - \ln F_{\alpha}(0+).
\end{equation}
Define the new variable 
\begin{equation}
\label{eq:h-n}
h_n := 2n^{-1/(2\alpha)}(\beta t)^{1/2} \Leftrightarrow n = \frac{2^{2\alpha}(t\beta)^{\alpha}}{h_n^{2\alpha}}.
\end{equation}
By~\eqref{eq:h-n},~\eqref{eq:PSI} and some algebra, we rewrite \eqref{stef11} as
\begin{equation}
\label{eq:psi-n}
\ln \phi_n(t) = n\ln \psi\left(n^{-1/\alpha}t\right) = \frac{\ln(F_{\alpha}(h_n)) - \ln F_{\alpha}(0+))}{h_n^{2\alpha}}\cdot 2^{2\alpha}t^{\alpha}\beta^{\alpha}.
\end{equation}
Now, $h_n \to 0+$ as $n \to \infty$. Applying Lemma~\ref{lemma:Bessel}, we have:
\begin{equation}
\label{eq:limit-v}
\ln\phi_n(t) \sim -\frac{\Gamma(1 - \alpha)}{\Gamma(1 + \alpha)}2^{- 2\alpha}\cdot 2^{2\alpha}t^{\alpha}\beta^{\alpha} = -\frac{\Gamma(1 - \alpha)}{\Gamma(1 + \alpha)}\beta^{\alpha}t^{\alpha}.
\end{equation}
Plugging this into~\eqref{eq:limit-v}, we get ~\eqref{stef10}. This concludes the proof. 
\end{proof}
\subsection{Proof of Theorem~\ref{thm:=1}} 
We closely follow the proof of Theorem~\ref{thm:<1}. The same remark at the beginning of the proof of Theorem~\ref{thm:=1} holds here. The two other equations~\eqref{stef2} and~\eqref{stef3} get modified as follows: The conditional distribution given $X_1, \ldots, X_n$  of 
$$
\begin{bmatrix}  (n\ln n)^{\frac12} & 0\\ 0 & n^{\frac{1}{2}}\end{bmatrix} \left( \begin{bmatrix}\hat{\delta}_n \\ \hat{\mu}_n\end{bmatrix} -  \begin{bmatrix}\delta \\ \mu \end{bmatrix}  \right)
$$
is also bivariate normal, with vector mean zero and covariance matrix given by 
$$
\sigma^2 \begin{bmatrix}  (n\ln n)^{\frac12} & 0\\ 0 & n^{\frac{1}{2}}\end{bmatrix} \begin{bmatrix} \sum_{i=1}^n X_i^{-1} & n\\ n & \sum_{i=1}^n X_i\end{bmatrix}^{-1} \begin{bmatrix}  (n\ln n)^{\frac12} & 0\\ 0 & n^{\frac{1}{2}}\end{bmatrix}.
$$
We can compute this covariance matrix using simple matrix arithmetic as
$$
\sigma^2\begin{bmatrix} 
(n\ln n)^{-1}\sum_{i=1}^nX_i^{-1} & (\ln n)^{-\frac12} \\ 
(\ln n)^{-\frac12} & n^{-1}\sum_{i=1}^nX_i
\end{bmatrix}.
$$
Thus it is sufficient to prove the following lemma. 
\begin{lemma}
\label{ope.con.lem-alpha1}
If $\{X_i\}$ are IID random variables where $X_i\sim \mathrm{GAM}(\alpha = 1, \beta)$, $i\in \mathbb N$, then as $n \to \infty$, we have convergence in distribution:
\begin{equation}
\label{stef8:a1}
\begin{bmatrix} (n\ln n)^{-1} & 0\\ 0 & n^{-1}\end{bmatrix} \begin{bmatrix}\sum_{i=1}^n X_i ^{-1} \\  \sum_{i=1}^n X_i \end{bmatrix} 
\stackrel{d}{\rightarrow} 
 \begin{bmatrix} \beta \\   \alpha/\beta  \end{bmatrix}.
\end{equation}
\end{lemma}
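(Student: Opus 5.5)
The second component is the strong law of large numbers: $n^{-1}\sum X_i \to \mathbb E X = \alpha/\beta = 1/\beta$ almost surely. Since that limit and the claimed limit $\beta$ of the first component are both constants, joint convergence follows from convergence in probability of each coordinate. So the whole task is to show
$$
\frac{1}{n\ln n}\sum_{i=1}^n X_i^{-1} \stackrel{P}{\to} \beta .
$$
I will do this with Laplace transforms, exactly parallel to the proof of Lemma~\ref{ope.con.lem}. The target is $\phi_n(t)\to e^{-\beta t}$ for every $t\ge0$, where
$$
\phi_n(t) = \bigl[\psi\bigl(t/(n\ln n)\bigr)\bigr]^n .
$$
Convergence of Laplace transforms to that of the point mass at $\beta$ gives convergence in distribution, hence in probability, to the constant $\beta$.

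By \eqref{eq:PSI} with $\alpha=1$, we have $\ln\psi(s) = \ln F_1(2\sqrt{s\beta}) - \ln F_1(0+)$. Put $h_n = 2\sqrt{t\beta/(n\ln n)}\to 0^+$. Lemma~\ref{lemma:Bessel} in the case $\alpha=1$ gives
$$
\ln\psi \sim \tfrac12 h_n^2\ln h_n .
$$
Therefore
$$
\ln\phi_n(t) \sim n\cdot\frac12\cdot\frac{4t\beta}{n\ln n}\cdot\ln h_n = \frac{2t\beta\,\ln h_n}{\ln n}.
$$
Now
$$
\ln h_n = \ln 2 + \tfrac12\ln(t\beta) - \tfrac12\ln n - \tfrac12\ln\ln n = -\tfrac12\ln n\,(1+o(1)).
$$
Hence $\ln\phi_n(t)\to -t\beta$, as required. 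The case $t=0$ is trivial.

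The main point to check is that the asymptotic equivalence in Lemma~\ref{lemma:Bessel} can be used as a genuine ratio statement. It can: $\ln\psi(s_n)/\gamma(h_n)\to1$, and multiplying by $n\gamma(h_n)$, which has the finite nonzero limit $-t\beta$, yields the limit of $n\ln\psi(s_n)$.

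Alternatively, and as a sanity check, one can argue directly. $P(1/X>x)=1-e^{-\beta/x}\sim\beta/x$, so $1/X$ is in the domain of attraction of the $1$-stable law with truncated mean $\mathbb E[X^{-1}\mathbf 1\{X^{-1}\le n\ln n\}]\sim\beta\ln n$. The weak law for triangular arrays (truncation at $n\ln n$) then gives the same limit $\beta$. I would present the Laplace-transform argument, since it reuses the lemma already proved.
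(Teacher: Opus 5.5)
Your proposal is correct and follows essentially the same route as the paper: you reduce to the first coordinate, write $\ln\phi_n(t) = n\ln F_1\bigl(2\sqrt{t\beta/(n\ln n)}\bigr)$ via \eqref{eq:PSI}, and apply Lemma~\ref{lemma:Bessel} with $\alpha=1$ to get $\ln\phi_n(t)\to -t\beta$. You also make explicit two points the paper passes over quickly: the $-\tfrac12\ln\ln n$ term in $\ln h_n$, and the fact that the Bessel asymptotic is used as a ratio limit. You also correctly note that the second limit is $1/\beta$ when $\alpha=1$.
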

\begin{proof} We closely follow the proof of Lemma~\ref{ope.con.lem}. We need only to show
\begin{equation}
\label{eq:conv-alpha-1}
\frac1{n\ln n}\sum\limits_{k=1}^nX_i^{-1} \stackrel{d}{\to} \beta.
\end{equation}
To show~\eqref{eq:conv-alpha-1}, we prove convergence of Laplace transforms. Take $\alpha = 1$ in~\eqref{eq:PSI}. We get $F_1(0+) = 1$. And replace $n^{1/\alpha}$ with $n\ln n$ in~\eqref{stef11}. Then the Laplace transform $\phi_n$ of the left-hand side in~\eqref{eq:conv-alpha-1} has logarithm
\begin{equation}
\label{eq:psi-n-1}
\ln \phi_n(t) = n\ln\psi\left(t/n\ln n\right) = n\ln F_1\left(2\sqrt{t\beta/n\ln n}\right).
\end{equation}
Now, apply Lemma~\ref{lemma:Bessel} for the case $\alpha = 1$: The right-hand side of~\eqref{eq:psi-n-1} is equivalent asymptotically to 
$$
n\cdot \frac12\left(2\sqrt{\frac{t\beta}{n\ln n}}\right)^2\cdot \ln\left(2\sqrt{\frac{t\beta}{n\ln n}}\right) = n\cdot 2\frac{t\beta}{n\ln n}\cdot \left(-\frac12\ln n\right) = -t\beta. 
$$
This proves that $\phi_n(t) \to e^{-t\beta}$ as $n \to \infty$, and the right-hand side is the Laplace transform of the constant random variable $\beta$. This completes the proof of~\eqref{eq:conv-alpha-1}.
\end{proof}
\subsection{The MLE when all $X_i$ have the same value $X$}
Indeed, in this case the function $g(\delta, \mu, \upsilon)$ in~\eqref{eq:bgal.ll.fun} simplifies to 
\begin{equation}
\label{fun.g.spe}
g(\delta, \mu, \upsilon) = -\frac{1}{2} \log \upsilon +\frac{\mu}{\upsilon} (\overline{Y}-\delta) - \frac{1}{2\upsilon X n} \sum_{i=1}^n (Y_i-\overline{Y})^2 - \frac{1}{2\upsilon X}  (\overline{Y}-\delta)^2 - \frac{\mu^2}{2\upsilon} X. 
\end{equation}
Elementary analysis of the above function shows that it attains its maximum value of $g(\hat{\delta}_n, \hat{\mu}_n, \hat{\upsilon}_n) = - (\log \hat{\upsilon}_n +1)/2$ 
where $\hat{\delta}_n$ is any real number, $\hat{\mu}_n$ is as in (\ref{eq:delta-mu}), and 
\begin{equation}
\label{fun.g.spe.v_n}
\hat{\upsilon}_n =  \frac{1}{Xn} \sum_{i=1}^n (Y_i-\overline{Y})^2.
\end{equation}
Thus,  in this case the parameter $\delta$ is not uniquely estimable via MLE, while estimators of $\mu$ and $\upsilon$ are the same as those specified by Theorem \ref{bggl.mle.theo}, since the value of $\hat{\upsilon}_n$ in~\eqref{fun.g.spe.v_n} is exactly the same as that in~\eqref{eq:s2} assuming that all the values of $\{X_i\}$ are equal to the same $X>0$. In particular, this case arises when the sample size is $n=1$, where we additionally have $\hat{\upsilon}_n = 0$ and $g(\hat{\delta}_n, \hat{\mu}_n, \hat{\upsilon}_n) = \infty$.  Then the MLEs of $\delta$, $\mu$ and $\upsilon$ are not unique. Indeed, in this case by taking an arbitrary $\delta \neq Y$ and setting $\mu=(Y-\delta)/X$, the function $g(\delta, \mu,\upsilon)$ reduces to $-\log \upsilon/2$, with the latter approaching infinity as $\upsilon \rightarrow 0^+$.  

In addition, for $n=2$ we have $\sup_{\delta\in \mathbb R} w(\delta) = w(\hat{\delta}_n)=0$ 
, as can be verified directly. Thus, with a sample of two, the MLE of $\sigma$ is always equal to zero, pointing towards a degenerate distribution. On the other hand, when $n\geq 3$, then $\sup_{\delta\in \mathbb R} w(\delta) = w(\hat{\delta}_n)<0$ and consequently 
$\hat{\sigma}_n>0$, unless either all the $Y_i$ are equal or we have the relation 
\begin{equation}
\label{kat3}
\hat{\delta}_n = \frac{Y_iX_j - X_iY_j}{X_j-X_i}
\end{equation}
for all $i\neq j$, as can be deduced from general conditions for the equality in Cauchy-Schwartz inequality. However, this occurs with probability zero, and thus is not of a concern.

\end{document}